\documentclass[aps,prx,twocolumn,nofootinbib,superscriptaddress,longbibliography]{revtex4-2}

\usepackage{amsmath,amssymb,amsfonts,amsthm,mathtools}
\usepackage{graphicx}
\usepackage{booktabs}
\usepackage[colorlinks=true,citecolor=blue,linkcolor=blue,urlcolor=blue]{hyperref}
\usepackage{physics}
\usepackage{bm}
\newcommand{\cH}{\mathcal H}
\newcommand{\cE}{\mathcal E}

\newcommand{\Span}{\operatorname{span}}

\newcommand{\Part}{\operatorname{Part}}
\newcommand{\eqp}{\operatorname{eq}}

\usepackage{color}

\newcommand{\calE}{{\cal E}}

\newcommand{\calM}{{\cal M}}

\newtheorem{theorem}{Theorem}

\newtheorem{lemma}{Lemma}
\newtheorem{corollary}{Corollary}

\begin{document}

\title{Hypothesis testing between quantum ensembles}

\author{Jian Yao}
\affiliation{Ming Hsieh Department of Electrical and Computer Engineering, University of Southern California, Los Angeles, California 90089, USA}

\author{Quntao Zhuang}
\email{qzhuang@usc.edu}
\affiliation{Ming Hsieh Department of Electrical and Computer Engineering, University of Southern California, Los Angeles, California 90089, USA}
\affiliation{Department of Physics and Astronomy, University of Southern California, Los Angeles, California 90089, USA}


\begin{abstract}
Quantum state ensembles are important in quantum information processing. For example, quantum $t$-designs model highly entangled states in complex systems, while projected ensembles appear in generative quantum machine learning and studies of thermalization. With their sample state accompanied by a classical label, these ensembles contain operational information beyond their average density operators. Yet an ensemble differs from a classical-quantum state because it is invariant under permutations of labels.
We formulate binary hypothesis testing between finite quantum ensembles and derive fundamental limits on error probability. Given an observed label pattern, we show that the joint sampled state can be described by power-weighted ensemble moments. This yields the Bayes-optimal measurement and exact finite-sample error, revealing that discrimination is governed by the full moment hierarchy up to the number of samples. In the many-sample limit, we derive Chernoff bounds and obtain exact error exponents for finite uniform pure-state ensembles. We apply these results to optical communication and $t$-designs. For finite uniform pure-state $t$-designs with large $t$, the maximal discrimination exponent scales sharply as $\sim t^{-2}$, while equal-prior fixed-error testing requires $\sim t^2$ samples.


\end{abstract}

\maketitle

Ensembles of quantum states are central objects in quantum information processing. While a single quantum state represents the description of an individual quantum system, an ensemble of states provides a statistical or Bayesian description when the system parameters are only partially specified. For example, in the study of chaotic many-body quantum systems, one often adopts a typical state drawn from the Haar-random ensemble or a $t$-design~\cite{ambainis2007quantum,roberts2017chaos}, subject to conservation laws~\cite{goldstein2006canonical,murciano2022symmetry}, and uses this ensemble description to derive universal properties of entanglement, purity, and spectral statistics~\cite{lubkin1978entropy,page1993average,foong1994proof,popescu2006entanglement,goldstein2006canonical}. Moreover, projected ensembles---ensembles of states generated by measuring a subsystem are crucial in the study of deep thermalization~\cite{ho2022exact,ippoliti2022solvable,cotler2023emergent,ippoliti2023dynamical,choi2023preparing,zhang2025holographic} and quantum generative models~\cite{zhang2024generative,kwun2025mixed, mo2026measurementinducedoverconcentrationquantumgenerative}.
In optical communication, the encoding quantum signal constellation is also specified by a quantum ensemble~\cite{helstrom1976quantum,holevo2012quantum}.

In these scenarios, it is important to distinguish an ensemble of quantum states from its average quantum state. For instance, the average of Haar-random pure states is a trivial maximally mixed state. 
To go beyond the average state, one must adopt a classical label for each state from the ensemble. Indeed, in a complex quantum dynamics, such labels may be the Hamiltonian parameters, circuit parameters or random seed. In projected and measurement-induced ensemble, the classical label is the measurement outcome.
However, a quantum state ensemble is also different from a classical-quantum state in which the precise label-state correspondence is part of the physical description. For example, a $t$-design remains the same $t$-design after an arbitrary shuffling of the labels. Thus, an ensemble is an object with more structure than its average state, but less label structure than a classical-quantum state. 

Throughout this work, we adopt this label-invariant viewpoint: a finite quantum ensemble is an unordered collection of weighted quantum states, and two descriptions that differ only by relabeling represent the same ensemble.
This label-invariant viewpoint also connects to classical distribution testing and statistical learning. When data are relabeled, the data set itself remains unchanged, and label-invariant quantities are known as symmetric properties~\cite{valiant2011power,valiant2017estimating,valiant2008testing,pmlr-v70-acharya17a,NEURIPS2019_f19fec2f}.

Although quantum ensembles are widely used throughout quantum information theory, the fundamental theory of estimating, learning, and distinguishing them remains less developed. In this work, we study hypothesis testing between quantum ensembles under label-invariant access and establish fundamental limits on the resources required to distinguish them. We show that the state description of an ensemble reduces to moments, derive the minimum Bayes error, and provide bounds on the asymptotic error exponent. We provide analytical examples, including the hypothesis testing between state designs. In particular, for fixed $d$, the maximal Chernoff exponent over pairs of finite uniform pure-state $t$-designs scales sharply as $\Theta_d(t^{-2})$, while under equal priors, reducing the Bayes error below any fixed $\epsilon\in(0,1/2)$ requires $\Omega_{d,\epsilon}(t^2)$ samples as $t\to\infty$. 

 {\em Quantum ensembles and the state description.---}
A finite quantum state ensemble is an unordered collection \(\cE=\{(p_i,\rho_i)\}_{i=1}^N\), where \(p_i>0\), \(\sum_{i=1}^Np_i=1\), and the $\rho_i$ are pairwise distinct density operators on a Hilbert space $\cH$, i.e., $\rho_i\neq\rho_j$ for all $i\neq j$. Since the component labels have no intrinsic meaning, descriptions related by a permutation represent the same ensemble, i.e., \(\{(p_i,\rho_i)\}_{i=1}^N=\{(p_{\pi(i)},\rho_{\pi(i)})\}_{i=1}^N\) for every \(\pi\in S_N\).

 Below, we establish the quantum state description of the sampling process. Since a sample without any label is described by the average state $\sum_i p_i \rho_i$, we assume a single sampling process returns a state $\rho_l$ and its label $\lambda_l$, while the intrinsic label $\ell$ specifying the state is unknown. 
 Repeated occurrences of the same label identify repeated draws of the same state, but the correspondence between visible labels and the intrinsic state indices is unknown and carries no physical significance---only the equality pattern of the observed labels is informative.
 
 Consider an $N=2$ ensemble as an example, when one obtains a sample $(1,\sigma_1)$, it is unclear if one obtained $\sigma_1=\rho_1$ or $\sigma_1=\rho_2$. And when one obtains physical samples \((1,\sigma_1),(1,\sigma_1),(2,\sigma_2)\) in order, the un-normalized quantum state
 \begin{equation}
 \Omega=p_1^2p_2\, \rho_1\otimes \rho_1\otimes \rho_2+ p_2^2p_1 \,\rho_2\otimes \rho_2\otimes\rho_1,
 \label{Omega_example}
 \end{equation}
 which sums over two compatible sequences of intrinsic state labels $\mathbf i_1=(1,1,2), \mathbf i_2=(2,2,1)$. Indeed, due to permutation-invariance, the sample labels are only informative in the equality pattern.

 Similar to the above example, we can develop the formal description for \(M\) samples from the ensemble, with a collection of label-data pairs \(\{(\lambda_k,\sigma_k)\}_{k=1}^{M}\). The equality pattern of the sample labels can be described by a partition of the \(M\) samples into \(s\le N\) blocks
 \(
 \Pi=\{B_1,\ldots,B_s\}
 \)
 of the sample positions \([M]\equiv \{1,\ldots,M\}\), where
 \(t,t'\in B_a\)
 indicates that the labels of the \(t\)-th sample and \(t'\)-th sample are the same,
 \(\lambda_t=\lambda_{t'}\).
 For the example in Eq.~\eqref{Omega_example}, we have \(\Pi=\{\{1,2\},\{3\}\}\), since the first and second samples have the same label while the third has a distinct label. The intrinsic label of the sampled states is a $M$-length vector,
 $\mathbf i\in[N]^M$.
 When a sequence of intrinsic labels is compatible with the partition $\Pi$, we denote it as \(\operatorname{eq}(\mathbf i)=\Pi\). As the ensemble is label-invariant, the un-normalized quantum state representing the partition \(\Pi\) needs to sum over all possible sequences compatible with the partition,
 \begin{align}
 \Omega_{\Pi}
 ={}&
 \sum_{\mathbf i\in[N]^M:\operatorname{eq}(\mathbf i)=\Pi}
 \left[
 \prod_{t=1}^M
 p_{i_t}
 \right]
 \bigotimes_{t=1}^M
 \rho_{i_t} .
 \label{eq:Omega-def}
 \end{align}
 The un-normalized state has trace
 \(
 \Tr\Omega_{\Pi}=P(\Pi)
 =
 \sum_{\mathbf i\in[N]^M:\operatorname{eq}(\mathbf i)=\Pi}
 \prod_{t=1}^M p_{i_t},
 \)
 equaling the probability of observing such a partition in random-access ensemble. 

 After summing over compatible intrinsic-label assignments, one can show that the un-normalized quantum state describing the sampling process can be expressed by the power-weighted moment operator (see Appendix A)
 \begin{equation}
 \mathcal{K}^{(r)}
 :=
 \sum_{i=1}^N p_i^r \rho_i^{\otimes r}.
 \label{eq:power-moment}
 \end{equation}
 For example, given the state in Eq.~\eqref{Omega_example} for the partition \(\Pi=\{\{1,2\},\{3\}\}\), we have
 \begin{align}
 \Omega_{\{\{1,2\},\{3\}\}}&
 =\mathcal{K}^{(2)}\otimes \mathcal{K}^{(1)}
 -
 \mathcal{K}^{(3)}.
 \end{align}
 The moment-based expression in the general case requires the evaluation of M\"obius coefficients and can be found in Appendix A.

{\em Hypothesis testing between ensembles.---}
With the quantum description of ensemble sampling in hand, we consider the binary hypotheses \(H_h:\cE_h=\{(p_i^{(h)},\rho_i^{(h)})\}_{i=1}^N\), \(h\in\{0,1\}\), with priors \(\eta_0,\eta_1>0\) satisfying \(\eta_0+\eta_1=1\). Given \(M\) samples drawn from one of the two ensembles, the tester uses the observed equality pattern and a joint measurement on the sampled quantum systems to infer the hypothesis \(h\). We append the subscript \(h\) to quantities associated with \(\cE_h\); in particular, \(\Omega_{h,\Pi}\) denotes the branch state in Eq.~\eqref{eq:Omega-def} under \(H_h\). We first derive the fundamental limits on the Bayes error (see Appendix~\ref{sec:proof_theorem1} for a proof).
\begin{theorem}[Minimum error probability for hypothesis testing between quantum ensembles]
\label{thm:opt}
The minimum Bayes error
\begin{equation}
P_{e,{\rm opt}}^{(M)}
=
\frac12\left[
1-
\sum_{\Pi\in\Part([M])}
\norm{\eta_0\Omega_{0,\Pi}-\eta_1\Omega_{1,\Pi}}_1
\right].
\label{eq:Bayes-exact}
\end{equation}
It is achieved by the positive spectral projector of $\eta_0\Omega_{0,\Pi}-\eta_1\Omega_{1,\Pi}$
\end{theorem}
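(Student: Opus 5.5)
The plan is to recast the whole testing problem as discrimination between two classical-quantum states whose classical register records the observed equality pattern, and then apply the Helstrom bound to each block.

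\emph{Step 1 (cq description).} For $h\in\{0,1\}$, the full data available to the tester is the partition $\Pi\in\Part([M])$ induced by the visible labels, together with the $M$ sampled systems. Because the visible labels carry information only through their equality pattern, the hypothesis-$h$ data is described by the block-diagonal operator
\begin{equation}
\Theta_h=\sum_{\Pi\in\Part([M])}|\Pi\rangle\langle\Pi|\otimes\Omega_{h,\Pi}.
\end{equation}
Here $\Omega_{h,\Pi}$ is the branch state of Eq.~\eqref{eq:Omega-def}, whose trace is $P_h(\Pi)$. We need $\Tr\Theta_h=1$. This holds because every $\mathbf i\in[N]^M$ has exactly one equality pattern $\operatorname{eq}(\mathbf i)$. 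Hence summing over $\Pi$ recovers $\sum_{\mathbf i}\prod_t p_{i_t}=1$, and likewise $\sum_\Pi\Omega_{h,\Pi}=(\sum_i p_i\rho_i)^{\otimes M}$.

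\emph{Step 2 (reduction to block-diagonal tests).} The most general test first reads $\Pi$ and then applies a two-outcome POVM $\{E_\Pi,\mathbb 1-E_\Pi\}$, where outcome $E_\Pi$ means ``decide $H_0$''. This is without loss of generality. Any POVM $F$ on the joint register can be replaced by its pinching $\sum_\Pi|\Pi\rangle\langle\Pi|\otimes\langle\Pi|F|\Pi\rangle$ without changing outcome probabilities, since the $\Theta_h$ are block diagonal in the classical basis. With these choices the error probability is
\begin{equation}
P_e=\eta_1+\sum_{\Pi}\Tr\!\big[E_\Pi(\eta_1\Omega_{1,\Pi}-\eta_0\Omega_{0,\Pi})\big]
=\eta_0-\sum_\Pi\Tr\!\big[E_\Pi(\eta_0\Omega_{0,\Pi}-\eta_1\Omega_{1,\Pi})\big].
\end{equation}

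\emph{Step 3 (blockwise Helstrom optimization).} Write $X_\Pi=\eta_0\Omega_{0,\Pi}-\eta_1\Omega_{1,\Pi}$. The terms decouple across $\Pi$, and for each $\Pi$ we have $\max_{0\le E\le\mathbb 1}\Tr[EX_\Pi]=\Tr X_\Pi^+$. This maximum is attained by the projector onto the positive eigenspace of $X_\Pi$.

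Using $\Tr X_\Pi^+=\tfrac12(\norm{X_\Pi}_1+\Tr X_\Pi)$ and summing over blocks gives $\sum_\Pi\Tr X_\Pi=\eta_0-\eta_1$. Therefore
\begin{equation}
P_{e,\rm opt}^{(M)}=\eta_0-\tfrac12\Big(\textstyle\sum_\Pi\norm{X_\Pi}_1+\eta_0-\eta_1\Big)=\tfrac12\Big[1-\textstyle\sum_\Pi\norm{X_\Pi}_1\Big],
\end{equation}
which is Eq.~\eqref{eq:Bayes-exact}, attained by the stated positive spectral projectors.

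\emph{Main obstacle.} The only conceptually delicate point is Step 1. We must justify that the label-invariant model yields exactly these block-diagonal hypotheses, and that no further information is hidden in the label values themselves. I would argue this from the definition preceding Eq.~\eqref{eq:Omega-def}. Conditioned on the hypothesis, the visible labels are a uniformly arbitrary relabeling of the intrinsic indices. Marginalizing over that relabeling leaves only the pattern $\Pi$ together with the quantum state $\Omega_{h,\Pi}$ summed over compatible $\mathbf i$. After that, Steps 2 and 3 are routine.
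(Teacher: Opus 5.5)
Your proof is correct and matches the paper's argument: write the Bayes error branchwise as $\eta_0-\sum_\Pi\Tr(E_\Pi X_\Pi)$, maximize each term with the positive spectral projector of $X_\Pi$, and sum using $\sum_\Pi\Tr\Omega_{h,\Pi}=1$; your pinching argument also justifies restricting to branch-dependent POVMs, which the paper assumes without comment. One small slip: the first expression in Step 2 should begin with $\eta_0$, i.e.\ $\eta_0+\sum_\Pi\Tr[E_\Pi(\eta_1\Omega_{1,\Pi}-\eta_0\Omega_{0,\Pi})]$, but this is harmless because your derivation uses the second form, which is correct.
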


In the classical limit, where two ensembles are composed of the same orthonormal basis states, $\rho_i=\ketbra{i}$, the testing reduces to testing between two classical probability distributions, and the exact Bayes error reduces to the well-known classical formula,
$
P_{e,{\rm opt},c}^{(M)} = \left[1-\|\eta_0[{\bm p}^{(0)}]^{\otimes M}-\eta_1[{\bm p}^{(1)}]^{\otimes M}\|_1\right]/2 $,
where ${\bm p}^{\otimes M}$ denotes the $M$-fold product distribution.



In the many-sample regime, we consider the error exponent \(E:=\lim_{M\to\infty}-\log\left( P_{e,\mathrm{opt}}^{(M)}\right)/M\). Writing \(Q_s(A,B):=\Tr(A^sB^{1-s})\), define the fixed-permutation Chernoff information~\cite{audenaert} and the corresponding across-hypothesis and in-hypothesis permutation bottlenecks by
\begin{align}
\xi_\pi^{h,h'}&:=-\log\min_{0\le s\le1}\sum_{i=1}^N Q_s\!\left(p_i^{(h)}\rho_i^{(h)},p_{\pi(i)}^{(h')}\rho_{\pi(i)}^{(h')}\right),\notag\\
\xi_{\mathrm{acr}}&:=\min_{\pi\in S_N}\xi_\pi^{0,1},\qquad \xi_{\mathrm{in}}:=\min_{h\in\{0,1\}}\min_{\pi\in S_N\setminus\{\mathrm{id}\}}\xi_\pi^{h,h}.
\label{eq:generalized-permutation-chernoff}
\end{align}
We have the following theorem (see Appendix~\ref{app:proof_theorem2} for a proof).
\begin{theorem}[Asymptotic Chernoff bounds for optimal Bayes error]
\label{thm:fixed-N-chernoff-bounds-main}
The error exponent from Eq.~\eqref{eq:Bayes-exact} satisfies
\begin{equation}
\min\{\xi_{\mathrm{in}},\xi_{\mathrm{acr}}\}\le E\le\xi_{\mathrm{acr}}.
\end{equation}
For finite uniform pure-state ensembles, the upper bound is tight and \(E=\xi_{\mathrm{acr}}\).
\end{theorem}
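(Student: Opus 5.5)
The starting point is Theorem~\ref{thm:opt}: the minimum Bayes error is $\sum_\Pi \frac12[\Tr(\eta_0\Omega_{0,\Pi}+\eta_1\Omega_{1,\Pi})-\|\eta_0\Omega_{0,\Pi}-\eta_1\Omega_{1,\Pi}\|_1]$, a sum over partitions of per-branch Bayes errors. Since $N$ is fixed, only partitions with $s\le N$ blocks contribute. The number of such partitions is at most $N^M/N!$, which is exponential in $M$. This is the main nuisance, so the plan is to regroup the sum by \emph{labelled} intrinsic sequences rather than by partitions. Each branch state $\Omega_{h,\Pi}$ can be written as a sum over injective maps $\tau$ from blocks to $[N]$ of product states $\bigotimes_t p^{(h)}_{\tau(t)}\rho^{(h)}_{\tau(t)}$. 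Equivalently, label the blocks canonically (by first appearance). Then $\Omega_{h,\Pi}=\sum_{\pi\in S_N}$ applied to a fixed ``type'' product state $\omega_{h,\mathbf j}:=\bigotimes_t p^{(h)}_{j_t}\rho^{(h)}_{j_t}$, up to a stabilizer multiplicity depending only on $\Pi$. Here $\mathbf j\in[N]^M$ is a canonical representative with $\operatorname{eq}(\mathbf j)=\Pi$ and $\pi$ acts on the labels.

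\textbf{Upper bound on $E$ (lower bound on error).} Fix $\pi^\star$ attaining $\xi_{\rm acr}$. I would lower bound each branch error using $\frac12[\Tr(A+B)-\|A-B\|_1]\ge$ the error of a coarser problem. Positivity gives $\min$-type monotonicity: if $A\ge A'\ge0$ and $B\ge B'\ge0$, then $\Tr(A+B)-\|A-B\|_1\ge \Tr(A'+B')-\|A'-B'\|_1$. This holds since $\frac12(\Tr(A+B)-\|A-B\|_1)=\min_{0\le T\le 1}\Tr[A(1-T)+BT]$. So keep in $\Omega_{0,\Pi}$ the single term with labels $\mathbf i$, and in $\Omega_{1,\Pi}$ the single term with labels $\pi^\star(\mathbf i)$. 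Summing over all $\mathbf i\in[N]^M$ (each $\mathbf i$ lies in exactly one $\Pi$; distinct $\mathbf i$ in the same branch are handled by splitting $\Omega$ into nonnegative pieces, which only lowers the min further) gives
\begin{equation}
P_e\ge \tfrac{1}{N!}\sum_{\mathbf i}\min_T \Tr\Big[\eta_0\textstyle\bigotimes_t p^{(0)}_{i_t}\rho^{(0)}_{i_t}(1-T)+\eta_1\bigotimes_t p^{(1)}_{\pi^\star(i_t)}\rho^{(1)}_{\pi^\star(i_t)}T\Big].
\end{equation}
This is exactly the Bayes error between two classical-quantum states $\sum_i|i\rangle\langle i|\otimes p_i^{(0)}\rho_i^{(0)}$ and its $\pi^\star$-relabelled counterpart, taken to the $M$-th tensor power. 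The quantum Chernoff theorem of Audenaert et al.~\cite{audenaert} for cq states then gives exponent $\xi^{0,1}_{\pi^\star}=\xi_{\rm acr}$. The careful part is making the splitting argument rigorous; the cleanest route is to write the optimal tester on $\Omega_\Pi$ and restrict it to each summand.

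\textbf{Lower bound on $E$.} I would upper bound the error using $\frac12[\Tr(A+B)-\|A-B\|_1]\le \Tr A^sB^{1-s}$ (Audenaert's inequality) per branch. A cruder and more flexible route is to construct an explicit two-stage tester. In the first stage, use a pretty-good/cq measurement to estimate the labelled assignment $\mathbf i$ up to a global permutation. The in-hypothesis confusions, i.e.\ mistaking relabelling $\pi\ne{\rm id}$ within $H_h$, cost $\exp(-M\xi_{\rm in})$ up to a polynomial or $N!^2$ factor by a union bound over pairs $(\pi,\pi')$ using Audenaert's bound. In the second stage, cross-hypothesis confusions cost $\exp(-M\xi_\pi^{0,1})\le\exp(-M\xi_{\rm acr})$. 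Concretely, by the triangle inequality $\|\cdot\|_1$ on branches, I would bound the error by $\sum_{\pi,\pi'}$ of pairwise cq Chernoff terms, yielding $\min\{\xi_{\rm in},\xi_{\rm acr}\}$.

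\textbf{Uniform pure-state tightness.} I expect this to be the main obstacle. When $p_i=1/N$ and the states are pure, $\Omega_{h,\Pi}$ is $N^{-M}$ times a projector-weighted sum over injections. In-hypothesis confusion then must not be an actual error: relabelling within $H_h$ produces the \emph{same} branch state, since permutation symmetry is exact. So instead of distinguishing relabellings, I would view $\Omega_{h,\Pi}$ as (proportional to) a symmetrized state and directly bound $\Tr\Omega_{0,\Pi}^s\Omega_{1,\Pi}^{1-s}$. I would try to show this is at most ${\rm poly}$ times $\max_\pi\prod_a |\langle\psi_{\tau(a)}|\phi_{\pi\tau(a)}\rangle|^{2|B_a|}$, using that for pure states the supports are spanned by product vectors. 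The Gram-matrix eigenvalues of those product vectors stay bounded below once block sizes grow, and near-degenerate small blocks contribute only polynomially. Summing over $\Pi$ then matches $\xi_{\rm acr}$, where for pure uniform states $\min_s$ is attained at every $s$ by the overlap sum.
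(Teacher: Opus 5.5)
Your converse $E\le\xi_{\mathrm{acr}}$ is correct and is the paper's argument in elementary form. You write $\Omega_{0,\Pi}=\sum_{\operatorname{eq}(\mathbf i)=\Pi}X_0(\mathbf i)$ and $\Omega_{1,\Pi}=\sum_{\operatorname{eq}(\mathbf i)=\Pi}X_1(\pi^\star(\mathbf i))$, then use superadditivity of $\min_{0\le T\le I}\Tr[A(I-T)+BT]$. This is exactly the coarse-graining (data-processing) step behind $P^{(M)}_{e,\mathrm{opt}}\ge P^{(M)}_{e,\pi,\mathrm{ref}}$. The factor $1/N!$ is unnecessary, since $\mathbf i\mapsto\pi^\star(\mathbf i)$ preserves equality patterns, but it is harmless.

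The genuine gap is the achievability bound $E\ge\min\{\xi_{\mathrm{in}},\xi_{\mathrm{acr}}\}$ for general ensembles.

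Applying Audenaert per branch does not reduce to pairwise quantities. There is no bound of $\Tr(\sum_aX_a)^s(\sum_bY_b)^{1-s}$ by $\sum_{a,b}\Tr X_a^sY_b^{1-s}$ once the in-hypothesis pieces overlap. Take $X_1=|0\rangle\langle0|$, $X_2=|v\rangle\langle v|$ with $|v\rangle$ at small angle $\theta$ from $|0\rangle$, and $Y=|1\rangle\langle1|$. The left side is of order $(\theta^2/2)^s$ and the right side of order $\theta^2$. This non-orthogonality is precisely why $\xi_{\mathrm{in}}$ appears at all.

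Your two-stage tester rests on a union bound of pairwise Audenaert bounds over the $2N!$ refined hypotheses. Pairwise Helstrom tests for non-commuting states do not assemble into a single POVM, so that sum bounds the error of no actual measurement. Standard pairwise bounds on the pretty-good measurement (Barnum--Knill) are fidelity-type and already lose a factor $2$ in the exponent for pure states. Achieving the minimum pairwise Chernoff exponent for finitely many i.i.d. quantum hypotheses is the nontrivial multiple quantum Chernoff theorem of Li, which the paper invokes and your sketch tacitly assumes.

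The triangle inequality you mention also points the wrong way: it upper-bounds $\|\eta_0\Omega_{0,\Pi}-\eta_1\Omega_{1,\Pi}\|_1$, hence lower-bounds the error.

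The missing structural step is the paper's identity
$P^{(M)}_{e,\mathrm{opt}}=\frac12[1-\|\eta_0\overline\rho_0^{(M)}-\eta_1\overline\rho_1^{(M)}\|_1]$, where $\overline\rho_h^{(M)}=\frac1{N!}\sum_{\alpha}\widetilde\rho_{h,\alpha}^{\otimes M}$. Each direct-sum block with pattern $\Pi$ equals $\Omega_{h,\Pi}/(N)_s$, which is what makes a fine-hypothesis tester usable on label-invariant data. One then applies Li's theorem to the states $\widetilde\rho_{h,\alpha}^{\otimes M}$ and merges outcomes via $F_h=\sum_\alpha\Lambda_{h,\alpha}$, using $F_{1-h}\preceq I-\Lambda_{h,\alpha}$.

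For finite uniform pure-state ensembles, your route can be completed and is close to the paper's support test $E_{1|\Pi}=P_{1,\Pi}$. Write $\Omega_{h,\Pi}=N^{-M}T_{h,\Pi}$. A common lower bound $\kappa\in(0,1]$ on the nonzero eigenvalues of both $T_{h,\Pi}$ gives $T_{h,\Pi}^s\preceq\kappa^{s-1}T_{h,\Pi}$. Hence $\Tr\Omega_{0,\Pi}^s\Omega_{1,\Pi}^{1-s}\le\kappa^{-1}N^{-M}\Tr T_{0,\Pi}T_{1,\Pi}$, the same overlap sum the paper obtains.

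However, the uniform $\kappa$ is the whole difficulty, and ``bounded below once block sizes grow, small blocks contribute polynomially'' does not establish it. With small blocks the product vectors can be exactly linearly dependent. So you need the nonzero singular values of the synthesis maps bounded below uniformly in $M$ and $\Pi$. The paper proves this frame bound $T_{1,\Pi}\succeq\kappa P_{1,\Pi}$ by a compactness/contradiction argument: fix $|\Pi|=s$, split blocks into bounded and diverging sizes, and use asymptotic orthogonality of distinct states on the diverging blocks.

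Finally, replace your per-branch $\max_\pi$ with the paper's aggregation over exponentially many branches. Sum over pairs $(\mathbf i,\mathbf j)$ with $\operatorname{eq}(\mathbf i)=\operatorname{eq}(\mathbf j)$, extend each pair to a permutation, and bound by $\sum_{\pi\in S_N}F_\pi^M\le N!\,F_\star^M$.
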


Below, we illustrate some examples of hypothesis testing between ensembles. We begin with the example of testing between state designs. Then, we provide additional examples in connection to constellation in optical communication.


\begin{figure}[t]
    \centering
    \includegraphics[width=\linewidth]{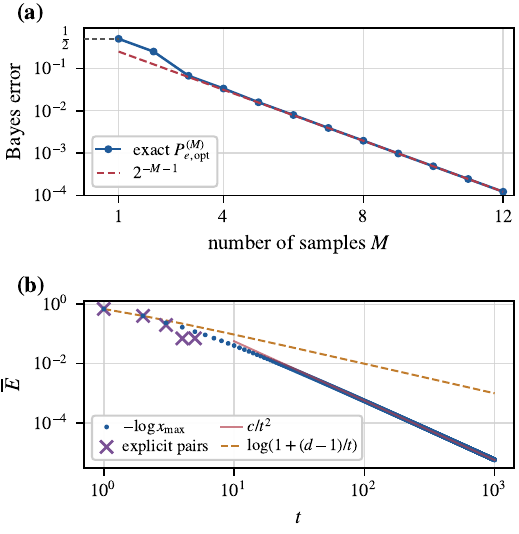}
    \caption{Results in testing between designs. (a) Exact Bayes error for testing between $\mathcal{E}Z$ and $\mathcal{E}X$ with equal priors, with $P_{e,\mathrm{opt}}^{(1)}=1/2$ indicated. The dashed line shows the asymptotic behavior $P_{e,\mathrm{opt}}^{(M)}\sim2^{-M-1}$. (b) Universal upper bound $\overline{E}$ for testing between qubit $t$-design ensembles. Crosses show the exponents attained by the explicit design pairs for $t=1,\ldots,5$, while the reference curve $5.678/t^2$ illustrates the asymptotic scaling $\overline{E}=O(t^{-2})$. The weaker estimate \(\log[1+(d-1)/t]\) is also shown.}
    
    \label{fig:results in testing between designs}
\end{figure}

\emph{Testing between state designs.---} An important family of quantum ensembles is the $t$-design, where the prior is uniform and the $r$-th moment operator $\calM^{(r)}$ matches that of Haar random ensemble for all $r\le t$. For simplicity in describing uniform ensembles, we introduce the moment operator
$
\calM^{(r)}
:=\expval{\rho^{\otimes r}}
$, which connects to the power-weighted moments directly as 
$
\mathcal{K}^{(r)}
=
N^{1-r}\calM^{(r)}.
$

In general, the construction of $t$-design ensemble is not unique. For example, $\calE_Z=\{\ket{0},\ket{1}\}$ and $\calE_X=\{\ket{+},\ket{-}\}$ are both $1$-designs. And they are different at the second moment $\calM^{(2)}$. A hypothesis testing task between different realizations of $t$-design naturally arises.
Such a hypothesis testing task motivates the following theorem for distinguishing two uniform ensembles with all $r<k$ moments equal, but with different $r=k$-th moments. In the equal prior case, we obtain the following (see Appendix~\ref{app:extra_proofs} for a proof).
\begin{corollary}[Testing when lower moments match]\label{cor:first-informative-moment}
Consider two uniform ensembles with equal priors and suppose \(M_0^{(r)}=M_1^{(r)}\) for \(1\le r<k\). Then \(P_{e,\mathrm{opt}}^{(M)}=1/2\) for \(M<k\), and
\[
P_{e,\mathrm{opt}}^{(k)}=\frac12\left[1-\frac12A_kN^{1-k}\|M_0^{(k)}-M_1^{(k)}\|_1\right]
\]
for $M=k$, where $A_k:=\sum_{\ell=1}^k\left\{{k\atop\ell}\right\}(\ell-1)!$.
\end{corollary}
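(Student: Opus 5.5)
The plan is to apply Theorem~\ref{thm:opt} with $\eta_0=\eta_1=1/2$, which gives $P_{e,\mathrm{opt}}^{(M)}=\frac12\bigl[1-\frac12\sum_{\Pi}\|\Omega_{0,\Pi}-\Omega_{1,\Pi}\|_1\bigr]$. It then suffices to compute each branch difference $\Omega_{0,\Pi}-\Omega_{1,\Pi}$ through the moment expansion of Appendix~A. I will rederive that expansion by M\"obius inversion on the partition lattice.

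For $\Pi\in\Part([M])$, define
\[
\Phi_\Pi:=\sum_{\mathbf i:\ \mathbf i\text{ constant on each block of }\Pi}\Bigl[\prod_t p_{i_t}\Bigr]\bigotimes_t\rho_{i_t}.
\]
Because the indices on different blocks range independently, $\Phi_\Pi=\bigotimes_{B\in\Pi}\mathcal K^{(|B|)}$, with the tensor factors placed on the positions in $B$. A sequence $\mathbf i$ is constant on the blocks of $\Pi$ exactly when $\operatorname{eq}(\mathbf i)=\sigma$ for some $\sigma\ge\Pi$ (coarser or equal). Hence $\Phi_\Pi=\sum_{\sigma\ge\Pi}\Omega_\sigma$. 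M\"obius inversion then gives
\[
\Omega_\Pi=\sum_{\sigma\ge\Pi}\mu(\Pi,\sigma)\,\Phi_\sigma .
\]
For uniform ensembles, $\mathcal K^{(r)}=N^{1-r}\mathcal M^{(r)}$.

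\emph{Case $M<k$.} Every $\sigma\in\Part([M])$ has blocks of size at most $M<k$. By hypothesis each factor $\mathcal K^{(|B|)}$ is then identical under both hypotheses, so $\Phi_{0,\sigma}=\Phi_{1,\sigma}$ for every $\sigma$. Consequently $\Omega_{0,\Pi}=\Omega_{1,\Pi}$ for every $\Pi$, every trace-norm term vanishes, and $P_{e,\mathrm{opt}}^{(M)}=1/2$.

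\emph{Case $M=k$.} The only $\sigma$ with a block of size $k$ is the single-block partition $\hat 1=\{[k]\}$. Every other $\sigma$ again has all blocks of size $<k$ and cancels in the difference. This leaves
\[
\Omega_{0,\Pi}-\Omega_{1,\Pi}=\mu(\Pi,\hat1)\,N^{1-k}\bigl(\mathcal M_0^{(k)}-\mathcal M_1^{(k)}\bigr).
\]
This difference is already in the natural position ordering, so no permutation of tensor factors is even needed.

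The step requiring the most care is evaluating $\mu(\Pi,\hat1)$. The interval $[\Pi,\hat1]$ is isomorphic to the full partition lattice $\Part([\ell])$, where $\ell=|\Pi|$, by treating each block of $\Pi$ as a single element. The standard M\"obius function of the partition lattice then gives $\mu(\Pi,\hat1)=(-1)^{\ell-1}(\ell-1)!$. I would cite this value, or verify it quickly from the identity $\sum_{\sigma}\mu(\hat0,\sigma)=0$ via the exponential formula. Taking trace norms yields $(\ell-1)!\,N^{1-k}\|\mathcal M_0^{(k)}-\mathcal M_1^{(k)}\|_1$ for each $\Pi$ with $\ell$ blocks. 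There are $\left\{{k\atop\ell}\right\}$ partitions of $[k]$ with $\ell$ blocks, so summing over $\Pi$ gives $A_kN^{1-k}\|\mathcal M_0^{(k)}-\mathcal M_1^{(k)}\|_1$. Substituting this into the formula of Theorem~\ref{thm:opt} yields the stated expression for $P_{e,\mathrm{opt}}^{(k)}$.
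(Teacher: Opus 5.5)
Your proposal is correct and follows the paper's proof essentially step for step. Both arguments use the equal-prior specialization of Theorem~\ref{thm:opt}, the M\"obius moment expansion (which you rederive via the zeta transform exactly as in Appendix~A), the cancellation of every term whose blocks all have size below $k$, the value $\mu(\Pi,\hat 1)=(-1)^{|\Pi|-1}(|\Pi|-1)!$, and the Stirling-number count that produces $A_k$. The one assumption you use implicitly, as the paper also does, is that both ensembles have the same cardinality $N$; this is what lets $M_0^{(r)}=M_1^{(r)}$ carry over to $\mathcal K_0^{(r)}=\mathcal K_1^{(r)}$.
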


For \(\mathcal E_Z=\{\ket0,\ket1\}\) and \(\mathcal E_X=\{\ket+,\ket-\}\), the first informative moment is \(k=2\). Since \(A_2=2\) and \(\|M_Z^{(2)}-M_X^{(2)}\|_1=1\), Corollary~\ref{cor:first-informative-moment} gives \(P_{e,\mathrm{opt}}^{(2)}=1/4\). The exact $M>k$ expression is shown as Eq.~\eqref{eq:ZX-error} in Appendix~C and satisfies \(P_{e,\mathrm{opt}}^{(M)}\sim\eta_Z\eta_X 2^{1-M}\). We show the exact Bayes error when $M$ increases in Fig.~\ref{fig:results in testing between designs}(a).



In the general case, however, the expressions for $M>k$ under Corollary~\ref{cor:first-informative-moment} become increasingly complicated. Nevertheless, we are able to obtain
an upper bound on the error exponent (see Appendix~\ref{app:extra_proofs} for a proof).
\begin{theorem}[Universal bounds for testing between two $t$-designs]\label{thm: unversal chernoff t-design upper bound}
Suppose in a binary hypothesis testing task, the two ensembles are both $t$-design pure uniform ensembles. Let $x_{\rm max}$ be the largest zero in $(0,1)$ of the Jacobi polynomial $P_{\lceil t/2\rceil}^{(d-2,\beta_t)}(2x-1)$, where $\beta_t=0$ for odd $t$ and $\beta_t=1$ for even $t$. Then the error exponent
\begin{equation}
E\le-\log x_{\max}=O_d(t^{-2})
\end{equation}
for fixed \(d\) as \(t\to\infty\).
\end{theorem}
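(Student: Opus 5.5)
Since the exponent satisfies $E\le\xi_{\mathrm{acr}}$ by Theorem~\ref{thm:fixed-N-chernoff-bounds-main}, it suffices to lower bound, for every permutation $\pi$, the fixed-permutation Chernoff coefficient $\min_s\sum_i Q_s(p_i^{(0)}\rho_i^{(0)},p_{\pi(i)}^{(1)}\rho_{\pi(i)}^{(1)})$. For uniform pure ensembles $p_i=1/N$ and $\rho_i=\ketbra{\psi_i}$, one has $Q_s(\ketbra{\psi}/N,\ketbra{\phi}/N)=|\langle\psi|\phi\rangle|^2/N$ for every $s\in(0,1)$. Each coefficient is therefore the average overlap $\frac1N\sum_i|\langle\psi_i|\phi_{\pi(i)}\rangle|^2$, and $E\le-\log\min_\pi \frac1N\sum_i|\langle\psi_i|\phi_{\pi(i)}\rangle|^2$. (We take $s$ in the open interval; the endpoints give $1$ and do not affect the minimum.) The plan is to show that for any two $t$-designs and any matching $\pi$, this average overlap is at least $x_{\max}$. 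Since the minimum over $\pi$ is at most the average over all $\pi$, this is a genuine statement about every matching and is the crux.

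The first step is a polynomial (Delsarte-type) argument. Write $x_i=|\langle\psi_i|\phi_{\pi(i)}\rangle|^2\in[0,1]$. For a fixed $\phi$, the function $\psi\mapsto f(|\langle\psi|\phi\rangle|^2)$ with $\deg f\le t$ averages over a $t$-design exactly as over the Haar measure. The idea is to control the single-pair quantity $\frac1N\sum_i x_i$ through a Gauss-type quadrature. Projected onto the "zonal" variable $x=|\langle\psi|\phi\rangle|^2$, the Haar measure on $\mathbb{CP}^{d-1}$ has density proportional to $(1-x)^{d-2}$ on $[0,1]$. The orthogonal polynomials are therefore Jacobi polynomials $P_n^{(d-2,0)}(2x-1)$, and including a factor $x$ gives weight $(1-x)^{d-2}x$ and the polynomials $P_n^{(d-2,1)}$. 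This explains the appearance of $\beta_t$.

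The approach I would try first is a Levenshtein-type linear-programming bound. Construct a polynomial $g$ of degree $\le t$ with $g(x)\le x$ on $[0,1]$, built as $g(x)=x-c\,(x-x_{\max})\,\prod_j (x-x_j)^2$ with roots among the Jacobi zeros and parity set by $\beta_t$. It should also satisfy $\int g\,d\mu_{\rm Haar}=x_{\max}$ for the zonal measure, using the quadrature property that the Jacobi zeros are nodes of a Gauss (or Gauss--Radau) rule exact to the required degree. Then
\[
\frac1N\sum_i x_i\ \ge\ \frac1N\sum_i g(x_i).
\]
To evaluate the right-hand side I would expand $g(|\langle\psi|\phi\rangle|^2)$ as a polynomial in $\psi^{\otimes t}$ and $\phi^{\otimes t}$ moments. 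The cleaner route is to write $g(x)=\sum_r a_r x^r$, so that $\frac1N\sum_i x_i^r=\frac1N\sum_i\Tr[\psi_i^{\otimes r}\phi_{\pi(i)}^{\otimes r}]$.

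This step is the main obstacle. The correlated sum over a matching is not determined by the separate moments, so design properties do not immediately give Haar values. I would handle it by requiring $g-x_{\max}$ to be a nonnegative combination of "positive-definite" zonal functions, namely Jacobi polynomials with positive coefficients. For these, $\sum_i F(\psi_i,\phi_{\pi(i)})$ can be bounded below by a sum of squares minus cross terms, and the cross terms vanish by the design property of each ensemble separately. The alternative, if this fails, is to exhibit a bad-case pair directly. The theorem only claims an upper bound on $E$, valid for all pairs, so a lower bound on overlap is what must hold; I would therefore fall back to checking tightness numerically against Fig.~\ref{fig:results in testing between designs}(b).

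The final step is asymptotics. The largest zero of $P_n^{(\alpha,\beta)}(\cos\theta)$ satisfies $\theta\approx j_{\beta,1}/(n+\cdot)$ with $j$ a Bessel zero, by Hilb/Mehler--Heine asymptotics. Hence $1-x_{\max}=\Theta_d(n^{-2})$ with $n=\lceil t/2\rceil$, and $-\log x_{\max}=O_d(t^{-2})$.
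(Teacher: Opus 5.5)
\textbf{There is a genuine gap, and it starts with your reduction.} For finite uniform pure-state ensembles, $\xi_{\mathrm{acr}}=\min_\pi\xi_\pi^{0,1}=-\log\max_\pi F_\pi$, where $F_\pi:=\frac1N\sum_i|\langle\psi_i|\phi_{\pi(i)}\rangle|^2$. So $E\le-\log x_{\max}$ only needs the \emph{existence} of one matching with $F_\pi\ge x_{\max}$, i.e.\ $F_\star=\max_\pi F_\pi\ge x_{\max}$. You instead set out to prove $F_\pi\ge x_{\max}$ for \emph{every} $\pi$, and that is false.

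For a counterexample, take $t=1$, $d=2$, so $x_{\max}=1/2$. Let $\mathcal E_0=\{\ket{0},\ket{1}\}$ and $\mathcal E_1=\{\cos\theta\ket{0}+\sin\theta\ket{1},\,-\sin\theta\ket{0}+\cos\theta\ket{1}\}$. Both are uniform $1$-designs, but the crossed matching has $F_\pi=\sin^2\theta\to0$. Likewise, the paper's tetrahedron pair for $t=2$ admits an antipodal matching with $F_\pi=0$.

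This is also why your Levenshtein step stalls. The design property controls only full row and column sums, $\frac1N\sum_jF_{ij}^r=1/D_r$, never a sum along a single matching. Positive-definiteness of zonal kernels constrains quadratic forms over one point set, not partial cross sums. So no choice of $g$ can rescue a claim that fails.

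\textbf{The missing idea is a non-uniform reweighting of permutations.} Averaging $F_\pi$ uniformly over $S_N$ gives only $1/d$, which does not decay in $t$. The paper's argument runs as follows.
\begin{itemize}
\item Take a nonzero polynomial $p$ of degree $\le t-1$ that is nonnegative on $[0,1]$, and form $B_{ij}=p(F_{ij})/(NA_p)$ with $A_p=\int p\,d\mu_d$.
\item The design moments up to order $t-1$ make $B$ doubly stochastic, so Birkhoff--von Neumann gives $B=\sum_\pi\lambda_\pi P_\pi$.
\item The moments up to order $t$ then give $\sum_\pi\lambda_\pi F_\pi=\int xp\,d\mu_d\big/\int p\,d\mu_d$. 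Hence some $\pi$ attains at least this ratio.
\end{itemize}
Maximizing this ratio over $p$ is where your Gauss/Gauss--Radau intuition is correct. The ratio is a convex combination of the quadrature nodes with weights $\omega_kp(x_k)$, and its optimum $x_{\max}$ is reached by a $p$ vanishing at all other nodes.

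Your asymptotic step gives the right $t^{-2}$ scaling. However, the Bessel zero governing the largest root near $x=1$ is $j_{d-2,1}$, tied to the parameter $\alpha=d-2$, not $j_{\beta,1}$.
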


For \(d=2\), the numerical values of \(-\log x_{\max}\) are shown in Fig.~\ref{fig:results in testing between designs}(b). The large-\(t\) data exhibit the \(t^{-2}\) scaling, as illustrated by the reference curve \(c/t^2\). The crosses show the exponents attained by the explicit \(t\)-design pairs for \(t=1,\ldots,5\). The weaker elementary bound \(\log[1+(d-1)/t]\) is also shown; it is tight for \(t=1,2\), but becomes asymptotically looser because it scales as \(O(t^{-1})\).

A corresponding finite-sample converse follows (see Appendix~\ref{app:extra_proofs} for a proof).
\begin{corollary}[Finite-sample converse for testing state designs]
Let \(\mathcal E_h=\{(1/N,\rho_i^{(h)})\}_{i=1}^N\), \(h=0,1\), be two finite uniform pure-state \(t\)-designs on \(\mathbb C^d\), and let \(x_{\max}\) be defined as in Theorem~\ref{thm: unversal chernoff t-design upper bound}. Then, for every \(M\ge 1\),
\begin{equation}
P_{e,\mathrm{opt}}^{(M)}\ge\frac12\left[1-\sqrt{1-4\eta_0\eta_1x_{\max}^{M}}\right].
\end{equation}
Let $M_\epsilon:=\inf\{M\in\mathbb N:P_{e,\mathrm{opt}}^{(M)}\le\epsilon\}$.  Consequently, $M_\epsilon=\Omega_{d,\epsilon}(t^2)$ as $t\to\infty$ for fixed $d$ and $\epsilon$.
\end{corollary}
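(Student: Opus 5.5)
The lower bound will come from a fidelity (Fuchs--van de Graaf type) bound applied to the branch decomposition in Theorem~\ref{thm:opt}. For positive operators $A,B$ and priors we have the standard inequality $\|\eta_0A-\eta_1B\|_1\le\sqrt{(\eta_0\Tr A+\eta_1\Tr B)^2-4\eta_0\eta_1 F(A,B)}$, where $F(A,B)=\|\sqrt A\sqrt B\|_1^2$. This is the unnormalized form of the bound that gives $P_e\ge\frac12[1-\sqrt{1-4\eta_0\eta_1F}]$ for normalized states. I plan to first write $P_{e,\mathrm{opt}}^{(M)}$ as a classical-quantum discrimination problem between $\bigoplus_\Pi\Omega_{0,\Pi}$ and $\bigoplus_\Pi\Omega_{1,\Pi}$. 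Both are normalized states, since $\sum_\Pi P_h(\Pi)=1$. Theorem~\ref{thm:opt} is exactly the Helstrom formula for this pair. The standard Helstrom--fidelity inequality then gives
\begin{equation*}
P_{e,\mathrm{opt}}^{(M)}\ge\tfrac12\Bigl[1-\sqrt{1-4\eta_0\eta_1F_M}\Bigr],\qquad \sqrt{F_M}=\sum_\Pi\bigl\|\sqrt{\Omega_{0,\Pi}}\sqrt{\Omega_{1,\Pi}}\bigr\|_1 .
\end{equation*}
It therefore suffices to show $F_M\ge x_{\max}^M$, i.e. $\sqrt{F_M}\ge x_{\max}^{M/2}$.

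To bound the branch fidelity I will replace the fidelity by a Chernoff/Holevo-type quantity that tensorizes. I would try the $s=1/2$ affinity: $\|\sqrt A\sqrt B\|_1\ge\Tr(A^{1/2}B^{1/2})$. For uniform pure ensembles, within a branch $\Pi$ with $s$ blocks the terms are indexed by injective assignments of $s$ distinct states. Lower bounding further then requires handling cross terms between different injective assignments. The cleaner route is to invoke the mechanism already used for Theorem~\ref{thm: unversal chernoff t-design upper bound}. That proof presumably bounds $E\le-\log x_{\max}$ by exhibiting, through a Jacobi-polynomial (Delsarte/linear-programming) argument, that some pair of states across the two designs has overlap $|\langle\psi_i^{(0)}|\psi_{\pi(i)}^{(1)}\rangle|^2\ge x_{\max}$ along a matching $\pi$. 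I expect it also gives $\sum_iQ_{s}(\cdot,\cdot)\ge x_{\max}$ for the optimal $\pi$, which yields $e^{-\xi_{\mathrm{acr}}}\ge x_{\max}$.

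I will use this pointwise per-sample estimate as follows. Restrict the measurement to be finer: give the tester the intrinsic labels under a fixed matching $\pi$. Equivalently, note that the two $M$-sample sources are mixtures over $\mathbf i$ of $\bigotimes_t|\psi_{i_t}^{(0)}\rangle$ and $\bigotimes_t|\psi_{\pi(i_t)}^{(1)}\rangle$ with identical label distributions $N^{-M}$. Since the label-pattern data are a coarse-graining of the full labels, data processing (monotonicity of fidelity) gives $F_M\ge$ the fidelity of the two cq-states $\sum_{\mathbf i}N^{-M}|\mathbf i\rangle\langle\mathbf i|\otimes\psi^{(h)}_{\mathbf i}$. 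That fidelity equals $\bigl(N^{-1}\sum_i|\langle\psi_i^{(0)}|\psi_{\pi(i)}^{(1)}\rangle|\bigr)^{2M}$. Hence it suffices to show that a matching exists with $N^{-1}\sum_i|\langle\psi_i^{(0)}|\psi^{(1)}_{\pi(i)}\rangle|\ge\sqrt{x_{\max}}$.

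The main obstacle is this combinatorial/design step: proving that the average matched overlap is at least $\sqrt{x_{\max}}$. I would attack it through the $t$-design property. Let $G$ be the $N\times N$ matrix of overlaps $|\langle\psi_i^{(0)}|\psi^{(1)}_j\rangle|^2$. It is doubly stochastic up to the factor $N/d$ by the 1-design property. Any polynomial $f$ of degree $\le\lfloor t/2\rfloor$ in the overlap has its cross-design average equal to the Haar value. Choose $f(x)=\frac{(x)\,[P^{(d-2,\beta_t)}_{\lceil t/2\rceil}(2x-1)]^2}{x-x_{\max}}$-type Delsarte polynomials, which are nonpositive below $x_{\max}$ with positive Haar average. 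This forces every row of $G$ to contain mass at overlap $\ge x_{\max}$, and Birkhoff--von Neumann/Hall's theorem then extracts a permutation $\pi$ with $\sum_i\sqrt{G_{i\pi(i)}}\ge N\sqrt{x_{\max}}$. If this direct matching argument fails, I would fall back to quoting the inequality $\min_\pi$ behind Theorem~\ref{thm: unversal chernoff t-design upper bound}, stated there at $s=1/2$. Finally, for $M_\epsilon$: $P_e\le\epsilon$ forces $4\eta_0\eta_1x_{\max}^M\le1-(1-2\epsilon)^2$. Since $-\log x_{\max}=O_d(t^{-2})$, this gives $M\ge\log\frac{4\eta_0\eta_1}{1-(1-2\epsilon)^2}/(-\log x_{\max})=\Omega_{d,\epsilon}(t^2)$ for equal priors and $\epsilon<1/2$.
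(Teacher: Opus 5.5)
Your refinement step is right and matches the paper: fixing a matching $\pi$ and coarse-graining $\mathbf i\mapsto\operatorname{eq}(\mathbf i)$ shows the label-invariant test is no easier than testing $\widetilde\rho_0^{\otimes M}$ against $\widetilde\rho_{1,\pi}^{\otimes M}$. The gap is in passing through the \emph{fidelity} of these refined classical--quantum states. Their fidelity is $\bigl(\frac1N\sum_i\sqrt{F_{i,\pi(i)}}\bigr)^{2M}$ with $F_{ij}=|\langle\psi_i^{(0)}|\psi_j^{(1)}\rangle|^2$. You are therefore left needing $\frac1N\sum_i\sqrt{F_{i,\pi(i)}}\ge\sqrt{x_{\max}}$, a statement about \emph{unsquared} overlaps.

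The proof of Theorem~\ref{thm: unversal chernoff t-design upper bound} only delivers $F_\star=\max_\pi\frac1N\sum_iF_{i,\pi(i)}\ge x_{\max}$, for \emph{squared} overlaps; it uses Birkhoff--von Neumann on $B_{ij}=p(F_{ij})/(NA_p)$ plus Gauss and Gauss--Radau quadrature. Since $\bigl(\frac1N\sum_i\sqrt{F_{i,\pi(i)}}\bigr)^2\le\frac1N\sum_iF_{i,\pi(i)}$, this says nothing about your quantity. Your fallback of quoting that bound ``at $s=1/2$'' gives the same squared-overlap average, because for pure states $Q_s$ is $s$-independent and equals $F_{ij}/N$. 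Rerunning the polynomial argument with $\sqrt{F_{ij}}$ also fails directly. The product $\sqrt{x}\,p(x)$ is not a polynomial, and with $\deg p=t-1$ the moment data only allow a linear minorant of $\sqrt{x}$, whose best choice $x$ returns $x_{\max}$.

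Your Delsarte/Hall sketch is not valid either: an entry $\ge x_{\max}$ in every row does not give Hall's condition for the graph $\{F_{ij}\ge x_{\max}\}$. Take $d=2$, $t=1$, so $x_{\max}=1/2$. Compare the trine of Bloch vectors $(1,0,0),(-\tfrac12,\pm\tfrac{\sqrt3}2,0)$ with the trine $(\tfrac12,0,\pm\tfrac{\sqrt3}2),(-1,0,0)$. Both rows $(-\tfrac12,\pm\tfrac{\sqrt3}2,0)$ have fidelity $\tfrac34$ with $(-1,0,0)$ and $\tfrac38$ with the other two states. So no perfect matching has all matched fidelities $\ge x_{\max}$. (Your unsquared average is still about $0.78\ge\sqrt{1/2}$ here; the point is that the sketched argument cannot establish it.)

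The fix is to drop the fidelity step. The refined states are block diagonal in $\mathbf i$, with equal weights $N^{-M}$ and pure blocks. Their Helstrom error is therefore exactly $\frac1{2N^M}\sum_{\mathbf i}\bigl[1-\sqrt{1-4\eta_0\eta_1z_{\mathbf i}}\bigr]$ with $z_{\mathbf i}=\prod_mF_{i_m,\pi(i_m)}$. The map $z\mapsto1-\sqrt{1-4\eta_0\eta_1z}$ is increasing and convex. Jensen then averages the squared overlaps $z_{\mathbf i}$ themselves, whose mean factorizes to $F_\pi^M$, instead of their square roots. Choosing $\pi$ attaining $F_\star\ge x_{\max}$ gives the stated bound; this is the paper's proof. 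Your $M_\epsilon$ computation then goes through unchanged.
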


For every fixed $d\ge2$, there exists a constant $c_d>0$, independent of $t$, such that for each integer $t\ge1$ there exist two finite uniform pure-state $t$-designs of equal cardinality on $\mathbb C^d$ whose Chernoff exponent obeys \(E\ge-\log(1-c_dt^{-2})\ge c_dt^{-2}\). Thus, for every fixed $d$, the supremum of the Chernoff exponent over all such pairs scales as $\Theta_d(t^{-2})$ as $t\to\infty$.

\emph{Symmetric two-component ensembles and optical communication.---}
\label{subsec:symmetric-two-component}Consider two uniform pure-state ensembles
\(
\mathcal E_h=\left\{\left(\frac12,\ketbra*{\psi^{(h)}_1}\right),\left(\frac12,\ketbra*{\psi^{(h)}_2}\right)\right\},
\)
with $h=0,1$ and hypothesis priors \(\eta_0,\eta_1>0\) and \(\eta_0+\eta_1=1\). Suppose that
$\braket*{\psi^{(0)}_1}{\psi^{(0)}_2}=\braket*{\psi^{(1)}_1}{\psi^{(1)}_2}=x\in\mathbb R,~|x|<1$, and $
\braket*{\psi^{(0)}_1}{\psi^{(1)}_1}=\braket*{\psi^{(0)}_2}{\psi^{(1)}_2}=y,~ \braket*{\psi^{(0)}_1}{\psi^{(1)}_2}=\braket*{\psi^{(0)}_2}{\psi^{(1)}_1}=y^\star$. For this symmetry class, Eq.~\eqref{eq:Bayes-exact} admits a closed form Eq.~\eqref{PE_symmetric_two} in Appendix~C, with \(P_{e,\mathrm{opt}}^{(M)}=2\eta_0\eta_1|y|^{2M}[1+o(1)]\) and \(E=-2\log|y|\) in the $M\gg1$ limit.


In a communication-system design, the bit values \(0\) and \(1\) serve only as labels for the two signal states. In binary phase-shift keying (BPSK), one may assign the coherent states \(\ket{\alpha}\) and \(\ket{-\alpha}\) to bits \(0\) and \(1\), respectively, or reverse this assignment. Since exchanging the two bit labels does not change the physical signal constellation, both assignments are represented by the same label-invariant ensemble
\(
\mathcal E_R=\left\{\left(\frac12,\ketbra{\alpha}\right),\left(\frac12,\ketbra{-\alpha}\right)\right\}.
\)
Applying an optical phase shift of \(\pi/2\) to the signal states gives the rotated constellation
\(
\mathcal E_I=\left\{\left(\frac12,\ketbra{-i\alpha}\right),\left(\frac12,\ketbra{i\alpha}\right)\right\}.
\)
Thus, testing between \(\mathcal E_R\) and \(\mathcal E_I\) determines which constellation orientation is implemented while remaining insensitive to an exchange of the bit labels.

Let \(\bar n:=|\alpha|^2\) be the average photon number of each coherent signal. The overlap between the two states within either constellation is
\(
\left|\braket{\alpha}{-\alpha}\right|^2=\left|\braket{-i\alpha}{i\alpha}\right|^2=e^{-4\bar n},
\)
whereas every overlap between a state in \(\mathcal E_R\) and a state in \(\mathcal E_I\) has squared modulus
\(
\left|\braket{\pm\alpha}{\pm i\alpha}\right|^2
=\left|\braket{\pm\alpha}{\mp i\alpha}\right|^2
=e^{-2\bar n}.
\)
With the displayed ordering of the states in \(\mathcal E_I\), the symmetry parameters are
\(
x=e^{-2\bar n},\qquad y=e^{-(1+i)\bar n}
\)
and the corresponding minimum Bayes error is given by Eq.~\eqref{eq:coherent-BPSK-error} shown in Appendix~C with the above $x,y$ values.
In the $M\gg1$ limit, the Chernoff exponent
\(
E=-2\log|y|=2|\alpha|^2.
\)

\emph{Two common pure component states.---}
Denote an ensemble of pure states as
$\mathcal E_p=\{(p,\rho_1),(1-p,\rho_2)\}$, where
$\rho_1=|\psi_1\rangle\langle\psi_1|$, $\rho_2=|\psi_2\rangle\langle\psi_2|$ are pure states with overlap $c=|\langle\psi_1|\psi_2\rangle|^2$.
We consider two hypotheses with the same pure component states but different probabilities:
\begin{align}
H_0: \mathcal E=\mathcal E_p; H_1:  \mathcal E =\mathcal E_q.
\end{align}
The exact equal-prior Bayes error obtained from Eq.~\eqref{eq:Bayes-exact} can be solved in a closed form that sums over $M/2+1$ terms (see Eq.~\eqref{eq:two-state-common-pure-bayesm} of Appendix~C).

\begin{figure}
    \centering
    \includegraphics[width=\linewidth]{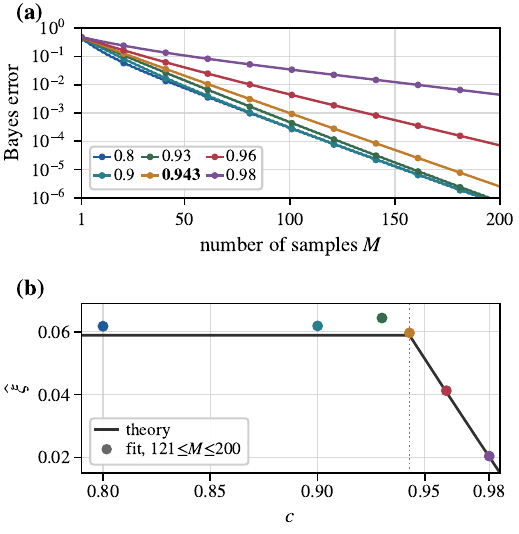}
    \caption{Results for the swapped-probability ensembles with $p=1/3$. (a) Exact Bayes error for several state overlaps $c=\lvert\langle\psi_1|\psi_2\rangle\rvert^2$ as the number of samples $M$ increases. (b) The theoretical exponent $\xi(c)=\min{\log[3/(2\sqrt{2})],-\log c}$ and finite-$M$ estimates $\widehat{\xi}$ fitted over $121\leq M\leq200$. The dotted line marks the transition at $c_*=2\sqrt{2}/3=0.943$.}
    \label{fig: results in testing swap two common}
\end{figure}


For this family, the Chernoff upper bound in Theorem~\ref{thm:fixed-N-chernoff-bounds-main} is tight. Define the Bernoulli Chernoff information as $\xi_{\mathrm B}(u,v)=-\log\min_{0\le s\le1}[u^sv^{1-s}+(1-u)^s(1-v)^{1-s}]$. We have
\begin{equation}
E=\xi_{\mathrm{acr}} = \min\{\xi_{\mathrm B}(p,q),-\log c+\xi_{\mathrm B}(p,1-q)\}.
\label{eq:two-common-pure-exponent}
\end{equation}

A simple subcase is the swapped-probability ensemble, with $q=1-p$.
For this case,
\(
E=\min\{\xi_{\mathrm B}(p,1-p),-\log c\}.
\)
The two terms correspond to the classical distinguishability of the probability distributions and the quantum distinguishability of the component states, respectively. As shown in Fig.~\ref{fig: results in testing swap two common}, taking $p=1/3$ the error
exponent undergoes a transition as the overlap parameter $c$ crosses
the critical value, revealing that the testing is controlled by both classical probabilities and quantum-state overlaps. 

\emph{Discussions.---} In this work, we focus primarily on the random-access model, in which each query returns $\rho_i$ with probability $p_i$. One may also consider a controlled-access model, in which the tester can choose the labels to query while the corresponding states remain unknown. This models ensembles from the embeddings of classical data into quantum states for quantum machine learning~\cite{schuld2019quantum,havlicek2019supervised,lloyd2020quantum,huang2021power}.
Our analysis extends directly to this setting by optimizing over $\Pi\in\mathrm{Part}([M])$ so as to minimize the Bayes error in Eq.~\eqref{eq:Bayes-exact}.

Beyond the asymptotic regime $M\to\infty$, one may also study the sample complexity, namely, the number of samples required to distinguish two ensembles with a fixed constant error probability. For uniform pure-state ensembles, techniques similar to those in Ref.~\cite{yao2026hierarchydiscriminativepowercomplexity} yield a sample complexity that scales as $N^{1-1/k}$ as $N$ increases, where $k$ is the lowest moment order at which the two ensembles differ.

\begin{acknowledgements}
J.Y. and Q.Z. acknowledge support from Office of Naval Research (N00014-23-1-2296, MURI N000142612102), DARPA (HR0011-24-9-0362,HR00112490453,D24AC00153-02), NSF (OMA-2326746, 2350153, CCF-2240641), AFOSR MURI FA9550-24-1-0349, Halliburton Company and an unrestricted gift from Google. OpenAI ChatGPT (GPT-5.6) is used to assist the analyzes and authors have verified the results. The authors take full responsibility for the contents of this work.
\end{acknowledgements}



%

\

{\em Appendix A: Moment representation of the quantum ensemble state.---}\label{app:A moment and mobuis}
Given an ensemble of states, one can define the power-weighted moment operator
\begin{equation}
\mathcal{K}^{(r)}
:=
\sum_{i=1}^N p_i^r \rho_i^{\otimes r}.
\label{eq:power-moment}
\end{equation}
For uniform ensembles, the power-weighted moments go back to those adopted in defining $t$-design
$
\mathcal{K}^{(r)}
=
N^{1-r}\calM^{(r)},
$
where the moment operator
$\calM^{(r)}:=\expval{\rho^{\otimes r}}$.
 
We define \(R\succeq\Pi\) if \(R\) is a coarsening of \(\Pi\), namely if every block \(D\in R\) is a union of blocks of \(\Pi\). For example, $\Pi^\prime=\{B_1,B_2B_3\}$ is the coarsening of the partition $\Pi=\{B_1,B_2,B_3\}$, $\Pi^\prime\succeq\i $. Then, we present the following theorem (see Appendix~\ref{app:mobius-proof} for a proof).
\begin{theorem}[M\"obius moment expansion]
\label{thm:mobius}
For every observed branch \(\Pi\in\mathrm{Part}([M])\),
\begin{equation}
\Omega_{\Pi}
=
\sum_{R\succeq\Pi}
\mu(\Pi,R) \bigotimes_{D\in R} \mathcal{K}^{(|D|)},
\label{eq:Mobius-general}
\end{equation}
where the M\"obius coefficient \cite{Rota1964OnTFmobius} between \(\Pi\) and \(R\) is
\(
\mu(\Pi,R)
=
\prod_{D\in R}
(-1)^{m_D-1}(m_D-1)! .
\label{eq:mobius-coeff}
\)
Here \(m_D\) is the number of blocks of \(\Pi\) contained in the block \(D\in R\). 
For uniform ensembles,
\begin{equation}
\Omega_{\Pi}
=
\sum_{R\succeq\Pi}
\mu(\Pi,R)N^{|R|-M} \bigotimes_{D\in R} \calM^{(|D|)} .
\label{eq:Mobius-uniform}
\end{equation}
\end{theorem}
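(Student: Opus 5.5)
The plan is to prove the expansion by M\"obius inversion on the partition lattice of $[M]$, ordered by coarsening. For every $R\in\Part([M])$ define the ``coarse'' operator
\[
\Phi_R:=\sum_{\mathbf i\in[N]^M:\ \eqp(\mathbf i)\succeq R}\Big[\prod_{t=1}^M p_{i_t}\Big]\bigotimes_{t=1}^M\rho_{i_t}.
\]
The constraint $\eqp(\mathbf i)\succeq R$ means that $\mathbf i$ is constant on every block $D\in R$; different blocks may or may not share a value.

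\textbf{Step 1: factorization of $\Phi_R$.} Assigning a free index $j_D\in[N]$ to each block $D\in R$, the sum factorizes over blocks. Each block contributes $\sum_{j}p_j^{|D|}\rho_j^{\otimes |D|}=\mathcal K^{(|D|)}$, placed on the tensor positions in $D$. Hence $\Phi_R=\bigotimes_{D\in R}\mathcal K^{(|D|)}$, where we adopt the paper's convention that the tensor factors sit on the positions of $D$. The same convention is implicit in Eq.~\eqref{eq:Mobius-general}.

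\textbf{Step 2: zeta relation.} Every $\mathbf i$ has a unique equality pattern $\eqp(\mathbf i)$. Therefore $\Phi_R=\sum_{\Pi\succeq R}\Omega_\Pi$, where $\Omega_\Pi$ for a pattern with more than $N$ blocks is the empty sum, i.e.\ zero. Since the identity holds for every $R$ in the finite lattice, M\"obius inversion (from above) gives
\[
\Omega_\Pi=\sum_{R\succeq\Pi}\mu(\Pi,R)\,\Phi_R .
\]
Substituting Step 1 then yields Eq.~\eqref{eq:Mobius-general}.

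\textbf{Step 3: the M\"obius function.} The interval $[\Pi,\hat 1]$ above $\Pi$ (with $|\Pi|=m$ blocks) is isomorphic to $\Part([m])$, obtained by merging blocks. A subinterval $[\Pi,R]$ is isomorphic to $\prod_{D\in R}\Part([m_D])$. The M\"obius function is multiplicative over products, and on the full partition lattice it satisfies $\mu(\hat0,\hat1)=(-1)^{m-1}(m-1)!$, which is Rota's classical result. Together these give $\mu(\Pi,R)=\prod_{D\in R}(-1)^{m_D-1}(m_D-1)!$.

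I would cite the classical value from Rota. If a self-contained argument is wanted, one can instead verify it through the exponential formula: $\sum_{R}\mu(\hat0,R)\,x^{\underline{|R|}}$ counts injective colorings, i.e.\ $\Omega_{\hat 0}$ for a uniform ``ensemble'' of $x$ colors. This identity is easily checked, and it pins down the coefficients. This step is the only nontrivial combinatorial input, and it is the main obstacle if one insists on a proof from scratch.

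\textbf{Step 4: uniform case.} Substituting $\mathcal K^{(r)}=N^{1-r}\calM^{(r)}$ gives a product of factors $N^{1-|D|}$ over $D\in R$. Since the block sizes sum to $M$, this product equals $N^{|R|-M}$, which is Eq.~\eqref{eq:Mobius-uniform}.

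As a sanity check, the example $\Pi=\{\{1,2\},\{3\}\}$ has the two coarsenings $\Pi$ and $\hat1$. With $\mu(\Pi,\hat1)=-1$, the formula reproduces $\mathcal K^{(2)}\otimes\mathcal K^{(1)}-\mathcal K^{(3)}$.
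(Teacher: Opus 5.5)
Your proposal is correct and is essentially the paper's proof: the same diagonal sums $H_R$ (your $\Phi_R$), zeta relation, M\"obius inversion from above, block factorization into $\mathcal K^{(|D|)}$, and interval decomposition $[\Pi,R]\cong\prod_{D\in R}\Part([m_D])$ with Rota's value $(-1)^{m-1}(m-1)!$. There is one minor slip in your optional aside, which the main argument does not rely on: the counting identity should read $\sum_R\mu(\hat0,R)\,x^{|R|}=x^{\underline M}$, not $x^{\underline{|R|}}$; it determines $\mu(\hat0,\hat1)=(-1)^{M-1}(M-1)!$ through the coefficient of $x^1$, to which only $R=\hat1$ contributes.
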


In the above theorem, each factor \(\mathcal{K}^{(|D|)}\) or \(M^{(|D|)}\) is understood to act on the sample registers indexed by the block \(D\). 

Applying Theorem~\ref{thm:mobius} yields explicit moment expansions for larger \(M\). For \(M=3\), first consider the branch \(\Pi=\{\{1,2\},\{3\}\}\), in which the first two samples have the same label and the third has a different label. Its only coarsenings are \(R_0=\{\{1,2\},\{3\}\}\) and \(R_1=\{\{1,2,3\}\}\), with M\"obius coefficients \(1\) and \(-1\), respectively. Hence
\begin{align}
\Omega_{\{\{1,2\},\{3\}\}}&=\sum_{i\ne j}p_i^2p_j\,\rho_i^{(1)}\otimes\rho_i^{(2)}\otimes\rho_j^{(3)}\notag\\
&=\mathcal{K}^{(2)}_{\{1,2\}}\otimes\mathcal{K}^{(1)}_{\{3\}}-\mathcal{K}^{(3)}_{\{1,2,3\}},
\end{align}
where the subscript \(B\) in \(\mathcal{K}^{(r)}_B\) specifies the sample registers on which the moment operator acts.

For the branch \(\Pi=\{\{1\},\{2\},\{3\}\}\), in which all three labels are distinct, the coarsening formula gives
\begin{align}
&\Omega_{\{1|2|3\}}
\nonumber
\\
=&\mathcal{K}^{(1)}_{\{1\}}\otimes\mathcal{K}^{(1)}_{\{2\}}\otimes\mathcal{K}^{(1)}_{\{3\}}-\mathcal{K}^{(2)}_{\{1,2\}}\otimes\mathcal{K}^{(1)}_{\{3\}}-\mathcal{K}^{(2)}_{\{1,3\}}\otimes\mathcal{K}^{(1)}_{\{2\}}\notag\\
&-\mathcal{K}^{(2)}_{\{2,3\}}\otimes\mathcal{K}^{(1)}_{\{1\}}+2\mathcal{K}^{(3)}_{\{1,2,3\}}.
\end{align}
Equivalently,
\[
\Omega_{\{1|2|3\}}=\sum_{\substack{i,j,k\\\mathrm{all\ distinct}}}p_ip_jp_k\,\rho_i^{(1)}\otimes\rho_j^{(2)}\otimes\rho_k^{(3)}.
\]
The coefficient \(2\) in the last term is the M\"obius coefficient \(\mu(\{\{1\},\{2\},\{3\}\},\{\{1,2,3\}\})=2\).

Theorem~\ref{thm:mobius} also allows the optimal error probability to be evaluated directly from the ensemble moments. For uniform ensembles with equal priors, substituting Eq.~\eqref{eq:Mobius-uniform} into Eq.~\eqref{eq:Bayes-exact} gives the following exact expression. Define
\begin{equation}
\Xi_\Pi^{(M)}:=\sum_{R\succeq\Pi}\mu(\Pi,R)N^{|R|-M}\left(M_0^R-M_1^R\right).
\label{eq:Xi-def}
\end{equation}
Then
\begin{equation}
P_{e,\mathrm{opt}}^{(M)}=\frac12\left[1-\frac12\sum_{\Pi\in\Part([M])}\norm{\Xi_\Pi^{(M)}}_1\right].
\label{eq:Bayes-moment-lattice}
\end{equation}
Thus, the optimal Bayes error is exactly a trace-norm functional of the full moment hierarchy up to order \(M\), rather than of any single moment.

{\em Appendix B: Connection to Wasserstein distance.---}\label{app:B Wasserstein an exponent} To relate the Chernoff exponent to the geometry of quantum ensembles, recall that the Wasserstein distance used in this work is the optimal-transport cost between \(\calE_0=\{(p_i^{(0)},\rho_i^{(0)})\}_{i=1}^N\) and \(\calE_1=\{(p_j^{(1)},\rho_j^{(1)})\}_{j=1}^N\), computed by the linear program \cite{Wasserstein_10.1561/2200000073}
\[
W(\calE_0,\calE_1):=\min_{\gamma\in\Gamma(p^{(0)},p^{(1)})}\sum_{i,j=1}^N\gamma_{ij}C_{ij},
\]
where
\[
\Gamma(p^{(0)},p^{(1)}):=\left\{\gamma\in\mathbb R_+^{N\times N}:\sum_{j=1}^N\gamma_{ij}=p_i^{(0)},\ \sum_{i=1}^N\gamma_{ij}=p_j^{(1)}\right\}
\]
is the set of transport plans with marginals \(p^{(0)}\) and \(p^{(1)}\). For pure-state ensembles, we use the fidelity cost \(C_{ij}:=1-\Tr(\rho_i^{(0)}\rho_j^{(1)})\). Under the natural identification of ensembles as weighted measures on quantum states, this Wasserstein distance has full discriminative power, namely \(W(\calE_0,\calE_1)=0\) if and only if \(\calE_0=\calE_1\) \cite{yao2026hierarchydiscriminativepowercomplexity}.

For finite uniform pure-state ensembles of equal cardinality, the Birkhoff--von Neumann theorem gives
\[
W(\calE_0,\calE_1)=\min_{\pi\in S_N}\frac1N\sum_{i=1}^NC_{i,\pi(i)}=1-F_\star,
\]
where \(F_\star:=\max_{\pi\in S_N}N^{-1}\sum_{i=1}^N\Tr(\rho_i^{(0)}\rho_{\pi(i)}^{(1)})\). Since \(E=\xi_{\mathrm{acr}}=-\log F_\star\) for this class of ensembles, we obtain
\begin{equation}
    E=\xi_{\mathrm{acr}}=-\log\!\left[1-W(\calE_0,\calE_1)\right],
\end{equation}
or equivalently \(W(\calE_0,\calE_1)=1-e^{-E}=1-e^{-\xi_{\mathrm{acr}}}\). This relation is physically natural: a smaller optimal transport cost corresponds to a closer componentwise matching and hence a smaller discrimination exponent.


{\em Appendix C: Exact formula for the examples---} For the testing between $\calE_Z$ and $\calE_X$, the exact Bayes error is
\begin{equation}
P_{e,{\rm opt}}^{(M)}
=
\begin{cases}
\dfrac14\left[1-\sqrt{1-4\eta_Z\eta_X\,2^{2-M}}\right],
& M\ {\rm even}, \\[1.2ex]
\dfrac12\left[1-\sqrt{1-4\eta_Z\eta_X\,2^{1-M}}\right],
& M\ {\rm odd}.
\end{cases}
\label{eq:ZX-error}
\end{equation}

For testing between symmetric two-component ensembles, the exact Bayes err or is
\begin{equation}
P_{e,\mathrm{opt}}^{(M)}
=\frac12\Bigg[1 -\sum_{s=\pm1}\frac{1+s x^M}{2}\sqrt{1-4\eta_0\eta_1\left|\frac{y^M+s(y^\star)^M}{1+sx^M}\right|^2}\Bigg].
\label{PE_symmetric_two}
\end{equation}
In the example of optical communication system. The exact Bayes error is
\begin{equation}
\begin{aligned}
P_{e,\mathrm{opt}}^{(M)}
=\frac12\Bigg[1
&-\frac{1+e^{-2z}}{2}
\sqrt{1-4\eta_0\eta_1\frac{\cos^2z}{\cosh^2z}}\\
&-\frac{1-e^{-2z}}{2}
\sqrt{1-4\eta_0\eta_1\frac{\sin^2z}{\sinh^2z}}
\Bigg].
\end{aligned}
\label{eq:coherent-BPSK-error}
\end{equation}
For testing between ensembles with two common pure component states, the exact Bayes error is
\begin{equation}
P_{e,\mathrm{opt}}^{(M)}
=
\frac12\left[
1-\frac12\sum_{r=0}^{\lfloor M/2\rfloor}
\nu_{M,r}\Phi_{c^M}(A_{r,M-r},B_{r,M-r})
\right],
\label{eq:two-state-common-pure-bayesm}
\end{equation}
where $A_{r,M-r}=p^r(1-p)^{M-r}-q^r(1-q)^{M-r}$, $B_{r,M-r}=(1-p)^rp^{M-r}-(1-q)^rq^{M-r}$, and
\[
\Phi_z(A,B)=
\begin{cases}
|A+B|,&AB\ge0,\\
\sqrt{(A-B)^2+4ABz},&AB<0.
\end{cases}
\]
Here $\nu_{M,r}$ is defined by
\[
\nu_{M,r}:=\begin{cases}1,&r=0,\\[1mm]\binom Mr,&1\le r<M/2,\\[1mm]\frac12\binom M{M/2},&r=M/2\text{ and }M\text{ is even}.\end{cases}
\]

\clearpage
\onecolumngrid
\appendix

\tableofcontents

\section{M\"obius expansion and proof of Theorem~\ref{thm:mobius}}
\label{app:mobius-proof}

For \(\mathbf i=(i_1,\ldots,i_M)\in[N]^M\), define
\begin{equation}
X(\mathbf i):=\left(\prod_{t=1}^Mp_{i_t}\right)\bigotimes_{t=1}^M\rho_{i_t}^{(t)}.
\label{eq:sequence-state}
\end{equation}
For \(S,R\in\Part([M])\), define the exact equality-pattern sum and the diagonal sum, respectively, by
\begin{equation}
G_S:=\sum_{\mathbf i:\eqp(\mathbf i)=S}X(\mathbf i),\qquad H_R:=\sum_{\mathbf i:\eqp(\mathbf i)\succeq R}X(\mathbf i).
\label{eq:exact-and-diagonal-sums}
\end{equation}
In particular, \(\Omega_\Pi=G_\Pi\). We use \(S\succeq R\) to mean that \(S\) is coarser than or equal to \(R\), and \(S\succ R\) for strict coarsening. For example, if \(M=3\), \(R=\{\{1\},\{2\},\{3\}\}=\hat 0\), and \(S=\{\{1,2\},\{3\}\}\), then \(S\succ R\) because every block of \(R\) is contained in a block of \(S\). Since every sequence \(\mathbf i\) has a unique equality partition,
\begin{equation}
H_R=\sum_{S\succeq R}G_S.
\label{eq:zeta-transform}
\end{equation}
Equation~\eqref{eq:zeta-transform} is the zeta transform on the finite poset \(\Part([M])\). M\"obius inversion \cite{Rota1964OnTFmobius} therefore gives
\begin{equation}
G_\Pi=\sum_{R\succeq\Pi}\mu(\Pi,R)H_R.
\label{eq:mobius-inversion-G}
\end{equation}

We next identify \(H_R\). Let \(R=\{D_1,\ldots,D_\ell\}\). The condition \(\eqp(\mathbf i)\succeq R\) requires all positions within each block \(D\in R\) to carry the same intrinsic component label, while labels assigned to different blocks are summed independently and need not be distinct. By contrast, if \(\eqp(\mathbf i)=R\), distinct blocks must carry distinct labels. Hence
\begin{align}
H_R&=\sum_{(j_D)_{D\in R}\in[N]^{|R|}}\left(\prod_{D\in R}p_{j_D}^{|D|}\right)\bigotimes_{D\in R}\bigotimes_{t\in D}\rho_{j_D}^{(t)}\notag\\
&=\bigotimes_{D\in R}\left(\sum_{j=1}^Np_j^{|D|}\bigotimes_{t\in D}\rho_j^{(t)}\right)=\bigotimes_{D\in R}\mathcal K^{(|D|)},
\label{eq:diagonal-sum-factorization}
\end{align}
where each moment operator acts on the registers indexed by \(D\). Substituting Eq.~\eqref{eq:diagonal-sum-factorization} into Eq.~\eqref{eq:mobius-inversion-G} and using \(\Omega_\Pi=G_\Pi\) yields
\[
\Omega_\Pi=\sum_{R\succeq\Pi}\mu(\Pi,R)\bigotimes_{D\in R}\mathcal K^{(|D|)},
\]
which is Eq.~\eqref{eq:Mobius-general}.

It remains to determine the M\"obius coefficient explicitly. For \(R\succeq\Pi\) and \(D\in R\), define
\begin{equation}
m_D:=\#\{B\in\Pi:B\subseteq D\},
\label{eq:number-merged-blocks}
\end{equation}
the number of blocks of \(\Pi\) merged into \(D\). The interval \([\Pi,R]\) decomposes into independent partition lattices associated with the blocks of \(R\): for each \(D\in R\), the corresponding factor is the partition lattice of the \(m_D\) blocks of \(\Pi\) contained in \(D\). Thus
\begin{equation}
[\Pi,R]\cong\prod_{D\in R}\Pi_{m_D},
\label{eq:partition-interval-product}
\end{equation}
where \(\Pi_{m_D}\) denotes the partition lattice of a set with \(m_D\) elements. Using multiplicativity of the M\"obius function under direct products and the standard identity \(\mu_{\Pi_m}(\hat 0,\hat 1)=(-1)^{m-1}(m-1)!\) \cite{Rota1964OnTFmobius}, we obtain
\begin{equation}
\mu(\Pi,R)=\prod_{D\in R}(-1)^{m_D-1}(m_D-1)!.
\label{eq:explicit-mobius-coefficient}
\end{equation}
This proves Eq.~\eqref{eq:Mobius-general}.

For a uniform ensemble, \(p_i=1/N\), the power-weighted and ordinary moment operators satisfy
\begin{equation}
\mathcal K^{(|D|)}=N^{1-|D|}\calM^{(|D|)},\qquad \calM^{(|D|)}:=\frac1N\sum_{i=1}^N\bigotimes_{t\in D}\rho_i^{(t)}.
\label{eq:uniform-power-ordinary-relation}
\end{equation}
Therefore,
\begin{equation}
\bigotimes_{D\in R}\mathcal K^{(|D|)}=N^{\sum_{D\in R}(1-|D|)}\bigotimes_{D\in R}\calM^{(|D|)}=N^{|R|-M}\bigotimes_{D\in R}\calM^{(|D|)},
\label{eq:uniform-moment-product}
\end{equation}
where \(\sum_{D\in R}|D|=M\). Substituting Eq.~\eqref{eq:uniform-moment-product} into Eq.~\eqref{eq:Mobius-general} gives
\[
\Omega_\Pi=\sum_{R\succeq\Pi}\mu(\Pi,R)N^{|R|-M}\bigotimes_{D\in R}\calM^{(|D|)},
\]
which is Eq.~\eqref{eq:Mobius-uniform} and completes the proof of Theorem~\ref{thm:mobius}.

\section{Proof of Theorem~\ref{thm:opt} and the classical limit}
\label{sec:proof_theorem1}

\begin{proof}[Proof of Theorem~\ref{thm:opt}]
For each hypothesis \(h\in\{0,1\}\) and equality pattern \(\Pi\in\Part([M])\), recall the unnormalized branch state
\[
\Omega_{h,\Pi}:=\sum_{\substack{\mathbf i\in[N]^M:\\\operatorname{eq}(\mathbf i)=\Pi}}\left(\prod_{t=1}^Mp_{i_t}^{(h)}\right)\bigotimes_{t=1}^M\rho_{i_t}^{(h)},
\]
whose trace \(\Tr\Omega_{h,\Pi}\) is the probability of observing \(\Pi\) under \(H_h\). For a branch-dependent binary POVM \(\{E_{0|\Pi},E_{1|\Pi}\}\), define the Helstrom operator
\begin{equation}
D_\Pi:=\eta_0\Omega_{0,\Pi}-\eta_1\Omega_{1,\Pi}.
\label{eq:branch-Helstrom-operator}
\end{equation}
Using \(E_{1|\Pi}=I-E_{0|\Pi}\), the contribution of branch \(\Pi\) to the unconditional Bayes error is
\begin{equation}
R_\Pi(E_{0|\Pi}):=\eta_0\Tr(E_{1|\Pi}\Omega_{0,\Pi})+\eta_1\Tr(E_{0|\Pi}\Omega_{1,\Pi})=\eta_0\Tr\Omega_{0,\Pi}-\Tr(E_{0|\Pi}D_\Pi).
\label{eq:branch-Bayes-risk}
\end{equation}
Thus, minimizing the branch error is equivalent to maximizing \(\Tr(E_{0|\Pi}D_\Pi)\) over all effects \(0\le E_{0|\Pi}\le I\).

Let \(D_\Pi=\sum_j\lambda_j\ket{j}\!\bra{j}\) be a spectral decomposition. For any effect \(0\le E\le I\),
\[
\Tr(ED_\Pi)=\sum_j\lambda_j\langle j|E|j\rangle\le\sum_{\lambda_j>0}\lambda_j=\Tr D_{\Pi,+},
\]
where \(D_{\Pi,+}\) is the positive part of \(D_\Pi\). Equality is attained by \(E=\Pi_+(D_\Pi)\), the projector onto the positive eigenspace of \(D_\Pi\). Hence
\[
E_{0|\Pi}^{\mathrm{opt}}=\Pi_+(D_\Pi),\qquad E_{1|\Pi}^{\mathrm{opt}}=I-\Pi_+(D_\Pi).
\]
Using \(\Tr D_{\Pi,+}=\frac12(\Tr D_\Pi+\|D_\Pi\|_1)\), the optimal branch error is
\begin{equation}
R_\Pi^{\mathrm{opt}}=\frac12\left[\eta_0\Tr\Omega_{0,\Pi}+\eta_1\Tr\Omega_{1,\Pi}-\|D_\Pi\|_1\right].
\label{eq:optimal-branch-Bayes-risk}
\end{equation}
Finally, since \(\sum_{\Pi\in\Part([M])}\Tr\Omega_{h,\Pi}=1\) for \(h=0,1\), summing Eq.~\eqref{eq:optimal-branch-Bayes-risk} over all equality patterns gives
\[
P_{e,\mathrm{opt}}^{(M)}=\sum_{\Pi\in\Part([M])}R_\Pi^{\mathrm{opt}}=\frac12\left[1-\sum_{\Pi\in\Part([M])}\|\eta_0\Omega_{0,\Pi}-\eta_1\Omega_{1,\Pi}\|_1\right],
\]
which proves Eq.~\eqref{eq:Bayes-exact}.
\end{proof}

We formally write down the classical limit as a corollary, then provide the prove.
\begin{corollary}[Classical limit]
    In classical limit, the exact Bayes error is
    \begin{equation}
    P_{e,{\rm opt},c}^{(M)} = \frac12\left[1-\|\eta_0[{\bm p}^{(0)}]^{\otimes M}-\eta_1[{\bm p}^{(1)}]^{\otimes M}\|_1\right] ,
    \label{eq:Bayes-exact-classical-supp}
    \end{equation}
    which recovers the well-knwon classical formula, where ${\bm p}^{\otimes M}$ is the tensor product of probability.\label{cor:classical limit Bayes error}
\end{corollary}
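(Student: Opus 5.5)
The plan is to specialize Theorem~\ref{thm:opt} to the case in which both hypotheses use the same orthonormal basis, $\rho_i^{(0)}=\rho_i^{(1)}=\ketbra{i}$, and to show that every branch Helstrom operator $D_\Pi=\eta_0\Omega_{0,\Pi}-\eta_1\Omega_{1,\Pi}$ is diagonal in the computational product basis. Then the trace norms in Eq.~\eqref{eq:Bayes-exact} become sums of absolute values of classical probabilities. Summing over all equality patterns should reassemble the $\ell_1$ distance between the $M$-fold product distributions.

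\textbf{Step 1: each branch state is diagonal.} For a sequence $\mathbf i\in[N]^M$, write $\ket{\mathbf i}:=\ket{i_1}\otimes\cdots\otimes\ket{i_M}$. Directly from the definition Eq.~\eqref{eq:Omega-def},
\[
\Omega_{h,\Pi}=\sum_{\mathbf i:\eqp(\mathbf i)=\Pi}[\bm p^{(h)}]^{\otimes M}(\mathbf i)\,\ketbra{\mathbf i}.
\]
The vectors $\ket{\mathbf i}$ are orthonormal for distinct sequences. Hence $D_\Pi$ is diagonal with eigenvalues $\eta_0[\bm p^{(0)}]^{\otimes M}(\mathbf i)-\eta_1[\bm p^{(1)}]^{\otimes M}(\mathbf i)$ on the sequences with $\eqp(\mathbf i)=\Pi$, and eigenvalue zero elsewhere. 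Therefore
\[
\|D_\Pi\|_1=\sum_{\mathbf i:\eqp(\mathbf i)=\Pi}\Big|\eta_0[\bm p^{(0)}]^{\otimes M}(\mathbf i)-\eta_1[\bm p^{(1)}]^{\otimes M}(\mathbf i)\Big|.
\]

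\textbf{Step 2: sum over patterns.} Every $\mathbf i\in[N]^M$ has exactly one equality pattern. This is the same fact used for Eq.~\eqref{eq:zeta-transform}: the sets $\{\mathbf i:\eqp(\mathbf i)=\Pi\}$ partition $[N]^M$. Summing the Step~1 expression over $\Pi\in\Part([M])$ therefore gives $\sum_{\mathbf i}|\eta_0[\bm p^{(0)}]^{\otimes M}(\mathbf i)-\eta_1[\bm p^{(1)}]^{\otimes M}(\mathbf i)|$. This is exactly $\|\eta_0[\bm p^{(0)}]^{\otimes M}-\eta_1[\bm p^{(1)}]^{\otimes M}\|_1$, and inserting it into Eq.~\eqref{eq:Bayes-exact} yields Eq.~\eqref{eq:Bayes-exact-classical-supp}.

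\textbf{Main obstacle: bookkeeping of labels and supports.} The ensembles are unordered, and the definition requires $p_i>0$. If the two ensembles are supported on different subsets or numbers of basis states, I would index both by the union of their basis vectors and assign zero weight to absent states. Zero-weight terms contribute nothing to Eq.~\eqref{eq:Omega-def} or to the product distributions, so the argument goes through unchanged. Because each state $\ketbra{i}$ is tied to a fixed basis vector, the intrinsic index is physically meaningful here and relabeling ambiguity does not arise.

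It is also worth remarking why discarding the label identities loses nothing in this limit. The quantum register already reveals $\mathbf i$ through a computational-basis measurement, and that measurement determines $\Pi$ as well.
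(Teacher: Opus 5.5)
Your proposal is correct and follows essentially the same route as the paper. Both show that each branch operator $\eta_0\Omega_{0,\Pi}-\eta_1\Omega_{1,\Pi}$ is diagonal in the common product basis, and both then use the fact that the equality-pattern classes $\{\mathbf i:\operatorname{eq}(\mathbf i)=\Pi\}$ partition $[N]^M$ to reassemble the $\ell_1$ norm of $\eta_0[\bm p^{(0)}]^{\otimes M}-\eta_1[\bm p^{(1)}]^{\otimes M}$. Your extra step of padding with zero-weight components is a harmless generalization that the paper's setting, with both ensembles indexed by the same basis, does not need.
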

\begin{proof}[Proof of Corollary~\ref{cor:classical limit Bayes error}]
In the classical limit, the component states under both hypotheses are the same mutually orthogonal states, \(\rho_i^{(0)}=\rho_i^{(1)}=\ket{i}\!\bra{i}\), where \(\{\ket{i}\}_{i=1}^N\) is an orthonormal basis. Define \(\mathbf p^{(h)}:=(p_1^{(h)},\ldots,p_N^{(h)})\), \(\ket{\mathbf i}:=\bigotimes_{t=1}^M\ket{i_t}\), and \([\mathbf p^{(h)}]_{\mathbf i}^{\otimes M}:=\prod_{t=1}^Mp_{i_t}^{(h)}\). The branch states are diagonal in the common product basis:
\begin{equation}
\Omega_{h,\Pi}=\sum_{\substack{\mathbf i\in[N]^M:\\\operatorname{eq}(\mathbf i)=\Pi}}[\mathbf p^{(h)}]_{\mathbf i}^{\otimes M}\ket{\mathbf i}\!\bra{\mathbf i}.
\label{eq:classical-branch-state}
\end{equation}
Consequently,
\begin{equation}
\|\eta_0\Omega_{0,\Pi}-\eta_1\Omega_{1,\Pi}\|_1=\sum_{\substack{\mathbf i\in[N]^M:\\\operatorname{eq}(\mathbf i)=\Pi}}\left|\eta_0[\mathbf p^{(0)}]_{\mathbf i}^{\otimes M}-\eta_1[\mathbf p^{(1)}]_{\mathbf i}^{\otimes M}\right|.
\label{eq:classical-branch-trace-norm}
\end{equation}
Since the sets \(\{\mathbf i\in[N]^M:\operatorname{eq}(\mathbf i)=\Pi\}\), \(\Pi\in\Part([M])\), form a partition of \([N]^M\), summing Eq.~\eqref{eq:classical-branch-trace-norm} over \(\Pi\) yields
\begin{equation}
\sum_{\Pi\in\Part([M])}\|\eta_0\Omega_{0,\Pi}-\eta_1\Omega_{1,\Pi}\|_1=\left\|\eta_0[\mathbf p^{(0)}]^{\otimes M}-\eta_1[\mathbf p^{(1)}]^{\otimes M}\right\|_1.
\label{eq:classical-total-variation-reduction}
\end{equation}
Substituting Eq.~\eqref{eq:classical-total-variation-reduction} into Eq.~\eqref{eq:Bayes-exact} gives
\[
P_{e,\mathrm{opt},c}^{(M)}=\frac12\left[1-\left\|\eta_0[\mathbf p^{(0)}]^{\otimes M}-\eta_1[\mathbf p^{(1)}]^{\otimes M}\right\|_1\right],
\]
which is the standard classical Bayes error.
\end{proof}

\section{Proof of Theorem~\ref{thm:fixed-N-chernoff-bounds-main}}
\label{app:proof_theorem2}
We recall the definitions needed and results here. In the many-sample regime, we consider the error exponent \(E:=\lim_{M\to\infty}-\log\left( P_{e,\mathrm{opt}}^{(M)}\right)/M\). Writing \(Q_s(A,B):=\Tr(A^sB^{1-s})\), define the fixed-permutation Chernoff information and the corresponding across-hypothesis and in-hypothesis permutation bottlenecks by
\begin{align}
\xi_\pi^{h,h'}&:=-\log\min_{0\le s\le1}\sum_{i=1}^N Q_s\!\left(p_i^{(h)}\rho_i^{(h)},p_{\pi(i)}^{(h')}\rho_{\pi(i)}^{(h')}\right),\notag\\
\xi_{\mathrm{acr}}&:=\min_{\pi\in S_N}\xi_\pi^{0,1},\qquad \xi_{\mathrm{in}}:=\min_{h\in\{0,1\}}\min_{\pi\in S_N\setminus\{\mathrm{id}\}}\xi_\pi^{h,h}.
\label{eq:generalized-permutation-chernoff}
\end{align}

\begin{theorem}[Asymptotic Chernoff bounds for optimal Bayes error]
\label{thm:fixed-N-chernoff-bounds}
The error exponent $E$ has the following lower and upper bounds
\begin{equation}
    \min\{\xi_{\mathrm{in}},\xi_{\mathrm{acr}}\}\le E\le \xi_{\mathrm{acr}}.
\end{equation}

For testing between finite uniform pure state ensembles, we have
$
        E = \xi_{\text{acr}}.
$

Moreover, for finite uniform pure-state ensembles of equal cardinality, with fidelity cost
$C_{ij}=1-\Tr(\rho_i^{(0)}\rho_j^{(1)})$, the across-hypothesis Chernoff exponent and the Wasserstein distance satisfy
\[
    \xi_{\mathrm{acr}}
    =-\log\!\left[1-W(\mathcal E_0,\mathcal E_1)\right],
\]
or equivalently,
$W(\mathcal E_0,\mathcal E_1)=1-e^{-\xi_{\mathrm{acr}}}$.
\end{theorem}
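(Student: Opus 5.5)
We prove the three claims separately, using Theorem~\ref{thm:opt} to write the error as a sum of branch contributions.

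\textbf{Lower bound (achievability).} Start from Theorem~\ref{thm:opt}: $P_{e,\mathrm{opt}}^{(M)}=\sum_\Pi R_\Pi^{\rm opt}$. Bound each branch by the quantum Chernoff inequality of Audenaert et al., $\frac12[\Tr A+\Tr B-\|A-B\|_1]\le \Tr A^sB^{1-s}$ for positive $A,B$ and $s\in[0,1]$. The branch operators $\Omega_{h,\Pi}$ are sums over distinct-label assignments, so they are not tensor products. We therefore do not apply the inequality to $\Omega$ directly. Instead we compare the full classical-quantum picture, which is symmetrized over the hidden relabeling.

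Concretely, the label-invariant model is equivalent to a hidden uniformly random permutation $\sigma\in S_N$ applied to the intrinsic labels, with the observed labels made visible. Conditioned on $\sigma$ and the hypothesis, the $M$ samples form the product cq-state $\big(\sum_i p_i|\sigma(i)\rangle\langle\sigma(i)|\otimes\rho_i\big)^{\otimes M}$. Averaging over $\sigma$ and tracing to the equality pattern recovers $\Omega_{h,\Pi}$ up to the orbit factor. The problem is thus a composite test between $2N!$ product hypotheses $\sigma_{h}^{\otimes M}$ with uniform sub-priors. The optimal error for a multi-hypothesis problem with finitely many product hypotheses is bounded above by the sum of pairwise Chernoff bounds. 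Pairs within the same hypothesis $h$ need not be distinguished, but the pairwise bound for merging hypotheses gives a weaker estimate.

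The tool we use is the union-type bound $P_e\le\sum_{x\neq y}$ pairwise Helstrom errors, from Li's multiple quantum Chernoff result, applied to the $2N!$ product states. This yields exponent $\min$ over all pairs: across-pairs $(\sigma,h=0)$ versus $(\sigma',h=1)$ give $\xi^{0,1}_{\sigma'^{-1}\sigma}$, and within-pairs give $\xi^{h,h}_\pi$ with $\pi\ne\mathrm{id}$. Here the per-copy Chernoff quantity of cq-states is $\sum_i Q_s(p_i\rho_i,p_{\pi(i)}\rho_{\pi(i)})$, since the classical registers match under the permutation. Decoding the hypothesis from the decoded permutation then gives $E\ge\min\{\xi_{\rm in},\xi_{\rm acr}\}$.

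\textbf{Upper bound (converse).} Give the tester the extra side information $\sigma$ under both hypotheses. This can only decrease the error. For the worst-case matched pair $(\sigma,\sigma')$ realizing $\xi_{\rm acr}$, the error is at least the prior-weighted binary error between two product states. By the quantum Chernoff converse, this has exponent $\xi^{0,1}_\pi$. Formally, we lower bound $P_e$ by $\frac{1}{(N!)^2}$ times the binary error restricted to that pair, which is a constant factor invisible in the exponent. The main care needed is that the branch data reconstructed from the cq-states with $\sigma$ revealed is a refinement of the actual observation, which is straightforward.

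\textbf{Tightness for uniform pure ensembles.} Here $Q_s(\rho_i/N,\rho_{\pi(i)}/N)=N^{-1}\Tr\rho_i\rho_{\pi(i)}$, independent of $s$. Hence $\xi^{h,h'}_\pi=-\log N^{-1}\sum_i F_{i\pi(i)}$. We must show $\xi_{\rm in}\ge\xi_{\rm acr}$. This is the hard step, since it is not obvious for arbitrary ensembles. Our first attempt is to avoid it by a sharper achievability: in the pure case the within-hypothesis ambiguity costs nothing, because $\Omega_{h,\Pi}$ is supported on the symmetric-type span of $\{\psi_{\mathbf i}\}$. We would then bound $\|\eta_0\Omega_0-\eta_1\Omega_1\|_1$ from below using the pure-state identity for the fidelity of the averaged (purified) states, $F\le\sum_{\sigma,\sigma'}|\langle\Psi_\sigma|\Psi'_{\sigma'}\rangle|$. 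Each term is $(N^{-1}\sum_i\langle\psi_i|\psi'_{\pi(i)}\rangle)^M$ in magnitude, which is at most $F_\star^{M/2}$-type terms, giving $E\ge-\log F_\star=\xi_{\rm acr}$. Here we use the pure-state Fuchs–van de Graaf bound $P_e\le\sqrt{\eta_0\eta_1}F$ with purifications that carry the permutation in an ancilla.

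The Wasserstein identity then follows from Birkhoff–von Neumann as in Appendix~B.
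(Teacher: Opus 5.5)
Your two general bounds are fine and follow essentially the paper's route. For $E\ge\min\{\xi_{\mathrm{in}},\xi_{\mathrm{acr}}\}$ you use the hidden-relabeling cq-states $\widetilde\rho_{h,\alpha}$, apply Li's multiple Chernoff theorem to the $2N!$ product hypotheses, and merge outcomes by $h$. For $E\le\xi_{\mathrm{acr}}$ you compare with a fixed-matching i.i.d. test. There, keeping one pair from each mixture at the cost of a constant factor is an acceptable substitute for the paper's data-processing step. The Wasserstein identity is also handled as in the paper. The gap is in the tightness claim $E=\xi_{\mathrm{acr}}$ for uniform pure ensembles.

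\textbf{The fidelity bound loses a factor of two.} Write $F_{ij}=|\langle\psi^{(0)}_i|\psi^{(1)}_j\rangle|^2$ and $F_\star=\max_\pi N^{-1}\sum_iF_{i,\pi(i)}$. The bound $P_e\le\sqrt{\eta_0\eta_1}\,F(\cdot,\cdot)$ on the root fidelity, combined with your Uhlmann mixture estimate, gives
\[
P_{e,\mathrm{opt}}^{(M)}\le\sqrt{\eta_0\eta_1}\,F\bigl(\overline\rho_0^{(M)},\overline\rho_1^{(M)}\bigr)\le\sqrt{\eta_0\eta_1}\sum_{\pi\in S_N}\Bigl(\frac1N\sum_{i=1}^N\sqrt{F_{i,\pi(i)}}\Bigr)^M\le\sqrt{\eta_0\eta_1}\,N!\,F_\star^{M/2}.
\]
This yields only $E\ge-\tfrac12\log F_\star=\tfrac12\xi_{\mathrm{acr}}$. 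Your step ``$F_\star^{M/2}$-type terms, giving $E\ge-\log F_\star$'' silently drops this factor of two.

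The loss cannot be repaired within this approach. Even the intermediate exponent $-\log\max_\pi N^{-1}\sum_i\sqrt{F_{i,\pi(i)}}$ is generally strictly below $\xi_{\mathrm{acr}}$. For $\mathcal E_Z$ versus $\mathcal E_X$ every $F_{ij}=1/2$, so your bound gives $\tfrac12\log2$, while the exact error decays as $2^{-M}$, i.e.\ $E=\log2$.

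The root fidelity is a Bhattacharyya-type quantity. It is quadratically loose for pure-state problems: for two pure states $P_e\approx\eta_0\eta_1|\langle u|v\rangle|^2$, not $|\langle u|v\rangle|$. You also cannot import the pure-state Helstrom formula through purifications. Purifying makes the hypotheses easier to distinguish, so the purified error lower-bounds the true error rather than upper-bounding it.

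\textbf{What is missing: an achievability bound that produces squared overlaps.} The paper uses, in each branch, the one-sided support test $E_{1|\Pi}=P_{1,\Pi}$, the projector onto $\Span\{|\Phi^{(1)}_{\Pi,\alpha}\rangle\}_\alpha$. This test has zero error under $H_1$, which is exactly how your intuition that within-hypothesis ambiguity costs nothing gets realized. Its error under $H_0$ is $\eta_0\sum_\Pi\Tr(\Omega_{0,\Pi}P_{1,\Pi})$.

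A dedicated argument is genuinely needed here, since $\xi_{\mathrm{in}}\ge\xi_{\mathrm{acr}}$ is false in general. Take nearly parallel components in $\mathcal E_0$ and components of $\mathcal E_1$ orthogonal to them: then $\xi_{\mathrm{in}}\approx0$ while $\xi_{\mathrm{acr}}=\infty$.

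The key technical lemma is a uniform frame bound,
\[
T_{1,\Pi}=\sum_\alpha|\Phi^{(1)}_{\Pi,\alpha}\rangle\langle\Phi^{(1)}_{\Pi,\alpha}|\succeq\kappa P_{1,\Pi},
\]
with $\kappa>0$ independent of $M$ and $\Pi$. It is proved by compactness, using that distinct components raised to growing tensor powers become asymptotically orthogonal. This lets you replace $P_{1,\Pi}$ by $\kappa^{-1}T_{1,\Pi}$. The error then becomes a sum of $|\langle\Phi^{(0)}_{\Pi,\alpha}|\Phi^{(1)}_{\Pi,\beta}\rangle|^2=\prod_mF_{i_m,j_m}$. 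Extending the label bijections to permutations bounds this by $\eta_0\kappa^{-1}N!\,F_\star^M$, which gives the full exponent $\xi_{\mathrm{acr}}$.
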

\begin{proof}
    The proof contains four parts, first, we prove the lower bound, second, we prove the upper bound, third, we prove that the upper bound is tight for uniform pure ensembles. Finally we show the relation between Chernoff exponent and Wasserstein distance for uniform pure-state ensembles.

\paragraph{\textbf{Lower bound.}} The lower bound on \(E\) is obtained by first discriminating the finer hypotheses \((h,\alpha)\), which specify both the ensemble hypothesis and its labeling, and then merging all outcomes corresponding to the same \(h\). For \(h\in\{0,1\}\) and \(\alpha\in S_N\), define
\begin{equation}
\widetilde\rho_{h,\alpha}:=\bigoplus_{i=1}^Np_{\alpha(i)}^{(h)}\rho_{\alpha(i)}^{(h)},\qquad \mathcal C^{(h)}:=\{\widetilde\rho_{h,\alpha}:\alpha\in S_N\},\qquad \mathcal W:=\mathcal C^{(0)}\cup\mathcal C^{(1)}.
\label{eq:permuted-direct-sum-states}
\end{equation}
To relate this refined description to the original label-invariant experiment, define the permutation-averaged states
\begin{equation}
\overline\rho_h^{(M)}:=\frac1{|S_N|}\sum_{\alpha\in S_N}\widetilde\rho_{h,\alpha}^{\otimes M}.
\label{eq:permutation-averaged-states}
\end{equation}

For a direct-sum label sequence $\boldsymbol{\lambda}\in[N]^M$ with equality pattern $\Pi$ and $s=|\Pi|$ distinct labels, each injective assignment of these labels to intrinsic component labels extends to $(N-s)!$ permutations. Hence, the block indexed by $\boldsymbol{\lambda}$ equals $\Omega_{h,\Pi}/(N)_s$. Since there are $(N)_s$ observed-label sequences with equality pattern $\Pi$, index these identical orthogonal blocks by $a\in[(N)_s]$. Hence, up to a reordering of orthogonal direct-sum blocks,
\begin{equation}
\overline\rho_h^{(M)}\cong\bigoplus_{\Pi\in\Part([M])}\bigoplus_{a=1}^{(N)_s}\frac{\Omega_{h,\Pi}}{(N)_s},\qquad (N)_s:=\frac{N!}{(N-s)!}.
\label{eq:permutation-average-branch-decomposition}
\end{equation}
The additivity of the trace norm over direct sums and Theorem~\ref{thm:opt} therefore give
\begin{equation}
P_{e,\mathrm{opt}}^{(M)}=\frac12\left[1-\left\|\eta_0\overline\rho_0^{(M)}-\eta_1\overline\rho_1^{(M)}\right\|_1\right].
\label{eq:twirled-Bayes-error}
\end{equation}

Apply the multiple quantum Chernoff theorem \cite{Li2016multiplequantumchernoff10.1214/16-AOS1436}\footnote{It is required that the states have a common finite-dimensional joint support, which is satisfied since we only consider ensembles with finite states.} to the states \(\{\widetilde\rho_{h,\alpha}^{\otimes M}\}_{h,\alpha}\) with priors \(\eta_h/|S_N|\). There exists a sequence of POVMs \(\{\Lambda_{h,\alpha}^{(M)}\}_{h,\alpha}\) whose average error satisfies
\begin{equation}
P_{e,\mathrm{mult}}^{(M)}(\Lambda)\le\exp[-M(\xi_{\mathrm{mult}}-o(1))],
\label{eq:multiple-Chernoff-achievability}
\end{equation}
where
\[
\xi_{\mathrm{mult}}:=\min_{\substack{(h,\alpha)\ne(h',\beta)}}\sup_{0\le s\le1}\left\{-\log\Tr\!\left[(\widetilde\rho_{h,\alpha})^s(\widetilde\rho_{h',\beta})^{1-s}\right]\right\}.
\]
The direct-sum structure gives
\[
\Tr\!\left[(\widetilde\rho_{h,\alpha})^s(\widetilde\rho_{h',\beta})^{1-s}\right]=\sum_{i=1}^N\left(p_{\alpha(i)}^{(h)}\right)^s\left(p_{\beta(i)}^{(h')}\right)^{1-s}\Tr\!\left[\left(\rho_{\alpha(i)}^{(h)}\right)^s\left(\rho_{\beta(i)}^{(h')}\right)^{1-s}\right]=C_{\beta\circ\alpha^{-1}}^{h,h'}(s).
\]
Pairs with \(h\ne h'\) yield the across-hypothesis bottleneck \(\xi_{\mathrm{acr}}\), whereas pairs with \(h=h'\) and \(\alpha\ne\beta\) yield the in-hypothesis bottleneck \(\xi_{\mathrm{in}}\). Therefore,
\begin{equation}
\xi_{\mathrm{mult}}=\min\{\xi_{\mathrm{in}},\xi_{\mathrm{acr}}\}.
\label{eq:multiple-Chernoff-bottleneck}
\end{equation}

Merge the fine-grained outcomes according to the ensemble hypothesis by defining
\[
F_h^{(M)}:=\sum_{\alpha\in S_N}\Lambda_{h,\alpha}^{(M)},\qquad h=0,1.
\]
Then \(\{F_0^{(M)},F_1^{(M)}\}\) is a binary POVM. For every \(h\) and \(\alpha\),
\[
F_{1-h}^{(M)}=\sum_{\beta\in S_N}\Lambda_{1-h,\beta}^{(M)}\preceq I-\Lambda_{h,\alpha}^{(M)},
\]
and hence
\[
\sum_{h=0}^1\eta_h\Tr\!\left[\overline\rho_h^{(M)}F_{1-h}^{(M)}\right]=\sum_{h=0}^1\frac{\eta_h}{|S_N|}\sum_{\alpha\in S_N}\Tr\!\left[\widetilde\rho_{h,\alpha}^{\otimes M}F_{1-h}^{(M)}\right]\le P_{e,\mathrm{mult}}^{(M)}(\Lambda).
\]
Using Eq.~\eqref{eq:twirled-Bayes-error} and Eq.~\eqref{eq:multiple-Chernoff-achievability}, we obtain
\[
P_{e,\mathrm{opt}}^{(M)}\le\exp[-M(\min\{\xi_{\mathrm{in}},\xi_{\mathrm{acr}}\}-o(1))],
\]
and therefore
\begin{equation}
E\ge\min\{\xi_{\mathrm{in}},\xi_{\mathrm{acr}}\}.
\label{eq:Chernoff-exponent-lower-bound}
\end{equation}

\paragraph{\textbf{Upper bound.}} Fix \(\pi\in S_N\), set \(\widetilde\rho_0:=\widetilde\rho_{0,\mathrm{id}}\), and let \(P_{e,\pi,\mathrm{ref}}^{(M)}\) be the optimal Bayes error for the refined i.i.d. test
\[
\widetilde\rho_0^{\otimes M}\quad\text{versus}\quad\widetilde\rho_{1,\pi}^{\otimes M}.
\]
Coarse-graining the direct-sum index sequence to its equality pattern recovers the original label-invariant experiment. The refined test therefore has no larger error:
\[
P_{e,\mathrm{opt}}^{(M)}\ge P_{e,\pi,\mathrm{ref}}^{(M)}.
\]
The quantum Chernoff theorem \cite{audenaert} gives
\[
\lim_{M\to\infty}-\frac1M\log P_{e,\pi,\mathrm{ref}}^{(M)}=\sup_{0\le s\le1}\bigl[-\log C_\pi^{0,1}(s)\bigr]=\xi_\pi^{0,1}.
\]
Consequently, \(E\le\xi_\pi^{0,1}\) for every \(\pi\in S_N\), and hence
\begin{equation}
E\le\min_{\pi\in S_N}\xi_\pi^{0,1}=\xi_{\mathrm{acr}}.
\label{eq:Chernoff-exponent-upper-bound}
\end{equation}

\paragraph{\textbf{Achievability of upper bound for uniform pure ensembles}}
To show the achievability, we construct a specific POVM for any pair of uniform pure ensembles and directly give the exact Bayes error $P^{(M)}_e\le c \exp{-M\xi_\text{acr}}$, hence the corresponding Chernoff exponent $E\ge\xi_\text{acr}$, since we already have $E\le\xi_\text{acr}$, we then get $E=\xi_\text{acr}$.

Assume $p_i^{(h)}=1/N$ and $\rho_i^{(h)}=\ketbra{\psi_i^{(h)}}$. For a possible branch $\Pi\in\Part([M])$, set $\mathcal A_\Pi:=\operatorname{Inj}(\Pi,[N])$, representing all possible injections indicating the intrinsic states of labels in the observed parition. For $\alpha\in\mathcal A_\Pi$, define $\ket{\Phi_{\Pi,\alpha}^{(h)}}:=\bigotimes_{B\in\Pi}\bigotimes_{m\in B}\ket{\psi_{\alpha(B)}^{(h)}}_m$, where the tensor factors are placed in the physical sample order. Then the state representing the partiion $\Pi$ can also be written as
\begin{equation}
\Omega_{h,\Pi}=N^{-M}\sum_{\alpha\in\mathcal A_\Pi}\ketbra{\Phi_{\Pi,\alpha}^{(h)}}.
\label{eq:pure-uniform-branch-expansion}
\end{equation}

We now give the specific POVM used for testing. For each branch $\Pi$, use the support test $E_{1|\Pi}:=P_{1,\Pi}$ and $E_{0|\Pi}:=I-P_{1,\Pi}$. These branchwise measurements form a valid block-diagonal POVM. Since $\Omega_{1,\Pi}$ is supported on $P_{1,\Pi}$, the error under $H_1$ vanishes, and hence the exact Bayes error is
\begin{equation}
P_e^{(M)}\le\eta_0\sum_{\Pi\in\Part([M])}\Tr(\Omega_{0,\Pi}P_{1,\Pi}).
\end{equation}

To show $P^{(M)}_e\le c \exp{-M\xi_\text{acr}}$, we first prove a lemma.

\begin{lemma}
    Define $T_{1,\Pi}:=\sum_{\alpha\in\mathcal A_\Pi}\ketbra{\Phi_{\Pi,\alpha}^{(1)}}$ and $P_{1,\Pi}:=\operatorname{proj}\!\left(\Span\{\ket{\Phi_{\Pi,\alpha}^{(1)}}:\alpha\in\mathcal A_\Pi\}\right)$. We claim that there exists a constant $\kappa>0$, depending only on $\mathcal E_1$, such that
\begin{equation}
T_{1,\Pi}\succeq\kappa P_{1,\Pi}
\label{eq:uniform-branch-frame-bound}
\end{equation}
for every $M$ and every possible branch $\Pi$.
\end{lemma}

\begin{proof}
    Let $V_\Pi:\mathbb C^{\mathcal A_\Pi}\to\mathcal H^{\otimes M}$ be the synthesis map defined by $V_\Pi e_\alpha=\ket{\Phi_{\Pi,\alpha}^{(1)}}$, where $\{e_\alpha\}_{\alpha\in\mathcal A_\Pi}$ is the standard basis of $\mathbb C^{\mathcal A_\Pi}$\footnote{Equivalently, $V_\Pi$ is the matrix whose columns are the branch vectors, and $\dim \mathbb C^{\mathcal A_\Pi}$ is the number of possible injections $\frac{N!}{(N-|\Pi|)!}$.}. Since $T_{1,\Pi}=V_\Pi V_\Pi^\dagger$ and $P_{1,\Pi}$ is the projector onto $\operatorname{ran}V_\Pi$, it suffices to prove that the nonzero singular values of all $V_\Pi$ are bounded below by a positive constant independent of $M$ and $\Pi$.

Suppose otherwise. Then there exist integers $M_\ell$, branches $\Pi_\ell\in\Part([M_\ell])$, and unit vectors $z_\ell\in(\ker V_{\Pi_\ell})^\perp$ such that $\norm{V_{\Pi_\ell}z_\ell}\to0$. Since $|\Pi_\ell|\le N$, after passing to a subsequence we may assume $|\Pi_\ell|=s$ for all $\ell$. Order the blocks as $\Pi_\ell=\{B_{1,\ell},\ldots,B_{s,\ell}\}$ and identify each $\mathcal A_{\Pi_\ell}$ with the fixed set $\mathcal A_s:=\operatorname{Inj}([s],[N])$. After permuting the output registers, which preserves norms, singular values, and kernels, write the resulting synthesis map as $V_\ell$. Its columns are $V_\ell e_\alpha=\bigotimes_{a=1}^s\ket{\psi_{\alpha(a)}^{(1)}}^{\otimes n_{a,\ell}}$, where $n_{a,\ell}:=|B_{a,\ell}|$. Since $\norm{V_\ell}\le\sqrt{|\mathcal A_s|}$, compactness of the unit sphere in $\mathbb C^{\mathcal A_s}$ gives a further subsequence such that $z_\ell\to z_\infty$ with $\norm{z_\infty}=1$. Writing $z_\infty=((z_\infty)_\alpha)_{\alpha\in\mathcal A_s}$ in the standard basis of $\mathbb C^{\mathcal A_s}$, we have
\begin{equation}
\norm{V_\ell z_\infty}\le\norm{V_\ell z_\ell}+\norm{V_\ell}\norm{z_\infty-z_\ell}\longrightarrow0.
\end{equation}

Passing to another subsequence, for each $a\in[s]$, either $n_{a,\ell}$ is eventually constant or $n_{a,\ell}\to\infty$. Let $\mathsf F$ and $\mathsf L$ denote the corresponding sets of indices, and for $a\in\mathsf F$ let $n_a$ be the eventual value of $n_{a,\ell}$. Define an equivalence relation on $[N]$ by $i\sim j$ if and only if $\rho_i^{(1)}=\rho_j^{(1)}$. Since the set of distinct states in an ensemble is finite, there exists $q<1$ such that $|\braket{\psi_i^{(1)}}{\psi_j^{(1)}}|\le q$ whenever $i\not\sim j$.

Group the elements of $\mathcal A_s$ into equivalence classes as follows: $\alpha$ and $\beta$ belong to the same equivalence class $C$ if and only if $\alpha(a)$ and $\beta(a)$ correspond to the same state (or $\alpha(a)\sim\beta(a)$) for every $a\in\mathsf L$. For each class $C$, choose $\alpha_C\in C$ and define $u_{C,\ell}:=\bigotimes_{a\in\mathsf L}\ket{\psi_{\alpha_C(a)}^{(1)}}^{\otimes n_{a,\ell}}$, $w_\alpha:=\bigotimes_{a\in\mathsf F}\ket{\psi_{\alpha(a)}^{(1)}}^{\otimes n_a}$, and $y_C:=\sum_{\alpha\in C}(z_\infty)_\alpha w_\alpha$. Then for all sufficiently large $\ell$ we have $V_\ell z_\infty=\sum_Cu_{C,\ell}\otimes y_C$.

If $C\ne C'$, then for some $a\in\mathsf L$ the labels $\alpha_C(a)$ and $\alpha_{C'}(a)$ correspond to distinct states. Since different classes become asymptotically orthogonal, expanding the squared norm gives
\[
\lim_{\ell\to\infty}\|V_\ell z_\infty\|^2=\sum_C\|y_C\|^2.
\]
The left-hand side is zero, and the right-hand side is a finite sum of nonnegative constants independent of $\ell$. Hence $y_C=0$ for every class $C$. For sufficiently large $\ell$, the decomposition $V_\ell z_\infty=\sum_Cu_{C,\ell}\otimes y_C$ holds exactly, and therefore $V_\ell z_\infty=0$, that is,  $z_\infty\in\ker V_\ell$. Since $z_\ell\in(\ker V_\ell)^\perp$, we have $\langle z_\ell,z_\infty\rangle=0$ for all sufficiently large $\ell$, contradicting $z_\ell\to z_\infty$ and $\norm{z_\infty}=1$. Thus there exists $\gamma>0$ such that every nonzero singular value of every $V_\Pi$ is at least $\gamma$. Consequently, $T_{1,\Pi}=V_\Pi V_\Pi^\dagger\succeq\gamma^2P_{1,\Pi}$, proving Eq.~\eqref{eq:uniform-branch-frame-bound} with $\kappa=\gamma^2$.
\end{proof}

Define
\begin{equation}
F_{ij}:=\Tr(\rho_i^{(0)}\rho_j^{(1)})=|\braket{\psi_i^{(0)}}{\psi_j^{(1)}}|^2,\qquad F_\star:=\max_{\pi\in S_N}\frac1N\sum_{i=1}^NF_{i,\pi(i)}.
\label{eq:F-star-definition}
\end{equation}
For a rank-one projector, with the standard support convention at exponent zero, $(\rho_i^{(h)})^s=\rho_i^{(h)}$ for every $0\le s\le1$. Therefore,
\begin{equation}
\xi_\pi^{0,1}(s)=-\log\!\left(\frac1N\sum_{i=1}^NF_{i,\pi(i)}\right),\qquad \xi_{\mathrm{acr}}=-\log F_\star.
\label{eq:pure-uniform-xi-acr}
\end{equation}

Using $P_{1,\Pi}\preceq\kappa^{-1}T_{1,\Pi}$ and Eq.~\eqref{eq:pure-uniform-branch-expansion}, we obtain
\begin{equation}
P_e^{(M)}\le\eta_0\kappa^{-1}N^{-M}\sum_{\Pi}\sum_{\alpha,\beta\in\mathcal A_\Pi}|\braket{\Phi_{\Pi,\alpha}^{(0)}}{\Phi_{\Pi,\beta}^{(1)}}|^2.
\label{eq:support-test-branch-bound}
\end{equation}
For fixed $\Pi$, $\alpha$, and $\beta$, the overlap factorizes as $|\braket{\Phi_{\Pi,\alpha}^{(0)}}{\Phi_{\Pi,\beta}^{(1)}}|^2=\prod_{B\in\Pi}F_{\alpha(B),\beta(B)}^{|B|}$. Equivalently, assigning $i_m=\alpha(B)$ and $j_m=\beta(B)$ for the unique block $B\in\Pi$ containing $m$ rewrites Eq.~\eqref{eq:support-test-branch-bound} as
\begin{equation}
P_e^{(M)}\le\eta_0\kappa^{-1}N^{-M}\sum_{\substack{\mathbf i,\mathbf j\in[N]^M\\\eqp(\mathbf i)=\eqp(\mathbf j)}}\prod_{m=1}^MF_{i_m,j_m}.
\label{eq:support-test-sequence-bound}
\end{equation}

Whenever $\eqp(\mathbf i)=\eqp(\mathbf j)$, the correspondence sending each label appearing in $\mathbf i$ to the label occupying the same positions in $\mathbf j$ is a bijection between the two used-label sets. It therefore extends to a permutation $\pi\in S_N$ satisfying $j_m=\pi(i_m)$ for every $m$. Summing over all $\pi\in S_N$ may count a pair $(\mathbf i,\mathbf j)$ more than once \footnote{For example, the indices permutated may not appear in all $M$ samples.}, but all summands are nonnegative. Thus
\begin{equation}
P_e^{(M)}\le\eta_0\kappa^{-1}N^{-M}\sum_{\pi\in S_N}\sum_{\mathbf i\in[N]^M}\prod_{m=1}^MF_{i_m,\pi(i_m)}.
\end{equation}
For each fixed permutation $\pi$, the inner sum factorizes:
\begin{equation}
N^{-M}\sum_{\mathbf i\in[N]^M}\prod_{m=1}^MF_{i_m,\pi(i_m)}=\left(\frac1N\sum_{i=1}^NF_{i,\pi(i)}\right)^M.
\end{equation}
Consequently,
\begin{equation}
P_e^{(M)}\le\eta_0\kappa^{-1}\sum_{\pi\in S_N}\left(\frac1N\sum_{i=1}^NF_{i,\pi(i)}\right)^M\le\eta_0\kappa^{-1}N!F_\star^M.
\label{eq:pure-uniform-achievability}
\end{equation}

If $F_\star>0$, Eq.~\eqref{eq:pure-uniform-achievability} implies
\begin{equation}
E=\lim_{M\to\infty}-\frac1M\log P_e^{(M)}\ge-\log F_\star=\xi_{\mathrm{acr}}.
\end{equation}
If $F_\star=0$, Eq.~\eqref{eq:pure-uniform-achievability} gives $P_{e,\mathrm{opt}}^{(M)}=0$ for every $M\ge1$, so the same conclusion holds with the convention $-\log0=+\infty$. Combining this achievability bound with the preceding converse $E\le\xi_{\mathrm{acr}}$ proves that
\begin{equation}
E=\xi_{\mathrm{acr}}.
\end{equation}

\paragraph{\textbf{Relation between the Chernoff exponent and the Wasserstein distance.}} Recall \(F_{ij}:=\Tr(\rho_i^{(0)}\rho_j^{(1)})\) and \(C_{ij}:=1-F_{ij}\). For two uniform ensembles of equal cardinality \(N\), define
\begin{equation}
\Gamma_N:=\left\{\gamma\in\mathbb R_+^{N\times N}:\sum_{j=1}^N\gamma_{ij}=\frac1N,\ \sum_{i=1}^N\gamma_{ij}=\frac1N\right\},
\label{eq:uniform-transport-plans}
\end{equation}
and the Wasserstein distance associated with the cost \(C_{ij}\) by
\begin{equation}
W(\calE_0,\calE_1):=\min_{\gamma\in\Gamma_N}\sum_{i,j=1}^N\gamma_{ij}C_{ij}.
\label{eq:uniform-wasserstein-distance}
\end{equation}

For every \(\gamma\in\Gamma_N\), the matrix \(N\gamma\) is doubly stochastic. By the Birkhoff--von Neumann theorem, there exist coefficients \(\lambda_\pi\ge0\), \(\pi\in S_N\), with \(\sum_{\pi\in S_N}\lambda_\pi=1\), such that
\[
\gamma_{ij}=\frac1N\sum_{\pi\in S_N}\lambda_\pi\delta_{j,\pi(i)}.
\]
Since the transport cost is linear in \(\gamma\),
\[
\sum_{i,j=1}^N\gamma_{ij}C_{ij}=\sum_{\pi\in S_N}\lambda_\pi\frac1N\sum_{i=1}^NC_{i,\pi(i)}\ge\min_{\pi\in S_N}\frac1N\sum_{i=1}^NC_{i,\pi(i)}.
\]
Conversely, for every \(\pi\in S_N\), the permutation plan \(\gamma_{ij}=N^{-1}\delta_{j,\pi(i)}\) belongs to \(\Gamma_N\). Therefore,
\begin{align}
W(\calE_0,\calE_1)&=\min_{\pi\in S_N}\frac1N\sum_{i=1}^NC_{i,\pi(i)}=\min_{\pi\in S_N}\frac1N\sum_{i=1}^N\left(1-F_{i,\pi(i)}\right)\notag\\
&=1-\max_{\pi\in S_N}\frac1N\sum_{i=1}^NF_{i,\pi(i)}=1-F_\star.
\label{eq:wasserstein-matching-fidelity}
\end{align}
Since the preceding result gives \(E=\xi_{\mathrm{acr}}=-\log F_\star\) for finite uniform pure-state ensembles, Eq.~\eqref{eq:wasserstein-matching-fidelity} yields
\begin{equation}
E=\xi_{\mathrm{acr}}=-\log\!\left[1-W(\calE_0,\calE_1)\right],
\label{eq:chernoff-wasserstein-relation}
\end{equation}
or equivalently \(W(\calE_0,\calE_1)=1-e^{-E}=1-e^{-\xi_{\mathrm{acr}}}\). This completes the proof.
\end{proof}

\section{Specific calculations and proofs in examples}
\label{app:extra_proofs}

\subsection{Proofs in testing state design}

\paragraph{\textbf{Haar-random pure states and state designs.}} Let \(\mathcal H\simeq\mathbb C^d\), and let \(d\psi\) denote the normalized unitarily invariant measure on the pure states of \(\mathcal H\). Equivalently, a Haar-random pure state can be generated as \(\ket{\psi}=U\ket{\psi_0}\), where \(U\) is Haar distributed on \(U(d)\) and \(\ket{\psi_0}\) is any fixed reference state. The induced distribution is independent of \(\ket{\psi_0}\). Writing \(\rho_\psi:=\ket{\psi}\!\bra{\psi}\), its unitary invariance means that
\[
\int d\psi\,f(\rho_\psi)=\int d\psi\,f(U\rho_\psi U^\dagger)
\]
for every fixed \(U\in U(d)\) and every integrable function \(f\).

The \(r\)-th moment operator of the Haar ensemble is
\begin{equation}
M_{\mathrm{Haar}}^{(r)}:=\int d\psi\,\rho_\psi^{\otimes r}=\frac{\Pi_{\mathrm{sym}}^{(r)}}{D_r},\qquad D_r:=\dim\operatorname{Sym}^r(\mathcal H)=\binom{d+r-1}{r}.
\label{eq:Haar-moment}
\end{equation}
Indeed, every \(\rho_\psi^{\otimes r}\) is supported on the symmetric subspace of \(\mathcal H^{\otimes r}\), while the Haar average is invariant under \(U^{\otimes r}\) for every \(U\in U(d)\); Schur's lemma and unit trace then give Eq.~\eqref{eq:Haar-moment} \cite{Harrow2013}. Here
\begin{equation}
\Pi_{\mathrm{sym}}^{(r)}:=\frac1{r!}\sum_{\pi\in S_r}U_\pi
\label{eq:symmetric-projector}
\end{equation}
is the projector onto the symmetric subspace, and \(U_\pi\) permutes the \(r\) tensor factors according to \(\pi\). In particular, \(M_{\mathrm{Haar}}^{(1)}=I/d\) and \(\Tr M_{\mathrm{Haar}}^{(r)}=1\).

Consider a finite weighted pure-state ensemble \(\mathcal E=\{(p_i,\rho_i)\}_{i=1}^N\), where \(\rho_i=\ket{\psi_i}\!\bra{\psi_i}\), \(p_i>0\), and \(\sum_{i=1}^Np_i=1\). Its \(r\)-th ordinary moment operator is
\begin{equation}
M_{\mathcal E}^{(r)}:=\sum_{i=1}^Np_i\rho_i^{\otimes r}.
\label{eq:ordinary-ensemble-moment}
\end{equation}
The ensemble \(\mathcal E\) is an exact state \(t\)-design, equivalently a weighted complex projective \(t\)-design, if its \(t\)-th moment agrees with the Haar moment \cite{ambainis2007quantum}:
\begin{equation}
M_{\mathcal E}^{(t)}=M_{\mathrm{Haar}}^{(t)}=\frac{\Pi_{\mathrm{sym}}^{(t)}}{D_t},\qquad D_t=\binom{d+t-1}{t}.
\label{eq:state-t-design}
\end{equation}
This equality implies moment matching at every lower order. Indeed, for \(1\le r<t\),
\[
\Tr_{r+1,\ldots,t}M_{\mathcal E}^{(t)}=M_{\mathcal E}^{(r)},\qquad \Tr_{r+1,\ldots,t}M_{\mathrm{Haar}}^{(t)}=M_{\mathrm{Haar}}^{(r)},
\]
and therefore
\begin{equation}
M_{\mathcal E}^{(r)}=M_{\mathrm{Haar}}^{(r)}=\frac{\Pi_{\mathrm{sym}}^{(r)}}{D_r},\qquad 1\le r\le t.
\label{eq:lower-design-moments}
\end{equation}
Thus, every state \(t\)-design is automatically a state \(r\)-design for each \(r\le t\), while no corresponding constraint is imposed on moments of order \(r>t\).

Moment matching also has an operational formulation. For every \(1\le r\le t\) and every operator \(A\) on \(\mathcal H^{\otimes r}\),
\begin{equation}
\sum_{i=1}^Np_i\Tr(A\rho_i^{\otimes r})=\int d\psi\,\Tr(A\rho_\psi^{\otimes r}).
\label{eq:operational-design-matching}
\end{equation}
Consequently, every ensemble-averaged measurement statistic obtained from at most \(t\) identical copies of a sampled state agrees with the corresponding Haar average. Equivalently, a state \(t\)-design reproduces Haar averages of all phase-invariant homogeneous polynomials of bidegree \((r,r)\) in the state amplitudes and their complex conjugates for every \(r\le t\) \cite{AmbainisEmerson2007}.

An equivalent characterization is provided by the \(t\)-th frame potential
\begin{equation}
\mathcal F_t(\mathcal E):=\sum_{i,j=1}^Np_ip_j|\langle\psi_i|\psi_j\rangle|^{2t}=\Tr\!\left[\left(M_{\mathcal E}^{(t)}\right)^2\right].
\label{eq:frame-potential}
\end{equation}
Since \(M_{\mathcal E}^{(t)}\) is a unit-trace positive operator supported on the \(D_t\)-dimensional symmetric subspace,
\begin{equation}
\mathcal F_t(\mathcal E)\ge\frac1{D_t}=\binom{d+t-1}{t}^{-1}.
\label{eq:frame-potential-bound}
\end{equation}
Equality holds if and only if \(M_{\mathcal E}^{(t)}=\Pi_{\mathrm{sym}}^{(t)}/D_t\), and hence if and only if \(\mathcal E\) is an exact state \(t\)-design.

\paragraph{\textbf{Proof of Corollary~\ref{cor:first-informative-moment}.}} For a uniform ensemble of cardinality \(N\), the power-weighted and ordinary moments satisfy \(\mathcal K_h^{(r)}=N^{1-r}M_h^{(r)}\). Hence, Eq.~\eqref{eq:Mobius-uniform} gives, for every \(\Pi\in\Part([M])\),
\[
\Omega_{h,\Pi}=\sum_{R\succeq\Pi}\mu(\Pi,R)N^{|R|-M}\bigotimes_{D\in R}M_h^{(|D|)},
\]
where each moment operator acts on the registers indexed by \(D\). If \(M<k\), every block \(D\) of every \(R\in\Part([M])\) satisfies \(|D|\le M<k\). Since \(M_0^{(r)}=M_1^{(r)}\) for \(1\le r<k\), all terms in the M\"obius expansions agree under the two hypotheses, so \(\Omega_{0,\Pi}=\Omega_{1,\Pi}\) for every \(\Pi\in\Part([M])\). The equal-prior Bayes formula in Eq.~\eqref{eq:Bayes-exact} therefore gives \(P_{e,\mathrm{opt}}^{(M)}=1/2\).

Now let \(M=k\), and denote the one-block partition of \([k]\) by \(\widehat 1_k:=\{[k]\}\). If \(R\neq\widehat 1_k\), then \(R\) contains at least two blocks, so \(|D|<k\) for every \(D\in R\); all such terms again cancel between the two hypotheses. Thus, only \(R=\widehat 1_k\) contributes to the branch-state difference:
\begin{equation}
\Omega_{0,\Pi}-\Omega_{1,\Pi}=\mu(\Pi,\widehat 1_k)N^{1-k}\left(M_0^{(k)}-M_1^{(k)}\right).
\label{eq:first-informative-branch-difference}
\end{equation}
Since \(\widehat 1_k\) is obtained by merging all \(|\Pi|\) blocks of \(\Pi\), \(\mu(\Pi,\widehat 1_k)=(-1)^{|\Pi|-1}(|\Pi|-1)!\). Taking the trace norm in Eq.~\eqref{eq:first-informative-branch-difference} gives
\begin{equation}
\left\|\Omega_{0,\Pi}-\Omega_{1,\Pi}\right\|_1=(|\Pi|-1)!N^{1-k}\left\|M_0^{(k)}-M_1^{(k)}\right\|_1.
\label{eq:first-informative-branch-norm}
\end{equation}
There are \(\left\{{k\atop\ell}\right\}\) partitions of \([k]\) with \(\ell\) blocks. Summing Eq.~\eqref{eq:first-informative-branch-norm} over \(\Pi\in\Part([k])\) yields
\begin{equation}
\sum_{\Pi\in\Part([k])}\left\|\Omega_{0,\Pi}-\Omega_{1,\Pi}\right\|_1=A_kN^{1-k}\left\|M_0^{(k)}-M_1^{(k)}\right\|_1,\qquad A_k:=\sum_{\ell=1}^k\left\{{k\atop\ell}\right\}(\ell-1)!.
\label{eq:first-informative-total-norm}
\end{equation}
Substituting Eq.~\eqref{eq:first-informative-total-norm} into the equal-prior Bayes formula gives
\[
P_{e,\mathrm{opt}}^{(k)}=\frac12\left[1-\frac12A_kN^{1-k}\left\|M_0^{(k)}-M_1^{(k)}\right\|_1\right].
\]
This proves the corollary. \hfill\(\square\)

\paragraph{\textbf{Proof of Theorem~\ref{thm: unversal chernoff t-design upper bound}.}} Write \(\rho_i^{(h)}=\ket{\psi_i^{(h)}}\!\bra{\psi_i^{(h)}}\), and define the cross fidelities and optimal matching fidelity by
\begin{equation}
F_{ij}:=\Tr\!\left(\rho_i^{(0)}\rho_j^{(1)}\right)=\left|\left\langle\psi_i^{(0)}\middle|\psi_j^{(1)}\right\rangle\right|^2,\qquad F_\pi:=\frac1N\sum_{i=1}^NF_{i,\pi(i)},\qquad F_\star:=\max_{\pi\in S_N}F_\pi.
\label{eq:design-optimal-matching-fidelity}
\end{equation}
Since the ensembles are finite, uniform, and pure, Theorem~\ref{thm:fixed-N-chernoff-bounds} gives \(E=-\log F_\star\).

Let \(D_r:=\binom{d+r-1}{r}\), with \(D_0:=1\). For every fixed \(i,j\) and \(0\le r\le t\), the \(t\)-design property gives
\begin{equation}
\frac1N\sum_{j=1}^NF_{ij}^r=\Tr\!\left[\left(\rho_i^{(0)}\right)^{\otimes r}\frac1N\sum_{j=1}^N\left(\rho_j^{(1)}\right)^{\otimes r}\right]=\frac1{D_r},\qquad \frac1N\sum_{i=1}^NF_{ij}^r=\frac1{D_r},
\label{eq:t-design-overlap-moments}
\end{equation}
where the identities for \(r=0\) are immediate. Define the Haar overlap measure
\begin{equation}
\mu_d(dx):=(d-1)(1-x)^{d-2}dx,\qquad x\in[0,1],
\label{eq:Haar-overlap-measure}
\end{equation}
whose moments satisfy \(\int_0^1x^r\mu_d(dx)=1/D_r\).

Let \(\mathcal P_{t-1}^{+}\) be the set of nonzero real polynomials of degree at most \(t-1\) that are nonnegative on \([0,1]\). For \(p(x)=\sum_{r=0}^{t-1}a_rx^r\in\mathcal P_{t-1}^{+}\), define
\begin{equation}
A_p:=\sum_{r=0}^{t-1}\frac{a_r}{D_r}=\int_0^1p(x)\mu_d(dx)>0,\qquad R[p]:=\frac{\sum_{r=0}^{t-1}a_r/D_{r+1}}{\sum_{r=0}^{t-1}a_r/D_r}=\frac{\int_0^1xp(x)\mu_d(dx)}{\int_0^1p(x)\mu_d(dx)}.
\label{eq:design-polynomial-ratio}
\end{equation}
The strict positivity of \(A_p\) follows because \(\mu_d\) has full support on \([0,1]\) and \(p\) is nonzero and nonnegative. Define
\begin{equation}
B_{ij}:=\frac{p(F_{ij})}{NA_p}.
\label{eq:design-doubly-stochastic-matrix}
\end{equation}
Using Eq.~\eqref{eq:t-design-overlap-moments},
\[
\sum_{j=1}^NB_{ij}=\frac1{NA_p}\sum_{r=0}^{t-1}a_r\sum_{j=1}^NF_{ij}^r=\frac1{A_p}\sum_{r=0}^{t-1}\frac{a_r}{D_r}=1,
\]
and the same calculation gives \(\sum_{i=1}^NB_{ij}=1\). Thus \(B\) is doubly stochastic. By the Birkhoff--von Neumann theorem, there exist \(\lambda_\pi\ge0\), \(\pi\in S_N\), with \(\sum_{\pi\in S_N}\lambda_\pi=1\), such that \(B=\sum_{\pi\in S_N}\lambda_\pi P_\pi\), where \((P_\pi)_{ij}=\delta_{j,\pi(i)}\). Therefore,
\begin{align}
\sum_{\pi\in S_N}\lambda_\pi F_\pi&=\frac1N\sum_{i,j=1}^NB_{ij}F_{ij}=\frac1{N^2A_p}\sum_{i,j=1}^NF_{ij}p(F_{ij})\notag\\
&=\frac1{N^2A_p}\sum_{r=0}^{t-1}a_r\sum_{i,j=1}^NF_{ij}^{r+1}=\frac1{A_p}\sum_{r=0}^{t-1}\frac{a_r}{D_{r+1}}=R[p].
\label{eq:design-Birkhoff-polynomial-average}
\end{align}
Since the left-hand side is a convex combination of \(\{F_\pi\}_{\pi\in S_N}\), \(F_\star\ge R[p]\). Optimizing over \(p\) gives
\begin{equation}
F_\star\ge q_{d,t}:=\sup_{p\in\mathcal P_{t-1}^{+}}R[p],\qquad E=-\log F_\star\le-\log q_{d,t}.
\label{eq:design-polynomial-matching-bound}
\end{equation}

The feasible polynomial \(p(x)=x^{t-1}\) gives
\[
R[p]=\frac{1/D_t}{1/D_{t-1}}=\frac{D_{t-1}}{D_t}=\frac{t}{t+d-1},
\]
and hence
\begin{equation}
E\le-\log q_{d,t}\le\log\left(1+\frac{d-1}{t}\right).
\label{eq:elementary-design-exponent-bound}
\end{equation}

It remains to determine \(q_{d,t}\). First let \(t=2n-1\), so \(\deg p\le2n-2\). The \(n\)-point Gauss--Jacobi quadrature rule for \(\mu_d\), with nodes \(0<x_1<\cdots<x_n<1\) and positive weights \(\omega_1,\ldots,\omega_n\), is exact for polynomials of degree at most \(2n-1\). Since \(\deg p\le2n-2\) and \(\deg(xp)\le2n-1\), Eq.~\eqref{eq:design-polynomial-ratio} becomes
\[
R[p]=\frac{\sum_{k=1}^n\omega_kx_kp(x_k)}{\sum_{k=1}^n\omega_kp(x_k)}\le x_n,
\]
because the ratio is a convex combination of the nodes. Equality is attained by
\[
p^\star(x):=c\prod_{k=1}^{n-1}(x-x_k)^2,\qquad c>0,
\]
which belongs to \(\mathcal P_{2n-2}^{+}\), vanishes at \(x_1,\ldots,x_{n-1}\), and is positive at \(x_n\). Therefore,
\begin{equation}
q_{d,2n-1}=x_n.
\label{eq:odd-design-polynomial-optimum}
\end{equation}

Now let \(t=2n\), so \(\deg p\le2n-1\). The left Gauss--Radau quadrature rule for \(\mu_d\), with fixed node \(x_0=0\), free nodes \(0<x_1<\cdots<x_n<1\), and positive weights \(\omega_0,\ldots,\omega_n\), is exact for polynomials of degree at most \(2n\). Hence
\[
R[p]=\frac{\sum_{k=0}^n\omega_kx_kp(x_k)}{\sum_{k=0}^n\omega_kp(x_k)}\le x_n.
\]
Equality is attained by
\[
p^\star(x):=cx\prod_{k=1}^{n-1}(x-x_k)^2,\qquad c>0,
\]
which belongs to \(\mathcal P_{2n-1}^{+}\) and vanishes at every quadrature node except \(x_n\). Thus,
\begin{equation}
q_{d,2n}=x_n.
\label{eq:even-design-polynomial-optimum}
\end{equation}

The Gauss nodes for \(\mu_d(dx)\propto(1-x)^{d-2}dx\) are the zeros of the shifted Jacobi polynomial
\[
\widetilde P_n^{(d-2,0)}(x):=P_n^{(d-2,0)}(2x-1).
\]
For the Gauss--Radau rule, the free nodes are the Gauss nodes for the modified measure \(x\mu_d(dx)\propto x(1-x)^{d-2}dx\), and hence are the zeros of \(\widetilde P_n^{(d-2,1)}\). Defining
\[
n:=\left\lceil\frac t2\right\rceil,\qquad \beta_t:=\begin{cases}0,&t\text{ odd},\\1,&t\text{ even},\end{cases}
\]
Eqs.~\eqref{eq:odd-design-polynomial-optimum} and \eqref{eq:even-design-polynomial-optimum} yield
\begin{equation}
q_{d,t}=x_{\max}\!\left(\widetilde P_{\lceil t/2\rceil}^{(d-2,\beta_t)}\right)=:x_{\max}.
\label{eq:Jacobi-design-polynomial-optimum}
\end{equation}
Substituting Eq.~\eqref{eq:Jacobi-design-polynomial-optimum} into Eq.~\eqref{eq:design-polynomial-matching-bound} gives
\begin{equation}
E\le-\log x_{\max}.
\label{eq:Jacobi-design-exponent-bound}
\end{equation}

For completeness, the \(t^{-2}\) behavior follows from the extreme-zero asymptotics of Jacobi polynomials. Set \(\alpha:=d-2\) and write \(2x_{\max}-1=\cos\theta_{n,1}\), where \(\theta_{n,1}\) is the smallest positive zero of \(P_n^{(\alpha,\beta_t)}(\cos\theta)\). With $\varrho_t:=n+(\alpha+\beta_t+1)/2=(t+d)/2$, the Jacobi-zero expansion of Frenzen and Wong~\cite{FrenzenWong1985}, applied separately to the odd and even subsequences, gives $\theta_{n,1}=j_{\alpha,1}/\varrho_t+O_d(\varrho_t^{-3})$, where we used $1-\phi\cot\phi=O(\phi^2)$ and $\tan(\phi/2)=O(\phi)$ for $\phi=j_{\alpha,1}/\varrho_t$, and $j_{\alpha,1}$ is the first positive zero of the Bessel function $J_\alpha$. Consequently,
\begin{align}
1-x_{\max}&=\frac{1-\cos\theta_{n,1}}2=\frac{j_{d-2,1}^2}{(t+d)^2}+O_d(t^{-4}),\notag\\
-\log x_{\max}&=\frac{j_{d-2,1}^2}{(t+d)^2}+O_d(t^{-4})=O_d(t^{-2}).
\label{eq:Jacobi-design-root-asymptotics}
\end{align}
This completes the proof. \hfill\(\square\)

\paragraph{\textbf{Restatement of Corollary~3 and proof.}}

\begin{corollary}[Finite-sample converse for testing state designs]
Let \(\mathcal E_h=\{(1/N,\rho_i^{(h)})\}_{i=1}^N\), \(h=0,1\), be two finite uniform pure-state \(t\)-designs of equal cardinality on \(\mathbb C^d\), with priors \(\eta_0,\eta_1>0\) satisfying \(\eta_0+\eta_1=1\). Let \(x_{\max}\) be defined as in Theorem~\ref{thm: unversal chernoff t-design upper bound}. Then, for every \(M\ge1\),
\begin{equation}
P_{e,\mathrm{opt}}^{(M)}\ge\frac12\left[1-\sqrt{1-4\eta_0\eta_1x_{\max}^M}\right].
\label{eq:finite-sample-design-converse}
\end{equation}
Moreover, \(P_{e,\mathrm{opt}}^{(M)}=\min\{\eta_0,\eta_1\}\) for every \(M\le t\). For \(0<\epsilon<\min\{\eta_0,\eta_1\}\), define \(M_\epsilon:=\inf\{M\in\mathbb N:M\ge1,\ P_{e,\mathrm{opt}}^{(M)}\le\epsilon\}\), with \(\inf\varnothing:=+\infty\). Then
\begin{equation}
M_\epsilon\ge\max\left\{t+1,\left\lceil\frac{\log[\eta_0\eta_1/(\epsilon(1-\epsilon))]}{-\log x_{\max}}\right\rceil\right\}.
\label{eq:finite-sample-design-complexity-bound}
\end{equation}
Consequently, for fixed \(d\), \(\eta_0\), \(\eta_1\), and \(\epsilon\), \(M_\epsilon=\Omega_{d,\eta_0,\eta_1,\epsilon}(t^2)\) as \(t\to\infty\).
\end{corollary}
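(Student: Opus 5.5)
The argument compares the label-invariant experiment with a refined i.i.d. test that uses a fixed permutation, as in the upper-bound part of the proof of Theorem~\ref{thm:fixed-N-chernoff-bounds}. The permutation is supplied by the matching bound $F_\star\ge x_{\max}$ from the proof of Theorem~\ref{thm: unversal chernoff t-design upper bound}.

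\textbf{Step 1: reduction to a refined test.} Fix $\pi\in S_N$ and consider the refined test $\widetilde\rho_{0,\mathrm{id}}^{\otimes M}$ versus $\widetilde\rho_{1,\pi}^{\otimes M}$. Coarse-graining the direct-sum index sequence to its equality pattern reproduces the original experiment, so $P_{e,\mathrm{opt}}^{(M)}\ge P_{e,\pi,\mathrm{ref}}^{(M)}$ for every $M$, not only asymptotically.

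\textbf{Step 2: fidelity lower bound for the refined test.} For two states with priors $\eta_0,\eta_1$, the Helstrom error satisfies
\[
P_e\ge\tfrac12\bigl[1-\sqrt{1-4\eta_0\eta_1F}\bigr],
\]
where $F$ is the squared fidelity. This is the standard Fuchs--van de Graaf-type bound, which is exact for pure states and extends to mixed states by purification and monotonicity.

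\textbf{Step 3: computing the fidelity.} The states $\widetilde\rho_{0,\mathrm{id}}=\bigoplus_i N^{-1}\rho_i^{(0)}$ and $\widetilde\rho_{1,\pi}=\bigoplus_i N^{-1}\rho_{\pi(i)}^{(1)}$ are block diagonal with pure blocks. Their root fidelity is therefore $\sum_i N^{-1}|\langle\psi_i^{(0)}|\psi^{(1)}_{\pi(i)}\rangle| = N^{-1}\sum_i\sqrt{F_{i,\pi(i)}}$. By Cauchy--Schwarz-type monotonicity this is at least $N^{-1}\sum_i F_{i,\pi(i)}=F_\pi$, since $\sqrt{F}\ge F$ on $[0,1]$. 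Squared fidelity is multiplicative under tensor powers, so the $M$-copy squared fidelity is at least $F_\pi^{2M}$.

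I must be careful with exponents here. The convention needs to match $x_{\max}^M$, not $x_{\max}^{2M}$, so I will use the affinity/Uhlmann form in which the relevant quantity is $\bigl(\sum_i N^{-1}|\langle\cdot|\cdot\rangle|\bigr)^{2M}\ge F_\pi^{M}$. This holds by Jensen's inequality: $\bigl(N^{-1}\sum_i\sqrt{F_{i\pi(i)}}\bigr)^2\ge$ ? Jensen actually gives the opposite direction for the square of a mean of square roots, so instead I will use $\sqrt{F}\ge F$ termwise, giving only $F_\pi^{2M}$.

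To get $x_{\max}^M$, I would instead work with the Chernoff-type quantity $Q_s$. For pure blocks, $\Tr\bigl(\widetilde\rho_{0}^s\widetilde\rho_{1,\pi}^{1-s}\bigr)=F_\pi$ for all $s$. I then invoke the known bound $P_e\ge\tfrac12\bigl[1-\sqrt{1-4\eta_0\eta_1\,(\min_s Q_s)}\bigr]$, which follows from Audenaert's converse. Under tensor powers this gives $F_\pi^M$.

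This exponent bookkeeping, meaning the choice of which fidelity-type quantity yields exactly $F_\pi^M$, is the main technical point. I expect the clean route to be the $Q_s$ bound. Choosing $\pi$ achieving $F_\star\ge x_{\max}$ (Eq.~\eqref{eq:design-polynomial-matching-bound}) then gives Eq.~\eqref{eq:finite-sample-design-converse}.

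\textbf{Step 4: the regime $M\le t$.} Corollary~\ref{cor:first-informative-moment} with $k=t+1$ gives $\Omega_{0,\Pi}=\Omega_{1,\Pi}$ for every $\Pi$. Substituting into Theorem~\ref{thm:opt} gives
\[
P_{e,\mathrm{opt}}^{(M)}=\tfrac12\bigl[1-|\eta_0-\eta_1|\bigr]=\min\{\eta_0,\eta_1\},
\]
so $M_\epsilon\ge t+1$.

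\textbf{Step 5: sample-complexity bound.} Suppose $P_{e,\mathrm{opt}}^{(M)}\le\epsilon<1/2$. The converse then forces $\sqrt{1-4\eta_0\eta_1x_{\max}^M}\ge1-2\epsilon$, i.e. $\eta_0\eta_1x_{\max}^M\le\epsilon(1-\epsilon)$. Taking logarithms gives the second term of Eq.~\eqref{eq:finite-sample-design-complexity-bound}. Finally, Eq.~\eqref{eq:Jacobi-design-root-asymptotics} gives $-\log x_{\max}=O_d(t^{-2})$, which yields $M_\epsilon=\Omega(t^2)$.
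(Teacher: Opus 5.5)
Steps 1, 4 and 5 are sound and match the paper. For Step 4, note that the corollary you cite is stated for equal priors, but its M\"obius-expansion proof gives $\Omega_{0,\Pi}=\Omega_{1,\Pi}$ for any priors, which is what you use.

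The gap is in Step 3, the core of the converse. The inequality you invoke, $P_e\ge\frac12\bigl[1-\sqrt{1-4\eta_0\eta_1\min_s Q_s}\bigr]$, is false for general states. It is not a consequence of the quantum Chernoff converse, which only fixes the exponent. Here is a commuting counterexample: take $\rho=\mathrm{diag}(1-\delta,\delta)$, $\sigma=\mathrm{diag}(\delta,1-\delta)$ and $\eta_0=\eta_1=\tfrac12$. Then $P_e=\delta$, while $\min_sQ_s=2\sqrt{\delta(1-\delta)}$ makes the right-hand side $\approx\sqrt\delta/2$, which exceeds $\delta$ for small $\delta$ (about $0.052$ versus $0.01$ at $\delta=0.01$).

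The only correctly justified route in your proposal is the Uhlmann-fidelity bound combined with $\sqrt F\ge F$. That gives $x_{\max}^{2M}$ instead of $x_{\max}^M$. It still yields $M_\epsilon=\Omega(t^2)$, but it proves neither Eq.~\eqref{eq:finite-sample-design-converse} nor the explicit bound Eq.~\eqref{eq:finite-sample-design-complexity-bound}, which would lose a factor of $2$ in the denominator.

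The missing idea is to use that the block weights are the same under both hypotheses, instead of applying a global two-state inequality. Both $\widetilde\rho_0^{\otimes M}$ and $\widetilde\rho_{1,\pi}^{\otimes M}$ are direct sums over $\mathbf i\in[N]^M$ of pure product states, each with weight $N^{-M}$. The trace norm is additive over these blocks, so Lemma~\ref{lem:trace-norm-two-pure} gives the refined error exactly:
\[
P_{e,\pi,\mathrm{ref}}^{(M)}=\frac1{2N^M}\sum_{\mathbf i\in[N]^M}g_\eta\Bigl(\prod_{m=1}^MF_{i_m,\pi(i_m)}\Bigr),\qquad g_\eta(z):=1-\sqrt{1-4\eta_0\eta_1z}.
\]
Since $g_\eta$ is increasing and convex on $[0,1]$, Jensen's inequality and the product structure give $P_{e,\pi,\mathrm{ref}}^{(M)}\ge\frac12g_\eta(F_\pi^M)$. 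Choosing $\pi$ with $F_\pi=F_\star\ge x_{\max}$ then finishes the proof.

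This is exactly the paper's argument. It also explains why the common value $Q_s=F_\pi^M$ of your refined pair can legitimately appear here: the block weights are equal. With unequal block weights, as in the counterexample above, it cannot.
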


{\em Proof.} Recall the refined direct-sum states introduced in the Chernoff converse,
\[
\widetilde\rho_0:=\bigoplus_{i=1}^N\frac1N\rho_i^{(0)},\qquad \widetilde\rho_{1,\pi}:=\bigoplus_{i=1}^N\frac1N\rho_{\pi(i)}^{(1)},\qquad \pi\in S_N,
\]
and let \(P_{e,\pi,\mathrm{ref}}^{(M)}\) denote the optimal Bayes error for testing \(\widetilde\rho_0^{\otimes M}\) against \(\widetilde\rho_{1,\pi}^{\otimes M}\). Coarse-graining the direct-sum index sequence \(\mathbf i=(i_1,\ldots,i_M)\) to its equality pattern \(\operatorname{eq}(\mathbf i)\) recovers the original label-invariant experiment; under \(H_1\), this follows from \(\operatorname{eq}(\pi(\mathbf i))=\operatorname{eq}(\mathbf i)\). Data processing therefore gives
\begin{equation}
P_{e,\mathrm{opt}}^{(M)}\ge P_{e,\pi,\mathrm{ref}}^{(M)},\qquad \pi\in S_N.
\label{eq:original-refined-error-order}
\end{equation}

Recall \(F_{ij}\), \(F_\pi\), and \(F_\star\) from Eq.~\eqref{eq:design-optimal-matching-fidelity}. The tensor powers of the refined states decompose as
\[
\widetilde\rho_0^{\otimes M}=\bigoplus_{\mathbf i\in[N]^M}\frac1{N^M}\bigotimes_{m=1}^M\rho_{i_m}^{(0)},\qquad \widetilde\rho_{1,\pi}^{\otimes M}=\bigoplus_{\mathbf i\in[N]^M}\frac1{N^M}\bigotimes_{m=1}^M\rho_{\pi(i_m)}^{(1)}.
\]
Since the trace norm is additive over direct sums, the Helstrom formula and Lemma~\ref{lem:trace-norm-two-pure} give
\begin{equation}
P_{e,\pi,\mathrm{ref}}^{(M)}=\frac1{2N^M}\sum_{\mathbf i\in[N]^M}\left[1-\sqrt{1-4\eta_0\eta_1\prod_{m=1}^MF_{i_m,\pi(i_m)}}\right].
\label{eq:refined-finite-sample-error}
\end{equation}
Let \(g_\eta(z):=1-\sqrt{1-4\eta_0\eta_1z}\) for \(z\in[0,1]\). This function is increasing and convex because \(g_\eta'(z)=2\eta_0\eta_1(1-4\eta_0\eta_1z)^{-1/2}\ge0\) and \(g_\eta''(z)=4\eta_0^2\eta_1^2(1-4\eta_0\eta_1z)^{-3/2}\ge0\). Using Eq.~\eqref{eq:original-refined-error-order}, Jensen's inequality, and the product structure,
\begin{align}
P_{e,\mathrm{opt}}^{(M)}&\ge\frac12g_\eta\!\left(\frac1{N^M}\sum_{\mathbf i\in[N]^M}\prod_{m=1}^MF_{i_m,\pi(i_m)}\right)\notag\\
&=\frac12g_\eta\!\left[\left(\frac1N\sum_{i=1}^NF_{i,\pi(i)}\right)^M\right]=\frac12\left[1-\sqrt{1-4\eta_0\eta_1F_\pi^M}\right].
\label{eq:refined-Jensen-bound}
\end{align}
Choosing a permutation attaining \(F_\star\) and using \(F_\star\ge x_{\max}\), established in the proof of Theorem~\ref{thm: unversal chernoff t-design upper bound}, proves Eq.~\eqref{eq:finite-sample-design-converse}.

For \(M\le t\), the two uniform \(t\)-designs satisfy, for every \(1\le r\le M\),
\[
\mathcal K_0^{(r)}=N^{1-r}M_0^{(r)}=N^{1-r}\frac{\Pi_{\mathrm{sym}}^{(r)}}{D_r}=N^{1-r}M_1^{(r)}=\mathcal K_1^{(r)}.
\]
The M\"obius expansion in Theorem~\ref{thm:mobius} therefore gives \(\Omega_{0,\Pi}=\Omega_{1,\Pi}=:\Omega_\Pi\) for every \(\Pi\in\Part([M])\). Applying Theorem~\ref{thm:opt} and using \(\sum_{\Pi\in\Part([M])}\Tr\Omega_\Pi=1\), we obtain
\begin{equation}
P_{e,\mathrm{opt}}^{(M)}=\frac12\left[1-|\eta_0-\eta_1|\sum_{\Pi\in\Part([M])}\Tr\Omega_\Pi\right]=\frac12(1-|\eta_0-\eta_1|)=\min\{\eta_0,\eta_1\}.
\label{eq:design-exact-indistinguishability}
\end{equation}

Finally, suppose \(0<\epsilon<\min\{\eta_0,\eta_1\}\). If \(M_\epsilon=+\infty\), Eq.~\eqref{eq:finite-sample-design-complexity-bound} is immediate. Otherwise, Eq.~\eqref{eq:design-exact-indistinguishability} implies \(M_\epsilon\ge t+1\), while Eq.~\eqref{eq:finite-sample-design-converse} gives
\[
\epsilon\ge\frac12\left[1-\sqrt{1-4\eta_0\eta_1x_{\max}^{M_\epsilon}}\right].
\]
Equivalently, \(x_{\max}^{M_\epsilon}\le\epsilon(1-\epsilon)/(\eta_0\eta_1)<1\), and taking logarithms yields Eq.~\eqref{eq:finite-sample-design-complexity-bound}. Finally, Eq.~\eqref{eq:Jacobi-design-root-asymptotics} gives \(-\log x_{\max}=O_d(t^{-2})\). Since \(\log[\eta_0\eta_1/(\epsilon(1-\epsilon))]>0\) is independent of \(t\), it follows that \(M_\epsilon=\Omega_{d,\eta_0,\eta_1,\epsilon}(t^2)\). \hfill\(\square\)

\paragraph{\textbf{Proof of order-achievability of the \(t^{-2}\) scaling.}} Let \(\calE=\{(1/N,\rho_i)\}_{i=1}^N\) be a finite uniform pure-state \(t\)-design on \(\mathbb C^d\). For \(U\in U(d)\), define the optimal matching fidelity and infidelity between \(\calE\) and \(U\calE:=\{(1/N,U\rho_iU^\dagger)\}_{i=1}^N\) by
\begin{equation}
F_\star(U):=\max_{\pi\in S_N}\frac1N\sum_{i=1}^N\Tr\!\left(\rho_iU\rho_{\pi(i)}U^\dagger\right),\qquad \varepsilon_\star(U):=1-F_\star(U)=\min_{\pi\in S_N}\frac1N\sum_{i=1}^N\left[1-\Tr\!\left(\rho_iU\rho_{\pi(i)}U^\dagger\right)\right].
\label{eq:rotated-design-matching}
\end{equation}
For every \(U\in U(d)\), the rotated ensemble \(U\calE\) remains a uniform pure-state \(t\)-design, since for \(1\le r\le t\),
\[
\frac1N\sum_{i=1}^N(U\rho_iU^\dagger)^{\otimes r}=U^{\otimes r}\left(\frac1N\sum_{i=1}^N\rho_i^{\otimes r}\right)(U^\dagger)^{\otimes r}=U^{\otimes r}\frac{\Pi_{\mathrm{sym}}^{(r)}}{D_r}(U^\dagger)^{\otimes r}=\frac{\Pi_{\mathrm{sym}}^{(r)}}{D_r},
\]
where \(D_r=\binom{d+r-1}{r}\) and \(\Pi_{\mathrm{sym}}^{(r)}\) is invariant under \(U^{\otimes r}\). Since \(\calE\) and \(U\calE\) are finite uniform pure-state ensembles, Theorem~\ref{thm:fixed-N-chernoff-bounds} gives
\begin{equation}
E(\calE,U\calE)=-\log F_\star(U)=-\log[1-\varepsilon_\star(U)].
\label{eq:rotated-design-exponent}
\end{equation}

For every \(\pi\in S_N\), each matched cost is at least the minimum cost in the corresponding row. Hence,
\begin{equation}
\varepsilon_\star(U)\ge\frac1N\sum_{i=1}^N\min_{1\le j\le N}\left[1-\Tr\!\left(\rho_iU\rho_jU^\dagger\right)\right].
\label{eq:matching-nearest-point-bound}
\end{equation}
Let \(U\) be Haar distributed on \(U(d)\). By cyclicity of the trace and linearity of expectation,
\[
\mathbb E_U\varepsilon_\star(U)\ge\frac1N\sum_{i=1}^N\mathbb E_U\min_{1\le j\le N}\left[1-\Tr\!\left(U^\dagger\rho_iU\rho_j\right)\right].
\]
For each fixed \(i\), \(U^\dagger\rho_iU\) is Haar distributed over pure states. Therefore, defining \(\rho_\psi:=\ket{\psi}\!\bra{\psi}\) and
\[
Z(\psi):=\min_{1\le j\le N}\left[1-\Tr(\rho_\psi\rho_j)\right],
\]
we obtain
\begin{equation}
\mathbb E_U\varepsilon_\star(U)\ge\int d\psi\,Z(\psi).
\label{eq:rotation-average-nearest-infidelity}
\end{equation}

For every fixed pure state \(\rho_j\) and \(0\le s\le1\),
\begin{equation}
\Pr_\psi\!\left[1-\Tr(\rho_\psi\rho_j)\le s\right]=s^{d-1}.
\label{eq:pure-state-cap-volume}
\end{equation}
To verify Eq.~\eqref{eq:pure-state-cap-volume}, unitary invariance allows us to take \(\rho_j=\ket{1}\!\bra{1}\). Let \(g_1,\ldots,g_d\) be independent standard complex Gaussian variables and set \(\ket{\psi}=(\sum_{k=1}^d|g_k|^2)^{-1/2}\sum_{k=1}^dg_k\ket{k}\). Then \(\ket{\psi}\) is Haar distributed. With \(A:=|g_1|^2\) and \(B:=\sum_{k=2}^d|g_k|^2\), the variables \(A\sim\operatorname{Gamma}(1,1)\) and \(B\sim\operatorname{Gamma}(d-1,1)\) are independent, and \(1-\Tr(\rho_\psi\rho_j)=B/(A+B)\). Thus, for \(0<s\le1\),
\[
\Pr_\psi\!\left[1-\Tr(\rho_\psi\rho_j)\le s\right]=\Pr\!\left[A\ge\frac{1-s}{s}B\right]=\mathbb E_B\exp\!\left[-\frac{1-s}{s}B\right]=\left(1+\frac{1-s}{s}\right)^{-(d-1)}=s^{d-1},
\]
while the case \(s=0\) follows by continuity.

The union bound and Eq.~\eqref{eq:pure-state-cap-volume} give \(\Pr_\psi[Z(\psi)\le s]\le Ns^{d-1}\). Since \(0\le Z(\psi)\le1\), the tail-integral identity yields
\begin{align}
\int d\psi\,Z(\psi)&=\int_0^1\Pr_\psi[Z(\psi)>s]\,ds\notag\\
&\ge\int_0^{N^{-1/(d-1)}}(1-Ns^{d-1})\,ds=\frac{d-1}{d}N^{-1/(d-1)}.
\label{eq:average-nearest-infidelity-bound}
\end{align}
Combining Eqs.~\eqref{eq:rotation-average-nearest-infidelity} and \eqref{eq:average-nearest-infidelity-bound}, there exists \(U\in U(d)\) such that \(\varepsilon_\star(U)\ge\frac{d-1}{d}N^{-1/(d-1)}\). Equation~\eqref{eq:rotated-design-exponent} then gives the finite-cardinality bound
\begin{equation}
E(\calE,U\calE)\ge-\log\!\left[1-\frac{d-1}{d}N^{-1/(d-1)}\right]\ge\frac{d-1}{d}N^{-1/(d-1)},
\label{eq:finite-cardinality-design-achievability}
\end{equation}
where the second inequality follows from \(-\log(1-x)\ge x\) for \(0\le x<1\).

It remains to choose a \(t\)-design with sufficiently small cardinality. Consider the rank-one projector manifold
\[
\mathcal P_d:=\{\rho\in\operatorname{Herm}(\mathbb C^d):\rho^2=\rho,\ \Tr\rho=1\}.
\]
The manifold \(\mathcal P_d\) is smooth, connected, compact, real algebraic, isomorphic to \(\mathbb{CP}^{d-1}\), and has real dimension \(2(d-1)\). Endow it with the Riemannian volume induced by the Hilbert--Schmidt metric. Since the conjugation action of \(U(d)\) is transitive and isometric, the normalized volume measure coincides with the unitarily invariant probability measure on pure states.

By Theorem~2.2 of Ref.~\cite{Etayo2018asymptoticallyoptimaldesigns}, there exists \(C_{\mathcal P_d}>0\) such that, for every integer \(N\ge C_{\mathcal P_d}t^{2(d-1)}\), the manifold \(\mathcal P_d\) admits an equal-weight polynomial \(t\)-design with \(N\) points. Set
\begin{equation}
N_t:=\left\lceil C_{\mathcal P_d}t^{2(d-1)}\right\rceil\le K_dt^{2(d-1)},\qquad K_d:=C_{\mathcal P_d}+1.
\label{eq:state-design-cardinality}
\end{equation}
Let \(\{\rho_i^{(t)}\}_{i=1}^{N_t}\subset\mathcal P_d\) be such a polynomial \(t\)-design. Each matrix element of \(\rho^{\otimes r}\) is a polynomial of degree \(r\) in the real coordinates of \(\rho\); applying the design identity to its real and imaginary parts gives, for \(1\le r\le t\),
\begin{equation}
\frac1{N_t}\sum_{i=1}^{N_t}\left(\rho_i^{(t)}\right)^{\otimes r}=\int d\psi\,\rho_\psi^{\otimes r}=\frac{\Pi_{\mathrm{sym}}^{(r)}}{D_r}.
\label{eq:polynomial-design-to-state-design}
\end{equation}
Hence \(\calE_t:=\{(1/N_t,\rho_i^{(t)})\}_{i=1}^{N_t}\) is a finite uniform pure-state \(t\)-design.

Applying Eq.~\eqref{eq:finite-cardinality-design-achievability} to \(\calE_t\), there exists \(U_t\in U(d)\) such that
\[
E(\calE_t,U_t\calE_t)\ge-\log\!\left[1-\frac{d-1}{d}N_t^{-1/(d-1)}\right].
\]
Since Eq.~\eqref{eq:state-design-cardinality} implies \(N_t^{-1/(d-1)}\ge K_d^{-1/(d-1)}t^{-2}\), defining \(c_d:=\frac{d-1}{d}K_d^{-1/(d-1)}>0\) gives
\begin{equation}
E(\calE_t,U_t\calE_t)\ge-\log(1-c_dt^{-2})\ge c_dt^{-2}.
\label{eq:t-design-exponent-achievability}
\end{equation}
This proves order-achievability of the \(t^{-2}\) scaling. \hfill\(\square\)

\subsection{Explicit examples of testing between t-designs}

Here we illustrate the explicit examples of testing between $t$-designs shown in Fig.~1(b). For a unit Bloch vector $\bm r\in S^2$, write
\begin{equation}
\rho(\bm r)=\frac{I+\bm r\cdot\bm\sigma}{2}.
\end{equation}
An equal-weight point set $D_t=\{\bm r_i\}_{i=1}^{N_t}$ defines a qubit $t$-design if
\begin{equation}
\frac{1}{N_t}\sum_{i=1}^{N_t}\rho(\bm r_i)^{\otimes k}=\frac{\Pi_{\mathrm{sym}}^{(k)}}{k+1},\qquad 1\leq k\leq t.
\label{eq:app-design-condition}
\end{equation}
For each example, the two hypotheses are
\begin{align}
\mathcal E_{t,0}&=\left\{\left(\frac1{N_t},\rho(\bm r_i)\right)\right\}_{i=1}^{N_t},\notag\\
\mathcal E_{t,1}&=\left\{\left(\frac1{N_t},\rho(R_t\bm r_i)\right)\right\}_{i=1}^{N_t},
\label{eq:app-explicit-pair}
\end{align}
where $R_t\in SO(3)$. Rotations preserve Eq.~\eqref{eq:app-design-condition}, so both hypotheses are $t$-designs.

{\em \textbf{Calculation of the Chernoff exponent.}}
For a relabeling $\pi\in S_{N_t}$, define
\begin{equation}
\xi_{t,\pi}(s)=-\log\left[\frac1{N_t}\sum_{i=1}^{N_t}\operatorname{Tr}\left(\rho(\bm r_i)^s\rho(R_t\bm r_{\pi(i)})^{1-s}\right)\right].
\label{eq:app-fixed-permutation-xi}
\end{equation}
All component states are rank-one projectors. Hence $\rho^s=\rho$ for every $0\leq s\leq1$, with the standard support convention at the endpoints, and Eq.~\eqref{eq:app-fixed-permutation-xi} is independent of $s$. Define the cross-fidelity matrix
\begin{equation}
F_{ij}^{(t)}:=\operatorname{Tr}\!\left[\rho(\bm r_i)\rho(R_t\bm r_j)\right]=\frac{1+\bm r_i^{\mathsf T}R_t\bm r_j}{2}.
\label{eq:app-cross-fidelity}
\end{equation}
For finite uniform pure-state ensembles of equal cardinality, the Chernoff exponent is therefore
\begin{align}
E_t&=\min_{\pi\in S_{N_t}}\sup_{0\leq s\leq1}\xi_{t,\pi}(s)=-\log F_t^\star,
\label{eq:app-explicit-exponent}\\
F_t^\star&:=\max_{\pi\in S_{N_t}}\frac1{N_t}\sum_{i=1}^{N_t}F_{i,\pi(i)}^{(t)}.
\label{eq:app-Fstar}
\end{align}
Thus, $N_tF_t^\star$ is the maximum-weight perfect-matching value of $F^{(t)}$. For each pair below, we construct $F^{(t)}$ using Eq.~\eqref{eq:app-cross-fidelity}, solve Eq.~\eqref{eq:app-Fstar} with the Hungarian algorithm, and evaluate $E_t=-\log F_t^\star$.

\paragraph{$\mathbf{t=1}$.}
Let
\begin{equation}
D_1=\{\bm e_z,-\bm e_z\},\qquad R_1=R_y(\pi/2)=\begin{pmatrix}0&0&1\\0&1&0\\-1&0&0\end{pmatrix}.
\end{equation}
Every cross fidelity equals $1/2$. Therefore,
\begin{equation}
F_1^\star=\frac12,\qquad E_1=\log2=0.693147180559945.
\end{equation}

\paragraph{$\mathbf {t=2}$.}
Use the tetrahedral point set
\begin{equation}
D_2=\frac1{\sqrt3}\bigl\{(1,1,1),(1,-1,-1),(-1,1,-1),(-1,-1,1)\bigr\}
\end{equation}
and
\begin{equation}
R_2=R_z(\pi/2)=\begin{pmatrix}0&-1&0\\1&0&0\\0&0&1\end{pmatrix}.
\end{equation}
The cross fidelities are either $0$ or $2/3$, and there is a perfect matching containing only entries equal to $2/3$. Hence,
\begin{equation}
F_2^\star=\frac23,\qquad E_2=\log\frac32=0.405465108108164.
\end{equation}

\paragraph{$\mathbf{t=3}$.}
Use the octahedral point set
\begin{equation}
D_3=\{\bm e_x,\bm e_y,\bm e_z,-\bm e_x,-\bm e_y,-\bm e_z\}
\end{equation}
and
\begin{equation}
R_3=\begin{pmatrix}1/\sqrt2&0&1/\sqrt2\\-1/2&1/\sqrt2&1/2\\-1/2&-1/\sqrt2&1/2\end{pmatrix}.
\end{equation}
In the displayed ordering, an optimal assignment contains four fidelities $A=(2+\sqrt2)/4$ and two fidelities equal to $3/4$. Therefore,
\begin{equation}
F_3^\star=\frac{4A+2(3/4)}6=\frac{7+2\sqrt2}{12}=0.819035593728849,
\end{equation}
and
\begin{equation}
E_3=-\log\frac{7+2\sqrt2}{12}=0.199627736087041.
\end{equation}

\paragraph{$\mathbf{t=4}$.}
Define
\begin{equation}
a=\sqrt{\frac{1+2/\sqrt5}{3}},\qquad b=\sqrt{\frac{1-2/\sqrt5}{3}},\qquad \phi_k=\frac{2\pi k}{5},
\end{equation}
and take the $20$ points
\begin{equation}
\begin{aligned}
D_4=\bigl\{&\bigl(\sqrt{1-z^2}\cos\phi_k,\sqrt{1-z^2}\sin\phi_k,z\bigr):\\
&z\in\{a,-a,b,-b\},\quad 0\leq k\leq4\bigr\}.
\end{aligned}
\label{eq:app-D4}
\end{equation}
The four latitude nodes reproduce the uniform $z$-moments through degree four, while each regular pentagon eliminates all nonzero azimuthal Fourier modes of order at most four. Thus, $D_4$ is an exact equal-weight spherical $4$-design. The numerical rotation used to generate the $t=4$ cross in Fig.~1(b) is
{\scriptsize
\begin{equation}
R_4=\begin{pmatrix}
0.801193425832&-0.563706784185&0.200807758487\\
-0.593469407560&-0.791522963285&0.145895376494\\
0.076701738537&-0.236063677969&-0.968705725826
\end{pmatrix}.
\end{equation}
}
The resulting $20\times20$ assignment gives
\begin{equation}
F_4^\star=0.931768454373241,\qquad E_4=-\log F_4^\star=0.070670934676444.
\end{equation}

\paragraph{$\mathbf{t=5}$.}
Let $\varphi=(1+\sqrt5)/2$ and take the $12$ normalized icosahedral vertices
\begin{equation}
\begin{aligned}
D_5=\frac1{\sqrt{1+\varphi^2}}\bigl\{&(0,s_1,s_2\varphi),(s_1,s_2\varphi,0),\\
&(s_2\varphi,0,s_1):s_1,s_2\in\{-1,1\}\bigr\}.
\end{aligned}
\label{eq:app-D5}
\end{equation}
This is an exact equal-weight spherical $5$-design. The numerical rotation used to generate the $t=5$ cross in Fig.~1(b) is
{\scriptsize
\begin{equation}
R_5=\begin{pmatrix}
-0.086426947665&-0.704460902630&-0.704460942411\\
0.496265199926&0.582692136870&-0.643576510581\\
0.863858341431&-0.405221803824&0.299239127866
\end{pmatrix}.
\end{equation}
}
The corresponding $12\times12$ assignment gives
\begin{equation}
F_5^\star=0.931065370145479,\qquad E_5=-\log F_5^\star=0.071425789190994.
\end{equation}

For comparison, the universal qubit bound plotted in Fig.~1(b) is
\begin{equation}
E_t\leq\overline E_{2,t}=-\log x_t,\qquad x_t=\frac{1+z_t}{2},
\end{equation}
where $z_t$ is the largest zero of $P_{\lceil t/2\rceil}^{(0,\beta_t)}(z)$, with $\beta_t=0$ for odd $t$ and $\beta_t=1$ for even $t$. The following values are the values used to generate the five crosses and the corresponding bound points in Fig.~1(b):
\begin{center}
\small
\setlength{\tabcolsep}{3pt}
\begin{tabular}{c c c c c}
\hline
$t$ & $N_t$ & $F_t^\star$ & $E_t$ & $\overline E_{2,t}$\\
\hline
1 & 2  & 0.500000000 & 0.693147181 & 0.693147181\\
2 & 4  & 0.666666667 & 0.405465108 & 0.405465108\\
3 & 6  & 0.819035594 & 0.199627736 & 0.237400786\\
4 & 20 & 0.931768454 & 0.070670935 & 0.168479039\\
5 & 12 & 0.931065370 & 0.071425789 & 0.119574012\\
\hline
\end{tabular}
\end{center}

The $t=1$ and $t=2$ pairs saturate the universal bound. The red reference curve in Fig.~1(b) is $c_{\mathrm{fit}}/t^2$, where $c_{\mathrm{fit}}=5.6783083115$ is obtained from the bound values over $10\leq t\leq1000$ and is displayed in the legend as $5.678/t^2$. For each fixed $R_t$, Eq.~\eqref{eq:app-Fstar} is solved globally. The rotations for $t=4,5$ are fixed numerical choices obtained from a search over $SO(3)$. We do not claim global optimality over all relative rotations for $t=3,4,5$; the corresponding crosses in Fig.~1(b) are the achieved Chernoff exponents of the displayed pairs.

\subsection{Symmetric two-component ensembles and optical constellation testing}

\paragraph{\textbf {Symmetric two-component ensembles.}}

Consider two uniform pure-state ensembles
\begin{equation}
\mathcal E_h=\left\{\left(\frac12,\ketbra{\psi^{(h)}_1}\right),\left(\frac12,\ketbra{\psi^{(h)}_2}\right)\right\},\qquad h=0,1,
\end{equation}
with hypothesis priors $\eta_0,\eta_1>0$ and $\eta_0+\eta_1=1$. Assume
\begin{equation}
\begin{aligned}
\braket{\psi^{(0)}_1}{\psi^{(0)}_2}&=\braket{\psi^{(1)}_1}{\psi^{(1)}_2}=x\in\mathbb R,\qquad |x|<1,\\
\braket{\psi^{(0)}_1}{\psi^{(1)}_1}&=\braket{\psi^{(0)}_2}{\psi^{(1)}_2}=y,\qquad \braket{\psi^{(0)}_1}{\psi^{(1)}_2}=\braket{\psi^{(0)}_2}{\psi^{(1)}_1}=y^\star.
\end{aligned}
\label{eq:app-symmetric-overlaps}
\end{equation}

Every possible branch $\Pi\in\Part([M])$ has one or two blocks. For a two-block branch, choose an ordering $\Pi=\{B_1,B_2\}$; for the one-block branch, set $B_1=[M]$ and $B_2=\varnothing$ for notational uniformity. Let $a:=|B_1|$, so that $|B_2|=M-a$. After conjugating by the register-permutation unitary that groups the registers in $B_1$ and $B_2$, define
\begin{equation}
\ket{\Phi^{(h)}_\Pi}:=\ket{\psi^{(h)}_1}^{\otimes a}\otimes\ket{\psi^{(h)}_2}^{\otimes(M-a)},\qquad \ket{\Psi^{(h)}_\Pi}:=\ket{\psi^{(h)}_2}^{\otimes a}\otimes\ket{\psi^{(h)}_1}^{\otimes(M-a)},
\end{equation}
where an empty tensor product is understood as the scalar $1$. The two compatible assignments of the component labels to the observed blocks give
\begin{equation}
\Omega_{h,\Pi}=2^{-M}\left(\ketbra{\Phi^{(h)}_\Pi}+\ketbra{\Psi^{(h)}_\Pi}\right),
\label{eq:app-symmetric-branch-operator}
\end{equation}
and hence $\Tr\Omega_{h,\Pi}=2^{1-M}$ for every possible branch.

The symmetry assumptions imply
\begin{equation}
\braket{\Phi^{(h)}_\Pi}{\Psi^{(h)}_\Pi}=x^a(x^\star)^{M-a}=x^M,
\end{equation}
as well as
\begin{equation}
\braket{\Phi^{(0)}_\Pi}{\Phi^{(1)}_\Pi}=\braket{\Psi^{(0)}_\Pi}{\Psi^{(1)}_\Pi}=y^M,\qquad \braket{\Phi^{(0)}_\Pi}{\Psi^{(1)}_\Pi}=\braket{\Psi^{(0)}_\Pi}{\Phi^{(1)}_\Pi}=(y^\star)^M.
\label{eq:app-symmetric-branch-overlaps}
\end{equation}
In particular, all relevant inner products are independent of the block size $a$ and hence of the branch $\Pi$. Introduce the normalized symmetric and antisymmetric vectors
\begin{equation}
\ket{\phi^{(h)}_{\Pi,\pm}}:=\frac{\ket{\Phi^{(h)}_\Pi}\pm\ket{\Psi^{(h)}_\Pi}}{\sqrt{2(1\pm x^M)}}.
\end{equation}
They obey $\braket{\phi^{(h)}_{\Pi,+}}{\phi^{(h)}_{\Pi,-}}=0$, and Eq.~\eqref{eq:app-symmetric-branch-operator} becomes
\begin{equation}
\Omega_{h,\Pi}=2^{-M}\left[(1+x^M)\ketbra{\phi^{(h)}_{\Pi,+}}+(1-x^M)\ketbra{\phi^{(h)}_{\Pi,-}}\right].
\label{eq:app-symmetric-spectral-decomposition}
\end{equation}
Equation~\eqref{eq:app-symmetric-branch-overlaps} further gives
\begin{equation}
\braket{\phi^{(0)}_{\Pi,+}}{\phi^{(0)}_{\Pi,-}}=\braket{\phi^{(1)}_{\Pi,-}}{\phi^{(1)}_{\Pi,+}}=
\braket{\phi^{(0)}_{\Pi,+}}{\phi^{(1)}_{\Pi,-}}=\braket{\phi^{(0)}_{\Pi,-}}{\phi^{(1)}_{\Pi,+}}=0.
\label{eq:app-symmetric-parity-overlaps}
\end{equation}

Therefore,
\begin{equation}
\Delta_\Pi:=\eta_0\Omega_{0,\Pi}-\eta_1\Omega_{1,\Pi}
\end{equation}
is decomposed into two orthogonal sectors:
\begin{equation}
\begin{aligned}
\Delta_\Pi=2^{-M}\Big[&(1+x^M)\left(\eta_0\ketbra{\phi^{(0)}_{\Pi,+}}-\eta_1\ketbra{\phi^{(1)}_{\Pi,+}}\right)\\
&+(1-x^M)\left(\eta_0\ketbra{\phi^{(0)}_{\Pi,-}}-\eta_1\ketbra{\phi^{(1)}_{\Pi,-}}\right)\Big].
\end{aligned}
\end{equation}
Then the trace norm can be calculated as the sum of two terms:
\begin{equation}
\begin{aligned}
\norm{\Delta_\Pi}_1=2^{-M}\Big[&(1+x^M)\norm{\eta_0\ketbra{\phi^{(0)}_{\Pi,+}}-\eta_1\ketbra{\phi^{(1)}_{\Pi,+}}}_1\\
&+(1-x^M)\norm{\eta_0\ketbra{\phi^{(0)}_{\Pi,-}}-\eta_1\ketbra{\phi^{(1)}_{\Pi,-}}}_1\Big].
\end{aligned}
\end{equation}

For normalized vectors $\ket{u}$ and $\ket{v}$, by Lemma~\ref{lem:trace-norm-two-pure} we have
\begin{equation}
\left\|\eta_0\ketbra{u}-\eta_1\ketbra{v}\right\|_1=\sqrt{1-4\eta_0\eta_1|\braket{u}{v}|^2}.
\label{eq:app-weighted-pure-trace-norm}
\end{equation}
Applying Eq.~\eqref{eq:app-weighted-pure-trace-norm} separately to the two orthogonal sectors yields (using $g_\pm :=\braket{\phi^{(0)}_{\Pi,\pm}}{\phi^{(1)}_{\Pi,\pm}}=\frac{y^M\pm(y^\star)^M}{1\pm x^M}$)
\begin{equation}
\begin{aligned}
\norm{\Delta_\Pi}_1=2^{-M}\Bigg[&(1+x^M)\sqrt{1-4\eta_0\eta_1\left|\frac{y^M+(y^\star)^M}{1+x^M}\right|^2}\\
&+(1-x^M)\sqrt{1-4\eta_0\eta_1\left|\frac{y^M-(y^\star)^M}{1-x^M}\right|^2}\Bigg].
\end{aligned}
\label{eq:app-symmetric-branch-trace-norm}
\end{equation}

There is one one-block branch and $(2^M-2)/2=2^{M-1}-1$ two-block branches, and hence $2^{M-1}$ possible branches in total. Since Eq.~\eqref{eq:app-symmetric-branch-trace-norm} is independent of $\Pi$, Eq.~\eqref{eq:Bayes-exact} gives, with $\lambda_\pm:=(1\pm x^M)/2$,
\begin{equation}
\begin{aligned}
P_{e,\mathrm{opt}}^{(M)}=\frac12\Bigg[1&-\lambda_+\sqrt{1-4\eta_0\eta_1\left|\frac{y^M+(y^\star)^M}{1+x^M}\right|^2}\\
&-\lambda_-\sqrt{1-4\eta_0\eta_1\left|\frac{y^M-(y^\star)^M}{1-x^M}\right|^2}\Bigg].
\end{aligned}
\label{eq:app-symmetric-two-error}
\end{equation}
This reproduces Eq.~\eqref{PE_symmetric_two}. For $0<|y|<1$, expanding $1-\sqrt{1-t}=t/2+O(t^2)$ and using Eq.~\eqref{eq:app-symmetric-parity-overlaps} gives
\begin{equation}
P_{e,\mathrm{opt}}^{(M)}=\eta_0\eta_1\left(\lambda_+|g_+|^2+\lambda_-|g_-|^2\right)+O(|y|^{4M}),
\end{equation}
where
\begin{equation}
\lambda_+|g_+|^2+\lambda_-|g_-|^2=\frac12\left[\frac{|y^M+(y^\star)^M|^2}{1+x^M}+\frac{|y^M-(y^\star)^M|^2}{1-x^M}\right]=2|y|^{2M}\left[1+O(|x|^M)\right].
\end{equation}
Therefore
\begin{equation}
P_{e,\mathrm{opt}}^{(M)}=2\eta_0\eta_1|y|^{2M}\left[1+O(|x|^M+|y|^{2M})\right],\qquad E=-2\log|y|.
\end{equation}
If $y=0$, Eq.~\eqref{eq:app-symmetric-two-error} gives $P_{e,\mathrm{opt}}^{(M)}=0$ for every $M\ge1$.

\paragraph{\textbf{Optical communication systems.}}

In a binary phase-shift keying system, exchanging the bit labels $0$ and $1$ does not change the physical signal constellation. The real-axis and imaginary-axis constellations are therefore represented by
\begin{equation}
\mathcal E_R=\left\{\left(\frac12,\ketbra{\alpha}\right),\left(\frac12,\ketbra{-\alpha}\right)\right\},\qquad \mathcal E_I=\left\{\left(\frac12,\ketbra{-i\alpha}\right),\left(\frac12,\ketbra{i\alpha}\right)\right\}.
\label{eq:app-optical-ensembles}
\end{equation}
For coherent states, $\braket{\beta}{\gamma}=\exp[-(|\beta|^2+|\gamma|^2)/2+\beta^\star\gamma]$. Writing $\bar n:=|\alpha|^2$, the ordering in Eq.~\eqref{eq:app-optical-ensembles} gives
\begin{equation}
x=\braket{\alpha}{-\alpha}=\braket{-i\alpha}{i\alpha}=e^{-2\bar n},\qquad y=\braket{\alpha}{-i\alpha}=\braket{-\alpha}{i\alpha}=e^{-(1+i)\bar n}.
\label{eq:app-optical-xy}
\end{equation}
Thus the squared overlap within either constellation is $|\braket{\alpha}{-\alpha}|^2=|\braket{-i\alpha}{i\alpha}|^2=e^{-4\bar n}$, whereas every cross-constellation squared overlap satisfies $|\braket{\pm\alpha}{\pm i\alpha}|^2=|\braket{\pm\alpha}{\mp i\alpha}|^2=e^{-2\bar n}$. Setting $z:=M\bar n$, we have $x^M=e^{-2z}$ and $y^M=e^{-(1+i)z}$, so that
\begin{equation}
\left|\frac{y^M+(y^\star)^M}{1+x^M}\right|^2=\frac{\cos^2z}{\cosh^2z},\qquad \left|\frac{y^M-(y^\star)^M}{1-x^M}\right|^2=\frac{\sin^2z}{\sinh^2z}.
\label{eq:app-optical-ratios}
\end{equation}
Indeed, $y^M+(y^\star)^M=2e^{-z}\cos z$, $y^M-(y^\star)^M=-2ie^{-z}\sin z$, $1+e^{-2z}=2e^{-z}\cosh z$, and $1-e^{-2z}=2e^{-z}\sinh z$. Substituting Eq.~\eqref{eq:app-optical-ratios} into Eq.~\eqref{eq:app-symmetric-two-error} gives
\begin{equation}
\begin{aligned}
P_{e,\mathrm{opt}}^{(M)}=\frac12\Bigg[1&-\frac{1+e^{-2z}}{2}\sqrt{1-4\eta_0\eta_1\frac{\cos^2z}{\cosh^2z}}\\
&-\frac{1-e^{-2z}}{2}\sqrt{1-4\eta_0\eta_1\frac{\sin^2z}{\sinh^2z}}\Bigg],\qquad z=M|\alpha|^2.
\end{aligned}
\label{eq:app-coherent-BPSK-error}
\end{equation}
To obtain the large-$z$ behavior, set $q:=e^{-2z}$, $R_+:=\cos^2z/\cosh^2z$, and $R_-:=\sin^2z/\sinh^2z$. Then $R_\pm=O(q)$ and
\begin{equation}
\frac{1+q}{2}R_++\frac{1-q}{2}R_-=\frac{2q[1-q\cos(2z)]}{1-q^2}=2q+O(q^2).
\end{equation}
Using $1-\sqrt{1-t}=t/2+O(t^2)$ in Eq.~\eqref{eq:app-coherent-BPSK-error} therefore gives
\begin{equation}
P_{e,\mathrm{opt}}^{(M)}=2\eta_0\eta_1e^{-2M|\alpha|^2}+O\!\left(e^{-4M|\alpha|^2}\right),\qquad E=2|\alpha|^2.
\end{equation}

\subsection{Two common pure component states}

We now specialize to two ensembles with common pure components:
\[
H_0:\calE_p=\{(p,\rho_1),(1-p,\rho_2)\},\qquad H_1:\calE_q=\{(q,\rho_1),(1-q,\rho_2)\},
\]
where \(0<p,q<1\), \(\rho_1=\ket{\psi_1}\!\bra{\psi_1}\), \(\rho_2=\ket{\psi_2}\!\bra{\psi_2}\), and \(c:=|\langle\psi_1|\psi_2\rangle|^2\in[0,1]\). Throughout this subsection, the hypothesis priors are equal.

Consider a two-block branch \(\Pi=\{B,B^c\}\) with \(|B|=r\) and \(|B^c|=M-r\). Define
\[
\tau_{12}^{B}:=\bigotimes_{t\in B}\rho_1^{(t)}\otimes\bigotimes_{t\in B^c}\rho_2^{(t)},\qquad \tau_{21}^{B}:=\bigotimes_{t\in B}\rho_2^{(t)}\otimes\bigotimes_{t\in B^c}\rho_1^{(t)}.
\]
Then
\[
\Omega_{p,\Pi}-\Omega_{q,\Pi}=A_{r,M-r}\tau_{12}^{B}+B_{r,M-r}\tau_{21}^{B},
\]
where
\[
A_{r,M-r}:=p^r(1-p)^{M-r}-q^r(1-q)^{M-r},\qquad B_{r,M-r}:=(1-p)^rp^{M-r}-(1-q)^rq^{M-r}.
\]
The squared overlap between the two product states is independent of \(r\):
\[
\Tr(\tau_{12}^{B}\tau_{21}^{B})=|\langle\psi_1|\psi_2\rangle|^{2r}|\langle\psi_2|\psi_1\rangle|^{2(M-r)}=c^M.
\]
For \(z\in[0,1]\), define
\begin{equation}
\Phi_z(A,B):=\begin{cases}|A+B|,&AB\ge0,\\[1mm]\sqrt{(A-B)^2+4ABz},&AB<0.\end{cases}
\label{eq:two-pure-trace-norm-function}
\end{equation}
Lemma~\ref{lem:trace-norm-two-pure} therefore gives
\[
\|\Omega_{p,\Pi}-\Omega_{q,\Pi}\|_1=\Phi_{c^M}(A_{r,M-r},B_{r,M-r}).
\]

The one-block branch \(\Pi=\{[M]\}\) is included in the same notation by setting \(r=0\), \(\tau_{12}^{0,M}:=\rho_2^{\otimes M}\), \(\tau_{21}^{0,M}:=\rho_1^{\otimes M}\), \(A_{0,M}:=(1-p)^M-(1-q)^M\), and \(B_{0,M}:=p^M-q^M\). Grouping the branches by unordered block sizes, define
\[
\nu_{M,r}:=\begin{cases}1,&r=0,\\[1mm]\binom Mr,&1\le r<M/2,\\[1mm]\frac12\binom M{M/2},&r=M/2\text{ and }M\text{ is even}.\end{cases}
\]
The exact equal-prior Bayes error is then
\begin{equation}
P_{e,\mathrm{opt}}^{(M)}=\frac12\left[1-\frac12\sum_{r=0}^{\lfloor M/2\rfloor}\nu_{M,r}\Phi_{c^M}(A_{r,M-r},B_{r,M-r})\right].
\label{eq:two-state-common-pure-bayes}
\end{equation}
Here \(A_{r,M-r}\) and \(B_{r,M-r}\) are the signed likelihood differences associated with the two assignments of the component states to the blocks, while \(c^M\) is the quantum overlap between the corresponding product-state assignments.

For \(u,v\in(0,1)\), define the Bernoulli Chernoff information by
\[
\xi_{\mathrm B}(u,v):=-\log\min_{0\le s\le1}\left[u^sv^{1-s}+(1-u)^s(1-v)^{1-s}\right].
\]
There are two cross-hypothesis matchings. For the identity matching,
\[
C_{\mathrm{id}}(s)=p^sq^{1-s}+(1-p)^s(1-q)^{1-s},\qquad \xi_{\mathrm{id}}=\xi_{\mathrm B}(p,q).
\]
For the swap matching,
\[
C_{\mathrm{sw}}(s)=c\left[p^s(1-q)^{1-s}+(1-p)^sq^{1-s}\right],
\]
where \(\Tr(\rho_1^s\rho_2^{1-s})=\Tr(\rho_2^s\rho_1^{1-s})=c\). Hence
\[
\xi_{\mathrm{sw}}=-\log c+\xi_{\mathrm B}(p,1-q),
\]
with the convention \(-\log0=+\infty\), and
\begin{equation}
\xi_{\mathrm{acr}}=\min\{\xi_{\mathrm{id}},\xi_{\mathrm{sw}}\}=\min\left\{\xi_{\mathrm B}(p,q),-\log c+\xi_{\mathrm B}(p,1-q)\right\}.
\label{eq:two-common-pure-xi-acr}
\end{equation}
The fixed-permutation Chernoff converse gives \(E\le\xi_{\mathrm{acr}}\). We next prove the matching achievability bound.

For a branch with unordered block size \(r\), set
\[
x_r:=p^r(1-p)^{M-r},\qquad y_r:=(1-p)^rp^{M-r},\qquad u_r:=q^r(1-q)^{M-r},\qquad v_r:=(1-q)^rq^{M-r},
\]
so that \(A_{r,M-r}=x_r-u_r\) and \(B_{r,M-r}=y_r-v_r\). The contribution of one such branch to the equal-prior Bayes error is
\[
e_{M,r}:=\frac14\left(\Tr\Omega_{p,\Pi}+\Tr\Omega_{q,\Pi}-\|\Omega_{p,\Pi}-\Omega_{q,\Pi}\|_1\right)=\frac14\left[x_r+y_r+u_r+v_r-\Phi_{c^M}(x_r-u_r,y_r-v_r)\right].
\]
We use the estimate
\begin{equation}
e_{M,r}\le\frac12\left[\min\{x_r,u_r\}+\min\{y_r,v_r\}\right]+c^M\left[\min\{x_r,v_r\}+\min\{y_r,u_r\}\right].
\label{eq:two-common-pure-branch-bound}
\end{equation}
To prove it, let \(x,y,u,v\ge0\), \(z\in[0,1]\), \(A:=x-u\), and \(B:=y-v\). If \(AB\ge0\), then \(\Phi_z(A,B)=|A+B|\), and the corresponding branch contribution equals either \((u+v)/2\) or \((x+y)/2\), which is exactly the first term on the right-hand side of Eq.~\eqref{eq:two-common-pure-branch-bound}. If \(A\ge0\ge B\), write \(a:=A\) and \(b:=-B\). When \(a+b>0\),
\[
\begin{aligned}
\frac14\left[x+y+u+v-\Phi_z(A,B)\right]&=\frac{u+y}{2}+\frac14\left[a+b-\sqrt{(a+b)^2-4abz}\right]\\
&\le\frac{u+y}{2}+z\frac{ab}{a+b}\le\frac{u+y}{2}+z\min\{x,v\},
\end{aligned}
\]
while the case \(a=b=0\) is immediate. The case \(A\le0\le B\) follows by interchanging \((x,u)\) and \((y,v)\), proving Eq.~\eqref{eq:two-common-pure-branch-bound}.

Define the classical overlap sum
\begin{equation}
L_M(\alpha,\beta):=\sum_{k=0}^M\binom Mk\min\left\{\alpha^k(1-\alpha)^{M-k},\beta^k(1-\beta)^{M-k}\right\}.
\label{eq:classical-Bernoulli-overlap}
\end{equation}
For \(0\le k\le M\), let \(f_k(u,v):=\min\{u^k(1-u)^{M-k},v^k(1-v)^{M-k}\}\). Then
\[
\min\{x_r,u_r\}+\min\{y_r,v_r\}=f_r(p,q)+f_{M-r}(p,q),
\]
and
\[
\min\{x_r,v_r\}+\min\{y_r,u_r\}=f_r(p,1-q)+f_{M-r}(p,1-q).
\]
Moreover, for any \(g_0,\ldots,g_M\),
\[
\sum_{r=0}^{\lfloor M/2\rfloor}\nu_{M,r}(g_r+g_{M-r})=\sum_{k=0}^M\binom Mk g_k.
\]
Indeed, \(r=0\) produces the endpoint terms \(k=0,M\); each \(1\le r<M/2\) pairs \(k=r\) with \(k=M-r\); and, when \(M\) is even, \(\nu_{M,M/2}=\frac12\binom M{M/2}\) compensates for the coincidence \(r=M-r\). Consequently,
\[
\sum_{r=0}^{\lfloor M/2\rfloor}\nu_{M,r}\left[\min\{x_r,u_r\}+\min\{y_r,v_r\}\right]=L_M(p,q),
\]
and
\[
\sum_{r=0}^{\lfloor M/2\rfloor}\nu_{M,r}\left[\min\{x_r,v_r\}+\min\{y_r,u_r\}\right]=L_M(p,1-q).
\]
Since \(P_{e,\mathrm{opt}}^{(M)}=\sum_{r=0}^{\lfloor M/2\rfloor}\nu_{M,r}e_{M,r}\), Eq.~\eqref{eq:two-common-pure-branch-bound} yields
\begin{equation}
P_{e,\mathrm{opt}}^{(M)}\le\frac12L_M(p,q)+c^ML_M(p,1-q).
\label{eq:two-common-pure-error-upper-bound}
\end{equation}

Let \(P_u\) and \(P_v\) denote the Bernoulli distributions with parameters \(u\) and \(v\). Every binary sequence containing \(k\) ones has probabilities \(u^k(1-u)^{M-k}\) and \(v^k(1-v)^{M-k}\) under \(P_u^{\otimes M}\) and \(P_v^{\otimes M}\), respectively, and there are \(\binom Mk\) such sequences. Thus
\[
L_M(u,v)=\sum_{\boldsymbol x\in\{0,1\}^M}\min\{P_u^{\otimes M}(\boldsymbol x),P_v^{\otimes M}(\boldsymbol x)\},
\]
so \(\frac12L_M(u,v)\) is the minimum equal-prior Bayes error for discriminating \(P_u^{\otimes M}\) from \(P_v^{\otimes M}\). The classical Chernoff theorem therefore gives
\[
L_M(u,v)=\exp[-M\xi_{\mathrm B}(u,v)+o(M)].
\]
Hence
\[
\frac12L_M(p,q)=\exp[-M\xi_{\mathrm B}(p,q)+o(M)],
\]
and, for \(c>0\),
\[
c^ML_M(p,1-q)=\exp\left[-M\left(-\log c+\xi_{\mathrm B}(p,1-q)\right)+o(M)\right].
\]
Using \(e^{-Ma+o(M)}+e^{-Mb+o(M)}=e^{-M\min\{a,b\}+o(M)}\) in Eq.~\eqref{eq:two-common-pure-error-upper-bound}, we obtain
\[
P_{e,\mathrm{opt}}^{(M)}\le\exp\left[-M\min\left\{\xi_{\mathrm B}(p,q),-\log c+\xi_{\mathrm B}(p,1-q)\right\}+o(M)\right].
\]
If \(c=0\), the second term in Eq.~\eqref{eq:two-common-pure-error-upper-bound} vanishes for every \(M\ge1\), and the same conclusion holds under the convention \(-\log0=+\infty\). Therefore \(E\ge\xi_{\mathrm{acr}}\). Combining this achievability bound with the fixed-permutation converse gives
\begin{equation}
E=\lim_{M\to\infty}-\frac1M\log P_{e,\mathrm{opt}}^{(M)}=\xi_{\mathrm{acr}}=\min\left\{\xi_{\mathrm B}(p,q),-\log c+\xi_{\mathrm B}(p,1-q)\right\}.
\label{eq:two-common-pure-exact-exponent}
\end{equation}

{\em \textbf{Swapped probabilities.}} A transparent subcase is obtained by setting \(p=a\), \(q=b\), and \(a+b=1\):
\[
H_0:\calE_a=\{(a,\rho_1),(b,\rho_2)\},\qquad H_1:\calE_b=\{(b,\rho_1),(a,\rho_2)\}.
\]
In this case,
\[
B_{r,M-r}=-A_{r,M-r},\qquad A_{r,M-r}=a^rb^{M-r}-b^ra^{M-r},
\]
and hence
\[
\Phi_{c^M}(A_{r,M-r},-A_{r,M-r})=2|A_{r,M-r}|\sqrt{1-c^M}.
\]
Moreover,
\[
\sum_{r=0}^{\lfloor M/2\rfloor}\nu_{M,r}|A_{r,M-r}|=\operatorname{TV}\!\left(\mathrm{Bin}(M,a),\mathrm{Bin}(M,b)\right),
\]
where \(\operatorname{TV}(P,Q):=\frac12\sum_x|P(x)-Q(x)|\). Therefore, the exact equal-prior Bayes error factorizes as
\begin{equation}
P_{e,\mathrm{opt}}^{(M)}=\frac12\left[1-\operatorname{TV}\!\left(\mathrm{Bin}(M,a),\mathrm{Bin}(M,b)\right)\sqrt{1-c^M}\right].
\label{eq:swapped-probability-exact-bayes}
\end{equation}
The corresponding exponent follows from Eq.~\eqref{eq:two-common-pure-exact-exponent}. Since \(\xi_{\mathrm B}(a,1-b)=\xi_{\mathrm B}(a,a)=0\),
\begin{equation}
E=\min\left\{\xi_{\mathrm B}(a,b),-\log c\right\}.
\label{eq:swapped-probability-exponent}
\end{equation}
For \(a=1/3\) and \(b=2/3\), symmetry gives the Bernoulli Chernoff minimizer \(s=1/2\), and
\[
\xi_{\mathrm B}\left(\frac13,\frac23\right)=-\log\left(2\sqrt{\frac13\frac23}\right)=\log\frac{3}{2\sqrt2}.
\]
Thus
\[
E=\min\left\{\log\frac{3}{2\sqrt2},-\log c\right\}.
\]

\section{Mathematical supplements}

\begin{lemma}[Trace norm of a linear combination of two pure states]
\label{lem:trace-norm-two-pure}
Let \(\ket{u}\) and \(\ket{v}\) be normalized vectors and let \(A,B\in\mathbb R\). Then
\begin{equation}
\left\|A\ket{u}\!\bra{u}+B\ket{v}\!\bra{v}\right\|_1=\begin{cases}|A+B|,&AB\ge0,\\[1mm]\sqrt{(A-B)^2+4AB|\langle u|v\rangle|^2},&AB<0.\end{cases}
\label{eq:trace-norm-two-pure}
\end{equation}
\end{lemma}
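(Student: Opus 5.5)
The plan is to reduce the operator $X:=A\ket{u}\!\bra{u}+B\ket{v}\!\bra{v}$ to a $2\times2$ Hermitian matrix and read off the trace norm from its trace and determinant. If $\ket{u}$ and $\ket{v}$ are parallel (up to phase), $X=(A+B)\ket{u}\!\bra{u}$ and $\|X\|_1=|A+B|$. This agrees with both branches of Eq.~\eqref{eq:trace-norm-two-pure}, since $|\langle u|v\rangle|^2=1$ gives $\sqrt{(A-B)^2+4AB}=|A+B|$. Otherwise $X$ is supported on $\Span\{\ket{u},\ket{v}\}$, which has dimension two, so $X$ has at most two nonzero real eigenvalues $\lambda_1,\lambda_2$, and $\|X\|_1=|\lambda_1|+|\lambda_2|$.

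Next I would compute the two invariants. First, $\Tr X=\lambda_1+\lambda_2=A+B$. For the determinant on the two-dimensional support, I would use $\Tr X^2=\lambda_1^2+\lambda_2^2=A^2+B^2+2AB|\langle u|v\rangle|^2$. Hence
\[
\lambda_1\lambda_2=\tfrac12\left[(\Tr X)^2-\Tr X^2\right]=AB\left(1-|\langle u|v\rangle|^2\right).
\]
This avoids choosing an explicit orthonormal basis.

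Then I would split on the sign of $AB$. If $AB\ge0$, then $\lambda_1\lambda_2\ge0$, so the eigenvalues share a sign (or one vanishes), and $\|X\|_1=|\lambda_1+\lambda_2|=|A+B|$. If $AB<0$, then $\lambda_1\lambda_2<0$, and the eigenvalues have opposite signs. In that case
\[
\|X\|_1=|\lambda_1-\lambda_2|=\sqrt{(\lambda_1+\lambda_2)^2-4\lambda_1\lambda_2}=\sqrt{(A+B)^2-4AB+4AB|\langle u|v\rangle|^2},
\]
which equals $\sqrt{(A-B)^2+4AB|\langle u|v\rangle|^2}$. This is the claimed formula.

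There is no real obstacle. The only point needing care is the degenerate case where $\ket{u}$ and $\ket{v}$ are collinear or $A$ or $B$ vanishes. The identity $\lambda_1\lambda_2=\tfrac12[(\Tr X)^2-\Tr X^2]$ still holds there, because $X$ has rank at most two; a zero eigenvalue just makes one of $\lambda_1,\lambda_2$ zero. So the argument goes through uniformly once the support is taken to have dimension at most two.
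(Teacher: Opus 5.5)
Your proposal is correct and follows essentially the same route as the paper: restrict to the (at most) two-dimensional span of $\ket{u},\ket{v}$, identify the eigenvalue sum $A+B$ and product $AB(1-|\langle u|v\rangle|^2)$, split on the sign of $AB$, and treat the collinear case separately. The only difference is cosmetic. You obtain the eigenvalue product from $\Tr X^2$, whereas the paper writes $X$ as an explicit $2\times2$ matrix in the basis $\{\ket{u},(\ket{v}-c\ket{u})/\sqrt{1-|c|^2}\}$ and reads off its characteristic polynomial; your version avoids constructing that basis.
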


\begin{proof}
Let \(c:=\langle u|v\rangle\) and \(X:=A\ket{u}\!\bra{u}+B\ket{v}\!\bra{v}\). The support of \(X\) is contained in \(\operatorname{span}\{\ket{u},\ket{v}\}\). For \(|c|<1\), choose the orthonormal basis
\[
\ket{e_1}:=\ket{u},\qquad \ket{e_2}:=\frac{\ket{v}-c\ket{u}}{\sqrt{1-|c|^2}},
\]
so that \(\ket{v}=c\ket{e_1}+\sqrt{1-|c|^2}\ket{e_2}\). In this basis,
\[
X=\begin{pmatrix}A+B|c|^2&Bc\sqrt{1-|c|^2}\\Bc^*\sqrt{1-|c|^2}&B(1-|c|^2)\end{pmatrix}.
\]
Its characteristic polynomial and two possibly nonzero eigenvalues are
\begin{align}
\det(\lambda I-X)&=\lambda^2-(A+B)\lambda+AB(1-|c|^2),\notag\\
\lambda_\pm&=\frac{A+B\pm\sqrt{(A-B)^2+4AB|c|^2}}{2}.
\label{eq:two-pure-linear-combination-eigenvalues}
\end{align}
If \(AB\ge0\), the two terms in \(X\) have the same sign, so \(X\) is positive or negative semidefinite and \(\|X\|_1=|\Tr X|=|A+B|\). If \(AB<0\), then for \(|c|<1\),
\[
\lambda_+\lambda_-=AB(1-|c|^2)<0,
\]
so \(\lambda_+>0>\lambda_-\). Hence
\[
\|X\|_1=|\lambda_+|+|\lambda_-|=\lambda_+-\lambda_-=\sqrt{(A-B)^2+4AB|c|^2}.
\]
Finally, if \(|c|=1\), then \(\ket{v}\!\bra{v}=\ket{u}\!\bra{u}\), so \(X=(A+B)\ket{u}\!\bra{u}\); both cases in Eq.~\eqref{eq:trace-norm-two-pure} reduce to \(\|X\|_1=|A+B|\). This proves the lemma.
\end{proof}

\end{document}